\documentclass[a4paper,11pt]{article}

\usepackage{preamble-fv-simple}
\usepackage{tikz}
\usetikzlibrary{graphs,graphs.standard}
\usetikzlibrary{calc}
\usetikzlibrary{decorations.pathreplacing}

\author[1]{Hans L.\ Bodlaender}
\author[2]{Lars Jaffke} 
\author[2]{Jan Arne Telle}

\affil[1]{Utrecht University, The Netherlands}
\affil[ ]{\texttt{h.l.bodlaender@uu.nl}}
\affil[2]{University of Bergen, Norway} 
\affil[ ]{\texttt{\{lars.jaffke,jan.arne.telle\}@uib.no}} 

\title{Typical Sequences Revisited --- Computing Width Parameters of Graphs\thanks{This work was started when the third author was visiting Universitat Politecnica de Valencia, and part of it was done while the second author was visiting Utrecht University. The first author was partially supported by the Networks project, funded by the Netherlands Organization for Scientific Research (NWO). The second author is supported by the Bergen Research Foundation (BFS).}} 

\usepackage{macros}
\newcommand\calO{\mathcal{O}}

\begin{document}

\maketitle

\begin{abstract}
In this work, we give a structural lemma on merges of typical sequences, a notion that was 
introduced in 1991 [Lagergren and Arnborg, Bodlaender and Kloks, both ICALP 1991]
to obtain constructive linear time parameterized algorithms for treewidth and pathwidth.
The lemma addresses a runtime bottleneck in those algorithms but so far it does not lead to asymptotically faster algorithms. 
However, we apply the lemma to show that the cutwidth and the modified cutwidth of series parallel digraphs can be computed in $\cO(n^2)$ time. 
\end{abstract}

\section{Introduction}

In this paper we revisit an old key technique from what currently are the theoretically fastest parameterized algorithms for treewidth and pathwidth, namely the use of {\em typical sequences}, and give additional
structural insights for this technique. In particular, we show a structural lemma, which we call the \emph{Merge Dominator Lemma}. The technique of typical sequences brings with it a partial ordering on sequences of integers, and a notion of possible merges of two integer sequences; 
surprisingly, the Merge Dominator Lemma states that for any pair of integer sequences there exists
a \emph{single} merge that dominates all merges of these integer sequences, and this dominating merge can be found in linear time.
On its own, this lemma does not lead to asymptotically faster parameterized algorithms for treewidth and pathwidth, 
but, as we discuss below, it is a concrete step towards such algorithms.

The notion of typical sequences was introduced independently in 1991 by Lagergren and Arnborg \cite{LagergrenA91} and Bodlaender and Kloks \cite{BK96}. In both papers, it is a key element in an explicit dynamic programming algorithm that given a tree decomposition of bounded width $\ell$, decides if the pathwidth or treewidth of the input graph $G$ is at most a constant $k$. Lagergren and Arnborg build 
upon this result and show that the set of forbidden minors of graphs of treewidth (or pathwidth) at most
$k$ is computable; Bodlaender and Kloks show that the algorithm can also construct a tree or path decomposition of width at most $k$, if existing, in the same asymptotic time bounds. The latter result is
a main subroutine in Bodlaender's linear time algorithm \cite{Bodlaender96}
for treewidth-$k$. If one analyses the running time
of Bodlaender's algorithm for treewidth or pathwidth $\leq k$, then one can observe that the bottleneck is in the subroutine that calls the Bodlaender-Kloks dynamic programming subroutine, with both the subroutine and the main algorithm having time $\cO(2^{\cO(k^3)}n)$ for treewidth, and $\cO(2^{\cO(k^2)}n)$ for pathwidth.
See also the recent work by F\"{u}rer for pathwidth \cite{Furer16}, 
and the simplified versions of the algorithms of~\cite{Bodlaender96,BK96} by Althaus and Ziegler~\cite{AZ19}.  
Now, over a quarter of a century after the discovery of these results, even though much work has been done on treewidth recognition algorithms
(see e.g.~\cite{Ami10,Bod16,BP17,FHL08,Furer16,Lag96,Ree92,RS95}), 
these bounds on the function of $k$ are still the best known, i.e.\
no $\cO(2^{o(k^3)}n^{O(1)})$ algorithm for treewidth, and no $\cO(2^{o(k^2)}n^{O(1)})$ algorithm
for pathwidth is known. An interesting question, and a long-standing open problem in the field \cite[Problem 2.7.1]{IPEC2006}, is whether such algorithms can be obtained. Possible approaches
to answer such a question is to design (e.g.\ ETH or SETH based) lower bounds, find an entirely new approach
to compute treewidth or pathwidth in a parameterized setting, or improve upon the dynamic programming 
algorithms of \cite{LagergrenA91} and \cite{BK96}. 
Using our Merge Dominator Lemma we can go one step towards the latter, as follows.

The algorithms of Lagergren and Arnborg \cite{LagergrenA91} and Bodlaender and Kloks \cite{BK96} are based upon tabulating characteristics of tree or path decompositions of subgraphs of the input graph; 
a characteristic consists of an {\em intersection model}, that tells how the vertices in the current top bag
interact, and for each {\em part} of the intersection model, a typical sequence of bag sizes.\footnote{This approach was later used in several follow up results to obtain explicit constructive parameterized algorithms for
other graph width measures, like cutwidth \cite{ThilikosSB05,ThilikosSB05a}, branchwidth \cite{BodlaenderT97}, different types of search numbers like linear width \cite{BodlaenderT04}, and
directed vertex separation number \cite{BodlaenderGT98}.}
The set of characteristics for a join node is computed from the sets of characteristics of its (two) children. 
In particular, each pair of characteristics with one from each child can give rise to exponentially (in $k$) many characteristics for the join node.
This is because exponentially many typical sequences may arise as the merges of the typical sequences that are part of the characteristics.
In the light of our Merge Dominator Lemma, only \emph{one} of these merges has to be stored,
reducing the number of characteristics arising from each pair of characteristics of the children
from~$2^{\cO(k)}$ to just $1$.
Moreover, this dominating merge can be found in $\cO(k)$ time, with no large constants hidden in the `$\cO$'. 

Merging typical sequences at a join node is however not the only way the number of characteristics can increase throughout the algorithm, e.g.\ at introduce nodes, the number of characteristics increases in a different way.
Nevertheless, the number of intersection models is $\cO(k^{\cO(k)})$ for pathwidth and $\cO(k^{\cO(k^2)})$ for treewidth; perhaps, with additional techniques, the number of typical sequences per part can be better bounded --- 
in the case that a single dominating typical sequence per part suffices, this would
reduce the number of table entries per node to $\cO(k^{\cO(k)})$ for pathwidth-$k$, and to $\cO(k^{\cO(k^2)})$ for treewidth-$k$, and yield $\cO(k^{\cO(k)} n)$ and $\cO(k^{\cO(k^2)} n)$ time algorithms for the respective problems.

We give direct algorithmic consequences of the Merge Dominator Lemma in the realm of computing width parameters of directed acyclic graphs (DAGs).
Specifically, we show that the \textsc{(Weighted) Cutwidth} and \textsc{Modified Cutwidth} problems on DAGs,
which given a directed acyclic graph on $n$ vertices, ask for the topological order that minimizes the \emph{cutwidth} and \emph{modified cutwidth}, respectively,
can be solved in $\calO(n^2)$ time on \emph{series parallel digraphs}.
Note that the restriction of the solution to be a \emph{topological} order has been made as well in other works, e.g.~\cite{BFT09}.

Our algorithm for {\sc Cutwidth} of series parallel digraphs has
the same structure as the dynamic programming algorithm for undirected
{\sc Cutwidth}~\cite{BFT09}, but, in addition to obeying directions of edges, we have a step that only keeps characteristics that are not dominated by another characteristic in a table of characteristics. Now, with help of our Merge Dominator Lemma, we can show that in the case of series parallel digraphs, there is a unique dominating characteristic; the dynamic programming algorithm reverts to computing for each intermediate graph a single `optimal partial solution'.
This strategy also works in the presence of edge weights, which gives the algorithm for the corresponding \textsc{Weighted Cutwidth} problem on series parallel digraphs.
Note that the cutwidth of a directed acyclic graph is at least the maximum indegree or outdegree of a vertex; e.g., a series parallel digraph formed by the parallel composition of $n-2$ paths with three vertices has
$n$ vertices and cutwidth $n-2$.
To compute the \emph{modified} cutwidth of a series parallel digraph, we give a linear-time reduction to the 
\textsc{Weighted Cutwidth} problem on series parallel digraphs.

This paper is organized as follows. In Section~\ref{section:preliminaries}, 
we give a number of preliminary definitions, and review existing results,
including several results on typical sequences from \cite{BK96}.
In Section~\ref{sec:merge:dominate}, we state and prove the main technical result of this work, the Merge Dominator Lemma. 
Section~\ref{sec:width:measures} gives our algorithmic applications of
this lemma, and shows that the directed cutwidth and directed modified cutwidth of a series parallel
digraph can be computed in polynomial time. Some final remarks are made
in the conclusions Section~\ref{section:conclusions}.

\section{Preliminaries}
\label{section:preliminaries}
We use the following notation. For two integers $a, b \in \bN$ with $a \le b$, we let $[a..b] \defeq \{a, a+1, \ldots, b\}$ and for $a > 0$, we let $[a] \defeq [1..a]$. 
If $X$ is a set of size $n$, then a \emph{linear order} is a bijection $\linord\colon X \to [n]$. Given a subset $X' \subseteq X$ of size $n' \le n$, we define the \emph{restriction of $\linord$ to $X'$} as the bijection $\linord|_{X'} \colon X' \to [n']$ which is such that for all $x', y' \in X'$, $\linord|_{X'}(x') < \linord|_{X'}(y')$ if and only if $\linord(x') < \linord(y')$.
\paragraph{Sequences and Matrices.} We denote the elements of a sequence $s$ by $s(1), \ldots, s(n)$. We denote the \emph{length} of $s$ by $\length(s)$, i.e.\ $\length(s) \defeq n$. For two sequences $a = a(1), \ldots, a(m)$ and $b = b(1), \ldots, b(n)$, we denote their \emph{concatenation} by $a \concat b = a(1), \ldots, a(m), b(1), \ldots, b(n)$. For two sets of sequences $A$ and $B$, we let $A \allconcat B \defeq \{a \concat b \mid a \in A \wedge b \in B\}$.
For a sequence $s$ of length $n$ and a set $X \subseteq [n]$, we denote by $s[X]$ the \emph{subsequence of $s$ induced by $X$}, i.e.\ let $X = \{x_1, \ldots, x_{m}\}$ be such that for all $i \in [m-1]$, $x_i < x_{i+1}$; then, $s[X] \defeq s(x_1), \ldots, s(x_m)$.
For $x_1, x_2 \in [n]$ with $x_1 \le x_2$, we use the shorthand `$s[x_1..x_2]$' for `$s[[x_1..x_2]]$'.

Let $\Omega$ be a set. A \emph{matrix} $M \in \Omega^{m \times n}$ over $\Omega$ is said to have $m$ rows and $n$ columns. 
For sets $X \subseteq [m]$ and $Y \subseteq [n]$, we denote by $M[X, Y]$ the \emph{submatrix} of $M$ \emph{induced} by $X$ and $Y$, which consists of all the entries from $M$ whose indices are in $X \times Y$.
For $x_1, x_2 \in [m]$ with $x_1 \le x_2$ and $y_1, y_2 \in [n]$ with $y_1 \le y_2$,
we use the shorthand `$M[x_1..x_2, y_1..y_2]$' for `'$M[[x_1..x_2], [y_1..y_2]]$'. 
For a sequence $s(1), s(2), \ldots, s(\ell)$ of indices of a matrix $M$, we let 
	\begin{align}
		M[s] \defeq M[s(1)], M[s(2)], \ldots, M[s(\ell)]\label{eq:path:matrix}
	\end{align}
be the corresponding sequence of entries from $M$.

For illustrative purposes we enumerate the columns of a matrix in a bottom-up fashion throughout this paper, 
i.e.\ we consider the index $(1, 1)$ as the `bottom left corner' and the index $(m, n)$ as the `top right corner'.

\paragraph{Integer Sequences.}
Let $s$ be an integer sequence of length $n$.
We use the shorthand `$\min(s)$' for `$\min_{i \in [n]} s(i)$' and `$\max(s)$' for `$\max_{i \in [n]} s(i)$'; we use the following definitions.
We let 
\[
	\argmin(s) \defeq \{i \in [n] \mid s(i) = \min(s)\} \mbox{ and }
	\argmax(s) \defeq \{i \in [n] \mid s(i) = \max(s)\} 
\]
be the set of indices at whose positions there are the minimum and maximum element of $s$, respectively.
Whenever we write $i \in \argmin(s)$ ($j \in \argmax(s)$), then the choice of $i$ ($j$) can be arbitrary.
In some places we require a canonical choice of the position of a minimum or maximum element,
in which case we will always choose the smallest index. Formally, we let
\[
	\argmincan(s) \defeq \min \argmin(s), \mbox{ and } 
	\argmaxcan(s) \defeq \min \argmax(s).
\]

The following definition contains two notions on pairs of integer sequences 
that are necessary for the definitions of domination and merges.
\begin{definition}
	Let $r$ and $s$ be two integer sequences of the same length $n$.
	\begin{enumerate}
		\item If for all $i \in [n]$, $r(i) \le s(i)$, then we write `$r \le s$'.
		\item We write $q = r \merge s$ for the integer sequence $q(1), \ldots, q(n)$ with $q(i) = r(i) + s(i)$ for all $i \in [n]$.
	\end{enumerate}
\end{definition} 

\begin{definition}[Extensions]
	Let $s$ be a sequence of length $n$. We define the set $\extensions(s)$ of \emph{extensions} of $s$ as the set of sequences that are obtained from $s$ by repeating each of its elements an arbitrary number of times, and at least once. Formally, we let
	\[
		\extensions(s) \defeq \{s_1 \concat s_2 \concat \cdots \concat s_n \mid \forall i \in [n] \colon 
			\length(s_i) \ge 1 \wedge \forall j \in [\length(s_i)] \colon s_i(j) = s(i)\}.
	\]
\end{definition}

\begin{definition}[Domination]
	Let $r$ and $s$ be integer sequences. We say that \emph{$r$ dominates $s$}, in symbols `$r \dominates s$', if there are extensions $r^* \in \extensions(r)$ and $s^* \in \extensions(s)$ of the same length such that $r^* \le s^*$. If $r \dominates s$ and $s \dominates r$, then we say that $r$ and $s$ are \emph{equivalent}, and we write $r \equiv s$.
	
	If $r$ is an integer sequence and $S$ is a set of integer sequences, then we say that \emph{$r$ dominates $S$}, in symbols `$r \dominates S$', if for all $s \in S$, $r \dominates s$.
\end{definition}

\begin{remark}[Transitivity of `$\dominates$']
	In \cite[Lemma 3.7]{BK96}, it is shown that the relation `$\dominates$' is transitive. As this is fairly intuitive, we may use this fact without stating it explicitly throughout this text.
\end{remark}

\begin{definition}[Merges]
	Let $r$ and $s$ be two integer sequences. We define the set of all \emph{merges} of $r$ and $s$, denoted by $r \allmerges s$, as
	$
		r \allmerges s \defeq \{r^* + s^* \mid r^* \in \extensions(r), s^* \in \extensions(s), \length(r^*) = \length(s^*)\}.
	$
\end{definition}
\subsection{Typical Sequences} 
We now define typical sequences, show how to construct them in linear time, 
and restate several lemmas due to Bodlaender and Kloks~\cite{BK96} that will be used throughout this text.
\begin{definition}
	Let $s = s(1), \ldots, s(n)$ be an integer sequence of length $n$. The \emph{typical sequence of $s$}, denoted by $\typseq(s)$, is obtained from $s$ by an exhaustive application of the following two operations:
	
	\medskip\noindent
	\textit{Removal of Consecutive Repetitions.} If there is an index $i \in [n - 1]$ such that $s(i) = s(i+1)$, then we 
		change the sequence $s$ from $s(1), \ldots, s(i), s(i+1), \ldots, s(n)$ to $s(1), \ldots, s(i), s(i+2), \ldots, s(n)$.
	
	\medskip\noindent
	\textit{Typical Operation.} If there exist $i, j \in [n]$ such that $j - i \ge 2$ and for all $i \le k \le j$, $s(i) \le s(k) \le s(j)$, or for all $i \le k \le j$, $s(i) \ge s(k) \ge s(j)$, then we change the sequence $s$ from $s(1), \ldots, s(i), s(i+1), \ldots, s(j), \ldots, s(n)$ to $s(1), \ldots, s(i), s(j), \ldots, s(n)$, i.e.\ we remove all elements (strictly) between index $i$ and $j$.
\end{definition}

\begin{figure}
	\centering
	\includegraphics[height=.15\textheight]{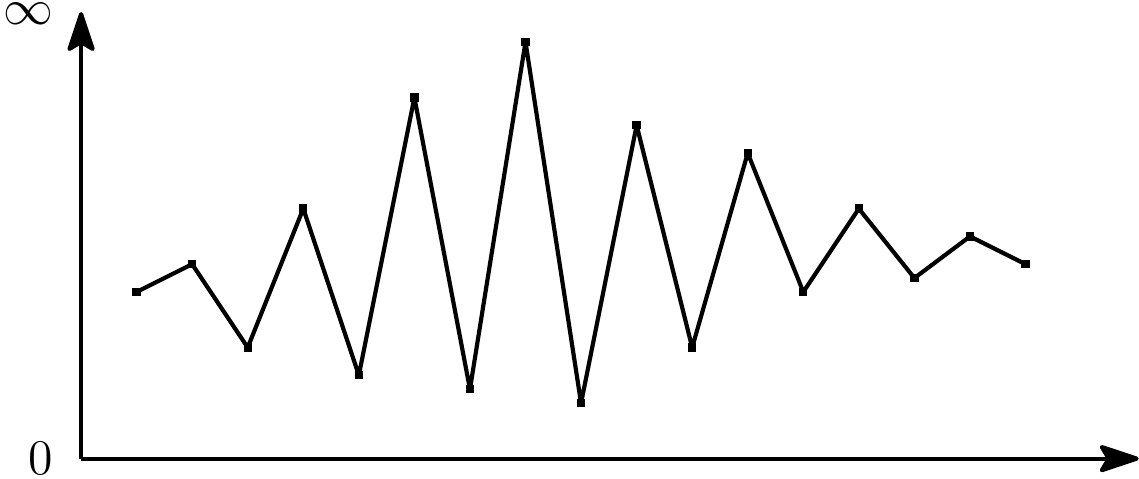}
	\caption{Illustration of the shape of a typical sequence.}
	\label{fig:typ:seq}
\end{figure}
To support intuition, we illustrate the rough shape of a typical sequence in \cref{fig:typ:seq}.
It is not difficult to see that the typical sequence can be computed in quadratic time,
by an exhaustive application of the definition.
Here we discuss how to do it in linear time.
We may view a typical sequence $\typseq(s)$ of an integer sequence $s$ as a subsequence of $s$. 
While $\typseq(s)$ is unique, the choice of indices that induce $\typseq(s)$ may not be unique.
We show that we can find a set of indices that induce the typical sequence in linear time,
with help of the following structural proposition.
\begin{proposition}\label{prop:comp:typseq:aux}
	Let $s$ be an integer sequence and let $i^\star \in \{\argmincan(s), \argmaxcan(s)\}$.
	Let $1 \eqdef j_0 < j_1 < j_2 < \ldots < j_t < j_{t+1} \defeq i^\star$ be pairwise distinct integers,
	such that $s(j_0), \ldots, s(j_{t+1})$ are pairwise distinct.
	If for all $h \in [0..t]$, 
	\begin{itemize}
		\item if $s(j_h) > s(j_{h+1})$ then $j_h = \argmaxcan(s[1..j_{h+1}])$ and $j_{h+1} = \argmincan(s[1..j_{h+1}])$, and
		\item if $s(j_h) < s(j_{h+1})$ then $j_h = \argmincan(s[1..j_{h+1}])$ and $j_{h+1} = \argmaxcan(s[1..j_{h+1}])$,
	\end{itemize}
	then the typical sequence of $s$ restricted to $[i^\star]$ is equal to $s(j_0), s(j_1), \ldots, s(j_t), s(j_{t+1})$.
\end{proposition}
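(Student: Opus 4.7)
The plan is to verify two facts about the candidate sequence $\tau \defeq s(j_0), s(j_1), \ldots, s(j_{t+1})$: (a) $\tau$ can be reached from $s[1..i^\star]$ by a sequence of typical-sequence operations, and (b) $\tau$ is itself irreducible under these operations. Combined with the (Church--Rosser-style) uniqueness of $\typseq$ established in \cite{BK96}, these two facts pin down $\tau = \typseq(s[1..i^\star])$.

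For (a), fix any $h \in [0..t]$ and consider the block $s(j_h), s(j_h+1), \ldots, s(j_{h+1})$. If $s(j_h) > s(j_{h+1})$, the hypothesis gives $j_h = \argmaxcan(s[1..j_{h+1}])$ and $j_{h+1} = \argmincan(s[1..j_{h+1}])$, so in particular $s(j_h) \ge s(k) \ge s(j_{h+1})$ for every $k \in [j_h..j_{h+1}]$; the case $s(j_h) < s(j_{h+1})$ is symmetric. Hence each block is monotone between its endpoints, so (provided $j_{h+1} - j_h \ge 2$) a single application of the typical operation with $i = j_h$, $j = j_{h+1}$ deletes all intermediate entries. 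Performing this reduction for $h = 0, 1, \ldots, t$ in turn leaves exactly $\tau$, since every operation preserves the two endpoints $s(j_h), s(j_{h+1})$ (and the remaining $j_{h'}$ with $h' > h$ stay present, albeit at shifted positions).

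For (b), the entries of $\tau$ are pairwise distinct by hypothesis, so the removal-of-consecutive-repetitions rule never fires. It remains to rule out the typical operation, which I would do by showing that $\tau$ is strictly zigzagging: $s(j_{h-1}) < s(j_h)$ forces $s(j_h) > s(j_{h+1})$, and vice versa. Suppose for contradiction that $s(j_{h-1}) > s(j_h) > s(j_{h+1})$ for some internal $h$. The decreasing-step hypothesis at index $h$ gives $j_h = \argmaxcan(s[1..j_{h+1}])$, and since $j_{h-1} \in [1..j_{h+1}]$ this forces $s(j_h) \ge s(j_{h-1})$, contradicting $s(j_{h-1}) > s(j_h)$. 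The increasing case is symmetric. A zigzagging sequence contains no monotone triple of consecutive entries, so the typical operation cannot be applied anywhere in $\tau$.

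The step I expect to be most delicate is arranging the bookkeeping so that the same hypothesis simultaneously yields the global zigzag needed in (b) and the local block-monotonicity needed in (a). The point is that the hypothesis is phrased in terms of extrema of whole prefixes $s[1..j_{h+1}]$, not of the local blocks $s[j_h..j_{h+1}]$; this global formulation is precisely what is needed both to compare $s(j_{h-1})$ with $s(j_h)$ across block boundaries (for the zigzag) and to bound $s(k)$ between $s(j_h)$ and $s(j_{h+1})$ inside each block (for the reduction). I do not anticipate any real difficulty beyond a careful case split on the two alternatives ($i^\star = \argmincan(s)$ vs.\ $\argmaxcan(s)$) and on the direction of the first step from $j_0$.
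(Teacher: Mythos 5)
Your overall strategy---show that $\tau \defeq s(j_0),\ldots,s(j_{t+1})$ is reachable from $s[1..i^\star]$ by typical-sequence operations and is itself irreducible, then invoke uniqueness of $\typseq$---is a clean alternative to the paper's reverse induction over the intervals $[j_h..i^\star]$, but step (b) contains a genuine error. You argue that $\tau$ is strictly zigzagging and then conclude ``a zigzagging sequence contains no monotone triple of consecutive entries, so the typical operation cannot be applied anywhere in $\tau$.'' This implication is false: the typical operation does not require a monotone \emph{triple}; it only requires some $i,j$ with $j-i\ge 2$ such that every entry of the block $[i..j]$ lies between $s(i)$ and $s(j)$. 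For instance, $1,3,2,5$ is strictly zigzagging, has no monotone triple, yet admits the typical operation with $i=1$, $j=4$ (since $1\le 3,2,5\le 5$), which reduces it to $1,5$. So ``no monotone triple'' does not give irreducibility. What actually rules out the typical operation in $\tau$ is the stronger \emph{nested-extrema} property that the hypotheses deliver and that your zigzag derivation already uses as an intermediate step but then discards: for each $h$, $\{s(j_h),s(j_{h+1})\}=\{\min,\max\}$ of $s[1..j_{h+1}]$ and hence of $\{s(j_0),\ldots,s(j_{h+1})\}$, so each $s(j_{h+1})$ is a strict new extremum of the prefix. Given that, if $\tau(a)<\tau(a+1)$ then $\tau(a+2)$ is a new minimum so $\tau(a+2)<\tau(a)$, hence $\tau(a)\le\tau(a+2)$ fails and no typical operation with left endpoint $a$ can apply (and symmetrically for $\tau(a)>\tau(a+1)$). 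You need this argument rather than the zigzag one.

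A secondary point: your proof pins down $\typseq(s[1..i^\star])$, whereas the proposition is phrased about the restriction of $\typseq(s)$---as a subsequence of the \emph{full} $s$---to $[1..i^\star]$. These coincide because $i^\star$ is the first occurrence of a global extremum of $s$, but that identification deserves a sentence; the paper sidesteps it by reasoning directly, via reverse induction on $h$, about which indices of $s$ can survive the reduction to $\typseq(s)$ within each interval $[j_h..j_{h+1}]$. Your reachability-plus-irreducibility decomposition is arguably more transparent than the paper's interval-by-interval case analysis, and with the two fixes above it would constitute a valid alternative proof.
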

\begin{proof}
	First, we observe that by the choice made in the definition of $\argmincan$ and $\argmaxcan$,
	\begin{align}
		\mbox{for each $h \in [0..(t+1)]$ there is no $i < j_h$ such that } s(i) = s(j_h).\label{prop:comp:typseq:aux:2}
	\end{align}
	
	We prove the following statement. Under the stated conditions, for a given $h \in [0..t+1]$, 
	the typical sequence of $s$ restricted to $[j_h..i^\star]$ is equal to $s(j_h)$, $s(j_{h+1})$, $\ldots$, $s(j_{t+1})$. 
	The proposition then follows from the case $h = 0$.
	The proof is by induction on $d \defeq (t+1) - h$. 
	For $d = 0$, it trivially holds since the minimum and the maximum element are always part of the typical sequence, 
	and since $[j_{t+1}..i^\star] = \{i^\star\}$.
	
	Now suppose $d > 0$, and for the induction hypothesis, that the claim holds for $d-1$.
	Suppose that $s(j_h) > s(j_{h+1})$, meaning that
	$j_h = \argmaxcan(s[1..j_{h+1}])$, and
	$j_{h+1} = \argmincan(s[1..j_{h+1}])$, the other case is symmetric.
	By the induction hypothesis, the typical sequence of $s$ 
	restricted to $[j_{h+1}..i^\star]$ is equal to $s(j_{h+1})$, $\ldots$, $s(j_{t+1})$,
	in particular it implies that $s(j_{h+1})$ is an element of the typical sequence.
	To prove the induction step, we have to show that the typical sequence 
	restricted to $[j_h..j_{h+1}]$ is equal to $s(j_h)$, $s(j_{h+1})$.
	We first argue that if there is an element of the typical sequence in $[j_h..(j_{h+1}-1)]$, then it must be equal to $s(j_h)$.	
	By \cref{prop:comp:typseq:aux:2}, we have that there is no $i < j_{h+1}$ such that $s(i) = s(j_{h+1})$, 
	hence $[j_h..(j_{h+1}-1)]$ cannot contain any element of the typical sequence that is equal to $s(j_{h+1})$.
	Next, since the typical operation removes all elements $i \in [(j_h+1)..(j_{h+1}-1)]$ with $s(j_h) > s(i) > s(j_{h+1})$, 
	and since $j_h = \argmaxcan(s[1..j_{h+1}])$,
	the only elements from $[j_h..(j_{h+1}-1)]$ that the typical sequence may contain have value $s(j_h)$.
	
	It remains to argue that $s(j_h)$ is indeed an element of the typical sequence. 
	Suppose not, then there are indices $i, i'$ with $i < j_h < i'$, such that either 
	$s(i) \le s(j_h) \le s(i')$, or
	$s(i) \ge s(j_h) \ge s(i')$, 
	and we may assume that at least one of the inequalities is strict in each case.
	For the latter case, since $j_h = \argmaxcan(s[1..j_{h+1}])$, 
	we would have that $s(i) = s(j_h)$, which is a contradiction to \cref{prop:comp:typseq:aux:2}.
	Hence, we may assume that $s(i) \le s(j_h) \le s(i')$.
	There are two cases to consider: $i' \in [(j_h+1)..j_{h+1}]$, and $i' > j_{h+1}$.
	If $i' \in [(j_h+1)..j_{h+1}]$, then $s(i') = s(j_h)$, as $s(j_h) = \argmax(s[1..j_{h+1}])$.
	We can conclude that in this case, the typical sequence must contain an element equal to 
	$s(i')$, and hence equal to $s(j_h)$.	
	If $i' > j_{h+1}$, then the typical operation corresponding to $i$ and $i'$ also removes $s(j_{h+1})$, 
	a contradiction with the induction hypothesis which asserts that 
	$s(j_{h+1})$ is part of the typical sequence induced by $[j_{h+1}..i^\star]$.
	We can conclude that $s(j_h)$ is part of the typical sequence, finishing the proof.
\end{proof}

From the previous proposition, we have the following consequence about the structure of 
typical sequences ending in the minimum element, which will be useful in the proof of \cref{lem:chop}.
\begin{corollary}\label{cor:typseq:min:shape}
	Let $t$ be a typical sequence of length $n$ such that $n \in \argmin(t)$. 
	Then, for each $k \in \left[\lfloor \frac{n}{2} \rfloor\right]$,
	$n - 2k + 1 \in \argmax(t[1..(n - 2k + 1)])$ and
	$n - 2k \in \argmin(t[1..(n - 2k)])$.
\end{corollary}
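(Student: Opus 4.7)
The plan is to proceed by induction on $k$, with the heavy lifting done by the following base lemma, proved using the typical operation directly.

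\medskip
\noindent\textbf{Base lemma.} \emph{If $u$ is a typical sequence of length $m \ge 2$ with $m \in \argmin(u)$, then $m - 1 \in \argmax(u[1..m-1])$; symmetrically, if $m \in \argmax(u)$, then $m - 1 \in \argmin(u[1..m-1])$.}

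\medskip
\noindent To establish the first version I would argue by contradiction. Assuming $m - 1 \notin \argmax(u[1..m-1])$, pick $p \in \argmax(u[1..m-1])$, so that $p \le m-2$ and $u(p) > u(m-1)$. I claim the typical operation applies to the pair $(p, m)$: indeed $m - p \ge 2$, and for every $k \in [p..m]$ one has $u(m) \le u(k) \le u(p)$, because $u(m) = \min(u)$ is a global lower bound, while $u(p) = \max(u[1..m-1])$ dominates every entry of $u[1..m-1]$ and also dominates $u(m) = \min(u) \le u(p)$. Hence $u(m-1)$ could be deleted, contradicting the hypothesis that $u$ is typical. The symmetric statement follows by reversing all inequalities.

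\medskip
\noindent I will also use the trivial observation that a prefix of a typical sequence is itself typical, since any consecutive repetition or typical operation inside a prefix is available in the whole sequence.

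\medskip
\noindent Given these two ingredients, the induction is routine. For the base case $k = 1$, the base lemma applied to $t$ gives $n - 1 \in \argmax(t[1..n-1])$; when $n \ge 3$, the symmetric base lemma applied to the prefix $t[1..n-1]$ (typical, with last position now in $\argmax$) yields $n - 2 \in \argmin(t[1..n-2])$. For the step from $k$ to $k+1$, the induction hypothesis includes $n - 2k \in \argmin(t[1..n-2k])$, so the prefix $t[1..n-2k]$ is typical with its last position in $\argmin$; the base lemma applied to it gives $n - 2k - 1 = n - 2(k+1) + 1 \in \argmax(t[1..n-2k-1])$, and then the symmetric base lemma applied to the prefix $t[1..n-2k-1]$ gives $n - 2k - 2 = n - 2(k+1) \in \argmin(t[1..n-2k-2])$.

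\medskip
\noindent The one subtle point — and the only place where care is required — is the sandwich check inside the base lemma. It is crucial to choose $p$ to be a \emph{global} argmax position of $u[1..m-1]$, rather than merely any witness with $u(p) > u(m-1)$; this is exactly what yields the uniform upper bound $u(k) \le u(p)$ over the entire window $[p..m]$ in one stroke, and it is the only reason the typical operation can be invoked on the long window instead of on the triple $(p, m-1, m)$ alone.
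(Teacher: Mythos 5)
Your proof is correct, and it takes a different and in fact more self-contained route than the paper. The paper presents \cref{cor:typseq:min:shape} as a consequence of \cref{prop:comp:typseq:aux}, which is a general statement characterizing, for an \emph{arbitrary} integer sequence $s$, the indices between $1$ and $\argmincan(s)$ (or $\argmaxcan(s)$) that induce $\typseq(s)$ via alternating $\argmincan$/$\argmaxcan$ conditions on \emph{prefixes}; the derivation of the corollary from that proposition is not spelled out, and making it precise requires working from the back of the sequence and dealing with the proposition's hypothesis that the selected values be pairwise distinct. Your argument instead proves the corollary directly: the key step is your base lemma, which invokes the typical operation on the pair $(p,m)$ where $p$ is a global argmax of the length-$(m-1)$ prefix and $m$ is the final minimum, forcing $m-1$ to be deleted if it were not itself a maximum --- a contradiction with typicality. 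Together with the (correct) observation that prefixes of typical sequences are typical, a two-step induction finishes the job. Your remark about why $p$ must be a \emph{global} argmax, not merely a local witness, correctly identifies the only delicate point. The one thing you gloss over is the degenerate endpoints: for $k=\lfloor n/2\rfloor$ the indices $n-2k$ or $n-2k+1$ can be $0$ or $1$, where the corresponding clause is vacuous; you flag this once for the base case ($n\ge 3$) but the same caveat is needed at the final inductive step. This is a bookkeeping remark, not a gap. Your route buys a shorter, elementary proof that does not rely on \cref{prop:comp:typseq:aux} at all, at the cost of not being reusable for the linear-time algorithm of \cref{lem:compute:typseq}, which is what \cref{prop:comp:typseq:aux} is really there for.
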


Equipped with \cref{prop:comp:typseq:aux}, we can now proceed and give the linear-time algorithm
that computes a typical sequence of an integer sequence.
\begin{lemma}\label{lem:compute:typseq}
	Let $s$ be an integer sequence of length $n$. Then, one can compute $\typseq(s)$, the typical sequence of $s$, in time $\cO(n)$.
\end{lemma}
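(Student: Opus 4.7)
The plan is to use Proposition~\ref{prop:comp:typseq:aux} as the core subroutine, coupled with linear-time preprocessing of canonical running argmin and argmax arrays, to identify the turning points whose values form $\typseq(s)$.

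First, in a single left-to-right scan I would precompute the arrays $M[i] \defeq \argmaxcan(s[1..i])$ and $m[i] \defeq \argmincan(s[1..i])$ for every $i \in [n]$; each entry is derived from the previous one in $\cO(1)$ time, so this preprocessing runs in $\cO(n)$. I would analogously precompute these arrays for the reverse of $s$.

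Given these arrays, Proposition~\ref{prop:comp:typseq:aux} with $i^\star \defeq \argmincan(s) = m[n]$ directly yields $\typseq(s[1..i^\star])$ by a walk: starting from $j_{t+1} \defeq i^\star$, the global extremality of $s(j_{t+1})$ forces $s(j_t) > s(j_{t+1})$, so the proposition prescribes $j_t = M[j_{t+1}]$; then $j_{t-1} = m[j_t]$; then $M[j_{t-1}]$; and so on, alternating, until we reach $j_0 = 1$. Each step is $\cO(1)$ using the precomputed arrays, and the indices strictly decrease by at least one per step, so the walk terminates and produces the sequence of values $s(j_0), s(j_1), \ldots, s(j_{t+1})$, i.e.\ $\typseq(s[1..i^\star])$, in $\cO(i^\star)$ time. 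Running the mirror-image walk on the reversed sequence extracts the typical sequence of the right-side analog $s[q..n]$, where $q$ is the last occurrence of the minimum, in $\cO(n - q)$ time.

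The final step is to merge the two partial typical sequences into $\typseq(s)$. The main obstacle is ensuring that no additional typical operation is applicable after concatenation: depending on whether the global minimum of $s$ is attained uniquely or at multiple positions, the middle segment $s[i^\star..q]$ (between the first and last occurrence of the minimum) may still need to contribute a single peak to $\typseq(s)$, namely its internal maximum $C \defeq \max(s[i^\star..q])$, which survives precisely when it exceeds the maxima of both outer halves. Determining this with one additional sweep of the middle, and possibly applying one further typical operation across each junction, reduces the concatenation to $\typseq(s)$. Each of these post-processing steps takes $\cO(n)$, so the overall algorithm runs in $\cO(n)$ time. Carefully justifying correctness of this merge is the only non-routine part of the proof; it follows from a case analysis on how typical operations in $s$ can straddle $i^\star$ and $q$, combined with the characterization that Proposition~\ref{prop:comp:typseq:aux} already provides for each half.
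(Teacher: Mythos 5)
Your approach shares the same backbone as the paper's proof — both rely on \cref{prop:comp:typseq:aux} to identify the turning points inducing $\typseq(s)$ in linear time — but the decomposition and middle-handling differ in a way that makes your version slightly less clean and leaves a real gap. The paper splits $s$ at $i^\star \defeq \min(\argmin(s)\cup\argmax(s))$ and $k^\star \defeq \max(\argmin(s)\cup\argmax(s))$, i.e.\ the first and last positions of \emph{either} extremal value, and then handles the segment $[i^\star..k^\star]$ via a self-contained case analysis (\cref{obs:comp:typseq:min:max}): the restricted typical sequence there is $s(i^\star),s(k^\star)$, or $\min,\max,\min$, or $\max,\min,\max$, depending on whether the two endpoints are of mixed or like type. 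You instead split at the first and last occurrence of the \emph{minimum} only, which forces you to confront the question ``does the middle peak $C = \max(s[i^\star..q])$ appear in $\typseq(s)$?'' This is answerable — your criterion ``$C$ survives iff it strictly exceeds both outer maxima'' is in fact equivalent to the observation's Case 2 — but you only assert it and defer to ``a case analysis,'' which is precisely the nontrivial content. As stated, the merge step is a gap: for instance, you need to show that a typical operation can straddle $i^\star$ (removing that particular copy of the min while retaining another) and that after such straddling operations are exhausted the middle contributes at most one peak, which is exactly what the paper's observation encodes. Two other small points: your backward walk via precomputed prefix $\argmin$/$\argmax$ arrays is a valid and equivalent alternative to the paper's forward marking scan (\cref{alg:compute:typseq}); however, you omit the initial pass that removes consecutive repetitions, which is needed both for the ``pairwise distinct values'' hypothesis of \cref{prop:comp:typseq:aux} and to guarantee strict decrease of the indices in your walk. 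In summary: the core idea and complexity claim are sound, the route through the proposition is the same, but the merge characterization — the step you yourself flag as ``the only non-routine part'' — is left unproved where the paper supplies an explicit observation.
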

\begin{proof}
	First, we check for each $i \in [n-1]$ whether $s(i) = s(i+1)$, and 
	if we find such an index $i$, we remove $s(i)$. 
	We assume from now on that after these modifications, 
	$s$ has at least two elements, otherwise it is trivial.
	As observed above, the typical sequence of $s$ contains $\min(s)$ and $\max(s)$.
	A closer look reveals the following observation.
	\begin{nestedobservation}\label{obs:comp:typseq:min:max}
		Let $i^\star \defeq \min \argmin(s) \cup \argmax(s)$ and $k^\star \defeq \max \argmin(s) \cup \argmax(s)$.
		\begin{enumerate}
			\item If $i^\star \in \argmin(s)$ and $k^\star \in \argmax(s)$ 
				or $i^\star \in \argmax(s)$ and $k^\star \in \argmin(s)$,
				then $\typseq(s)$ restricted to $[i^\star..k^\star]$ is equal to $s(i^\star), s(k^\star)$.
			\item If $\{i^\star, k^\star\} \subseteq \argmin(s)$, 
				then $\typseq(s)$ restricted to $[i^\star..k^\star]$ is equal to $s(i^\star), \max(s), s(k^\star)$.
			\item If $\{i^\star, k^\star\} \subseteq \argmax(s)$, 
				then $\typseq(s)$ restricted to $[i^\star..k^\star]$ is equal to $s(i^\star), \min(s), s(k^\star)$.
		\end{enumerate}
	\end{nestedobservation}
	
	Let $i^\star \defeq \min \argmin(s) \cup \argmax(s)$ and $k^\star \defeq \max \argmin(s) \cup \argmax(s)$.
	Using \cref{obs:comp:typseq:min:max}, it remains to determine the indices that induce the typical sequence on
	$s[1..i^\star]$ and on $s[k^\star..n]$.
	To find the indices that induce the typical sequence on $s[1..i^\star]$,
	we will describe a marking procedure that marks a set of indices satisfying 
	the preconditions of \cref{prop:comp:typseq:aux}.
	Next, we observe that $n - k^\star$ is the \emph{smallest} index of any occurrence of 
	$\min(s)$ or $\max(s)$ in the \emph{reverse} sequence of $s$, 
	therefore a symmetric procedure, again using \cref{prop:comp:typseq:aux}, 
	yields the indices that induce $\typseq(s)$ on $s[k^\star..n]$.

	\begin{algorithm}[h]
		\DontPrintSemicolon
		$j_{\min} \gets \argmincan(s[1..2])$, $j_{\max} \gets \argmaxcan(s[1..2])$, $M \gets \{1\}$\;
		\For{$j = 3, \ldots, i^\star$}{
		\If{$s(j) < s(j_{\min})$\label{alg:comp:typseq:cond:1}}{
				$j_{\min} \gets j$\;
				$M \gets M \cup \{j_{\max}\}$ \tcp{mark the current value of $j_{\max}$}
			}
			\If{$s(j) > s(j_{\max})$\label{alg:comp:typseq:cond:2}}{
				${j_{\max}} \gets j$\;
				$M \gets M \cup \{j_{\min}\}$ \tcp{mark the current value of $j_{\min}$}
			}
		}
		$M \gets M \cup \{j_{\min}, j_{\max}\}$\;
		\caption{The algorithm of \cref{lem:compute:typseq} that computes the set $M$ of indices that induce the 
		typical sequence of $s$ between the first element and the first occurrence of the minimum and maximum of $s$.}
		\label{alg:compute:typseq}
	\end{algorithm}	
	
	We execute \cref{alg:compute:typseq}, which processes 
	the integer sequence $s[1..i^\star]$ from
	the first to the last element,
	storing two counters $j_{\min}$ and $j_{\max}$
	that store the leftmost position of the smallest and of the greatest element
	seen so far, respectively.
	Whenever a new minimum is encountered, we mark the current value of the index $j_{\max}$, 
	as this implies that $s(j_{\max})$ has to be an element of the typical sequence.
	Similarly, when encountering a new maximum, we mark $j_{\min}$.
	These marked indices are stored in a set $M$, which at the end of the algorithm contains
	the indices that induce $\typseq(s)$ on $[1..i^\star]$.
	This, i.e.\ the correctness of the procedure, will now be argued via \cref{prop:comp:typseq:aux}.
	\begin{nestedclaim}\label{claim:comp:typseq:cor}
		The set $M$ of indices marked by the above procedure induce $\typseq(s)$ on $[1..i^\star]$. 
	\end{nestedclaim}
	\begin{claimproof}
		Let $M = \{j_0, j_1, \ldots, j_{t+1}\}$ be such that for all $h \in [0..t]$, $j_h < j_{h+1}$.
		We prove that $j_0, \ldots, j_{t+1}$ meet the preconditions of \cref{prop:comp:typseq:aux}.
		First, we observe that the above algorithm marks both the index $1$ and index $i^\star$, 
		in particular that $j_0 = 1$ and $j_{t+1} = i^\star$.		
			
		We verify that the indices $j_0, \ldots, j_{t+1}$ satisfy the property that for each $[0..(t+1)]$,
		the index $j_h$ is the leftmost (i.e.\ smallest) index whose value is equal to $s(j_h)$:
		whenever an index is added to the marked set, it is because in some iteration, 
		the element at its position was either strictly greater than the greatest previously seen element, 
		or strictly smaller than the smallest previously seen element.
		(This also ensures that $s(j_0), \ldots, s(j_{t+1})$ are pairwise distinct.)
		
		We additionally observe that if we have two indices $\ell_1$ and $\ell_2$ such that 
		$\ell_2$ is the index that the algorithm marked right after it marked $\ell_1$,
		then either $\ell_1$ was $j_{\min}$ and $\ell_2$ was $j_{\max}$ or vice versa: 
		when updating $j_{\min}$, we mark $j_{\max}$, and when updating $j_{\max}$, we mark $j_{\min}$.
		This lets us conclude that when we have two indices $j_h, j_{h+1}$ such that $s(j_h) < s(j_{h+1})$, then 
		$j_h$ was equal to $j_{\min}$ when it was marked, and $j_{h+1}$ was $j_{\max}$ when it was marked.
		
		We are ready to prove that $j_0, \ldots, j_{t+1}$ satisfy the precondition of \cref{prop:comp:typseq:aux}.
		Suppose for a contradiction that for some $h \in [0..t+1]$, $j_h$ violates this property.
		Assume that $s(j_h) < s(j_{h+1})$ and note that the other case is symmetric.
		The previous paragraph lets us conclude that $j_h$ was equal to $j_{\min}$ when it was marked,
		and that $j_{h+1}$ was $j_{\max}$ when it was marked.
		
		We may assume that either $j_h \neq \argmincan(s[1..j_{h+1}])$ or that $j_{h+1} \neq \argmaxcan(s[1..j_{h+1}])$.
		Suppose the latter holds. 
		This immediately implies that there is some $j^* \in [j_{h+1}-1]$ such that $s(j^*) > j_{h+1}$, 
		which implies that $j_{\max}$ would never have been set to $j_{h+1}$
 		and hence $j_{h+1}$ would have never been marked.
		Suppose the former holds, i.e.\ $j_h \neq \argmincan(s[1..j_h])$, 
		for an illustration of the following argument see \cref{fig:compute:typseq:cor}.
		\begin{figure}
			\centering
			\includegraphics[height=.225\textheight]{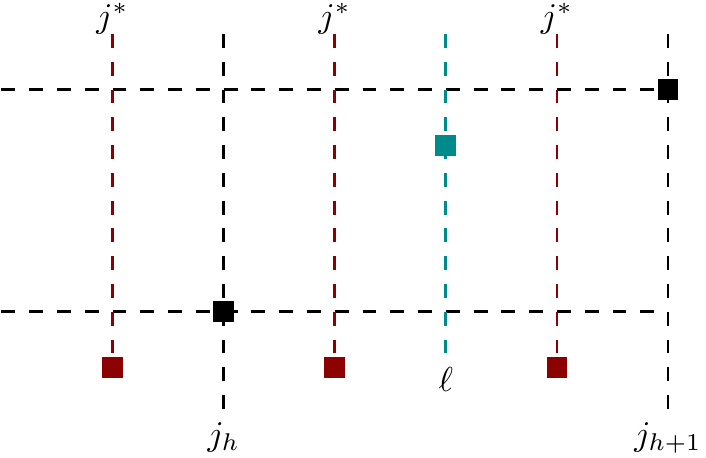}
			\caption{Illustration of the final argument in the proof of \cref{claim:comp:typseq:cor}. 
				We assume that $s(j_h) < s(j_{h+1})$, 
				and mark the possible positions for $j^* = \argmincan(s[1..j_{h+1}])$ with $j^* \neq j_h$.}
			\label{fig:compute:typseq:cor}
		\end{figure}	
		Let $j^* \defeq \argmincan(s[1..j_{h+1}])$.
		If $j^* < j_h$, then at iteration $j_h$, $s(j_{\min}) < s(j_h)$, so $j_{\min}$ would never have been set to $j_h$,
		and hence, $j_h$ would never have been marked. We may assume that $j^* > j_h$.
		Since $j_h$ was marked, there is some $\ell > j_h$ that triggered $j_h$ being marked. 
		This also means that at that iteration $s(\ell)$ was greater than the previously observed maximum,
		so we may assume that $s(\ell) > s(j_h)$.		
		We also may assume that $\ell \le j_{h+1}$. 
		If $j^* \in [(j_h+1)..(\ell-1)]$, then the algorithm would have updated $j_{\min}$ to $j^*$ in that iteration, 
		before marking $j_h$,
		and for the case $j^* \in [(\ell+1)..(j_{h+1}-1)]$ we observe that $\ell \neq j_{h+1}$, and that
		the algorithm would mark $\ell$ as the next index instead of $j_{h+1}$.
	\end{claimproof}
	
	This establishes the correctness of the algorithm. For its runtime, we observe that each iteration takes $\cO(1)$ time, and that there are $\cO(n)$ iterations.
\end{proof}

We summarize several lemmas from~\cite{BK96} regarding 
integer sequences and typical sequences that we will use in this work.
\begin{lemma}[Bodlaender and Kloks \cite{BK96}]\label{lem:BK}
	Let $r$ and $s$ be two integer sequences.
	\begin{enumerate}[label={(\roman*)}]
		\item\enumheader{(Cor.\ 3.11 in~\cite{BK96}).}\label{cor:BK:3:11} We have that $r \dominates s$ if and only if $\typseq(r) \dominates \typseq(s)$.
		\item\enumheader{(Lem.\ 3.13 in~\cite{BK96}).}\label{lem:BK:3:13} Suppose $r$ and $s$ are of the same length and let $y = r \merge s$. Let $r_0 \dominates r$ and $s_0 \dominates s$. Then there is an integer sequence $y_0 \in r_0 \allmerges s_0$ such that $y_0 \dominates y$.
		\item\enumheader{(Lem.\ 3.14 in~\cite{BK96}).}\label{lem:BK:3:14} Let $q \in r \allmerges s$. Then, there is an integer sequence $q' \in \typseq(r) \allmerges \typseq(s)$ such that $q' \dominates q$.
		\item\enumheader{(Lem.\ 3.15 in~\cite{BK96}).}\label{lem:BK:3:15} Let $q \in r \allmerges s$. Then, there is an integer sequence $q' \in r \allmerges s$ with $\typseq(q') = \typseq(q)$ and $\length(q') \le \length(r) + \length(s) -1$.
		\item\enumheader{(Lem.\ 3.19 in~\cite{BK96}).}\label{lem:BK:3:19} Let $r'$ and $s'$ be two more integer sequences. If $r' \dominates r$ and $s' \dominates s$, then $r' \concat s' \dominates r \concat s$.
	\end{enumerate}
\end{lemma}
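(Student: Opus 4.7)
The common engine behind all five parts is the fact that $s \equiv \typseq(s)$ for every integer sequence $s$; granting this, transitivity of $\dominates$ delivers \ref{cor:BK:3:11} immediately. The plan to prove $s \equiv \typseq(s)$ is to check that each of the two rewrite rules preserves $\equiv$. For removal of a consecutive repetition $s(i) = s(i+1)$ producing $s'$, the sequence $s$ itself belongs to both $\extensions(s)$ and $\extensions(s')$ (the $i$-th entry of $s'$ is simply repeated twice), so $s^* = s'^* = s$ witnesses both dominations at once. For the typical operation applied to a monotone block $s(i) \le s(k) \le s(j)$, letting $s'$ denote the sequence with $s(i+1), \ldots, s(j-1)$ deleted, the extension of $s'$ that places $j - i$ copies of $s(i)$ in the deleted slot is pointwise $\le s$ (giving $s' \dominates s$), while the extension using $j - i$ copies of $s(j)$ is pointwise $\ge s$ (giving $s \dominates s'$); the monotone-decreasing case is symmetric.

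For \ref{lem:BK:3:13}, the crux is a common block refinement. Fix witness extensions $r_0^\bullet \le r^\bullet$ and $s_0^\bullet \le s^\bullet$, and let $a_i, b_i$ be the block-size vectors of $r^\bullet, s^\bullet$ over the shared index set $[n]$ of positions of $r$ and $s$. Set $c_i \defeq \max(a_i, b_i)$ and enlarge block $i$ of each side to size $c_i$ by duplicating entries in place; apply the \emph{same positional} duplications to $r_0^\bullet$ and $s_0^\bullet$. Duplicating a coordinate of $r_0^\bullet$ only lengthens the surrounding $r_0$-block, so the refinement $r_0^{\bullet\bullet}$ stays in $\extensions(r_0)$, and the pointwise inequalities are preserved. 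The refined $r^{\bullet\bullet}, s^{\bullet\bullet}$ share the block-size vector $(c_i)$, so $r^{\bullet\bullet} + s^{\bullet\bullet}$ is an extension of $y = r \merge s$, and $y_0 \defeq r_0^{\bullet\bullet} + s_0^{\bullet\bullet} \in r_0 \allmerges s_0$ satisfies $y_0 \le r^{\bullet\bullet} + s^{\bullet\bullet}$, yielding $y_0 \dominates y$. Part \ref{lem:BK:3:14} drops out: writing $q = r^* + s^*$ with $r^* \in \extensions(r), s^* \in \extensions(s)$ of equal length, and noting that $\typseq(r) \dominates r^*$ by \ref{cor:BK:3:11} together with $r \equiv r^*$, one applies \ref{lem:BK:3:13} to $r^*, s^*$ with dominators $\typseq(r), \typseq(s)$.

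For \ref{lem:BK:3:15}, I would view $q = r^* + s^*$ as a non-decreasing lattice path in the $\length(r) \times \length(s)$ grid, with the $\ell$-th position labelled $(\alpha(\ell), \beta(\ell))$ where $r^*(\ell) = r(\alpha(\ell))$ and $s^*(\ell) = s(\beta(\ell))$. Whenever two consecutive positions share a label, the corresponding entries of $q$ agree, and deleting one preserves membership in $r \allmerges s$ while not altering $\typseq(q)$ (it is exactly a removal of a consecutive repetition, and both $r^*, s^*$ retain at least one copy in each block). After exhaustive deletion each step strictly advances $\alpha$ or $\beta$, so the surviving path from $(1,1)$ to $(\length(r), \length(s))$ has at most $\length(r) + \length(s) - 1$ positions. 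Part \ref{lem:BK:3:19} is then immediate: concatenate the witness extensions of $r' \dominates r$ and $s' \dominates s$, since pointwise $\le$ survives concatenation and a concatenation of extensions is an extension of the concatenation.

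The main obstacle in this plan is the bookkeeping in the common refinement for \ref{lem:BK:3:13}: one has to confirm that duplicating a coordinate on the $r^\bullet$-side and mirroring the duplication on the $r_0^\bullet$-side genuinely produces a valid extension of $r_0$, and similarly for $s_0$. This is automatic once one observes that the duplicated position sits inside some $r_0$-block of $r_0^\bullet$, so the insertion keeps that block contiguous; with that in hand, all parts fall into line.
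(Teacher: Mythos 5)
The paper does not prove this lemma; it simply cites parts (i)--(v) from Bodlaender and Kloks~\cite{BK96}, so there is no in-paper proof to compare against. Your write-up is a correct self-contained reconstruction of those cited results, and it follows what is essentially the standard line of argument. All five parts check out: the two rewrite rules do preserve $\equiv$ (for the typical operation on a monotone block $[i..j]$ the extension repeating $s(i)$ gives $s' \dominates s$ and the one repeating $s(j)$ gives $s \dominates s'$); the common-refinement construction for \ref{lem:BK:3:13} is sound because duplicating a coordinate of an extension keeps it an extension, keeps pointwise $\le$, and after enlarging every block to size $c_i = \max(a_i,b_i)$ the two refined extensions share a block structure so their sum is an extension of $y$; \ref{lem:BK:3:14} indeed follows by applying \ref{lem:BK:3:13} to $r^*, s^*$ with dominators $\typseq(r), \typseq(s)$, using $\typseq(r) \equiv r \equiv r^*$ from \ref{cor:BK:3:11}; the lattice-path argument for \ref{lem:BK:3:15} correctly bounds the number of distinct grid positions on a monotone staircase by $\length(r)+\length(s)-1$; and \ref{lem:BK:3:19} is as immediate as you say.

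One small point worth making explicit in \ref{lem:BK:3:14}: you invoke $r \equiv r^*$ for an extension $r^*$ of $r$. This does hold, but it is not literally an instance of either rewrite rule alone; the cleanest justification is that $\typseq(r^*) = \typseq(r)$ (repeated entries in $r^*$ are consecutive repetitions that collapse), and then \ref{cor:BK:3:11} gives $r \equiv r^*$. Stating that one line removes a gap a careful reader might flag. With that addition, the whole chain is tight.
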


\subsection{Directed Acyclic Graphs} 
A \emph{directed graph} (or \emph{digraph}) $G$ is a pair of a set of \emph{vertices} $V(G)$ 
and a set of ordered pairs of vertices, called \emph{arcs}, $A(G) \subseteq V(G) \times V(G)$. 
(If $A(G)$ is a multiset, we call $G$ \emph{multidigraph}.) 
We say that an arc $a = (u, v) \in A(G)$ is directed from $u$ to $v$, and we call $u$ the \emph{tail} of $a$ and $v$ the \emph{head} of $a$. We use the shorthand `$uv$' for `$(u, v)$'. A sequence of vertices $v_1, \ldots, v_r$ is called a \emph{walk} in $G$ if for all $i \in [r-1]$, $v_i v_{i+1} \in A(G)$. A \emph{cycle} is a walk $v_1, \ldots, v_r$ with $v_1 = v_r$ and all vertices $v_1, \ldots, v_{r-1}$ pairwise distinct.
If $G$ does not contain any cycles, then we call $G$ \emph{acyclic} or a \emph{directed acyclic graph}, DAG for short.

Let $G$ be a DAG on $n$ vertices. A \emph{topological order} of $G$ is a linear order $\linord \colon V(G) \to [n]$ such that for all arcs $uv \in A(G)$, we have that $\linord(u) < \linord(v)$. We denote the set of all topological orders of $G$ by $\toporders(G)$.
We now define the width measures studied in this work. Note that we restrict the orderings of the vertices that we consider to \emph{topological} orderings.
\begin{definition}\label{def:dag:measures}
	Let $G$ be a directed acyclic graph and let $\linord \in \toporders(G)$ be a topological order of $G$.
	\begin{enumerate}
		\item\label{def:dag:measures:cutwidth} The \emph{cutwidth} of $\linord$ is $\cutwidth(\linord) \defeq \max_{i \in [n-1]} \card{\{uv \in A(G) \mid \linord(u) \le i \wedge \linord(v) > i\}}$.
		\item\label{def:dag:measures:mod:cutwidth} The \emph{modified cutwidth} of $\linord$ is $\modifiedcutwidth(\linord) \defeq \max_{i \in [n]} \card{\{uv \in A(G) \mid \linord(u) < i \wedge \linord(v) > i\}}$.
	\end{enumerate}
	We define the cutwidth and modified cutwidth of a directed acyclic graph $G$ as the minimum of the respective measure over all topological orders of $G$.
\end{definition}

We now introduce series parallel digraphs. Note that the following definition coincides with the notion of `edge series-parallel multidigraphs' in~\cite{VTL82}.
For an illustration see \cref{fig:spd}.
\begin{figure}
	\centering
	\includegraphics[width=.75\textwidth]{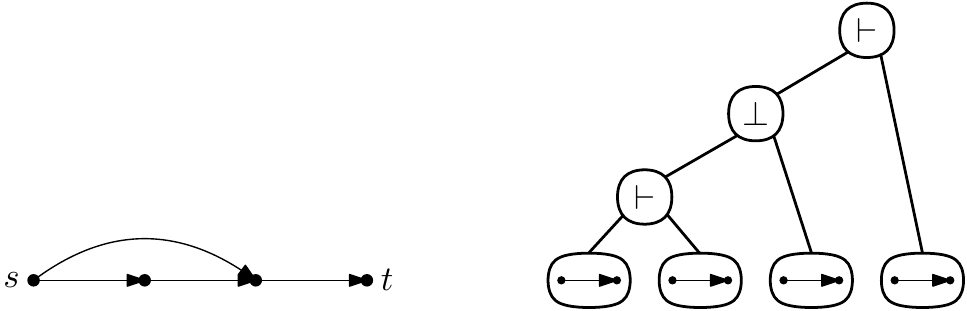}
	\caption{A series parallel digraph $G$ on the left, and a decomposition tree that yields $G$ on the right.}
	\label{fig:spd}
\end{figure}
\begin{definition}[Series Parallel Digraph (SPD)]
	A (multi-)digraph $G$ with an ordered pair of \emph{terminals} $(s, t) \in V(G) \times V(G)$ is called \emph{series parallel digraph (SPD)}, often denoted by $(G, (s, t))$, if one of the following hold.
	\begin{enumerate}
		\item $(G, (s, t))$ is a single arc directed from $s$ to $t$, i.e.\ $V(G) = \{s, t\}$, $A(G) = \{(s, t)\}$.
		\item $(G, (s, t))$ can be obtained from two series parallel digraphs $(G_1, (s_1, t_1))$ and $(G_2, (s_2, t_2))$ by one of the following operations.
		\begin{enumerate}
			\item\enumheader{Series Composition.} $(G, (s, t))$ is obtained by taking the disjoint union of $G_1$ and $G_2$, identifying $t_1$ and $s_2$, and letting $s = s_1$ and $t = t_2$. In this case we write $(G, (s, t)) = (G_1, (s_1, t_1)) \seriescomp (G_2, (s_2, t_2))$ or simply $G = G_1 \seriescomp G_2$.
			\item\enumheader{Parallel Composition.} $(G, (s, t))$ is obtained by taking the disjoint union of $G_1$ and $G_2$, identifying $s_1$ and $s_2$, and identifying $t_1$ and $t_2$, and letting $s = s_1 = s_2$ and $t = t_1 = t_2$. In this case we write $(G, (s, t)) = (G_1, (s_1, t_1)) \parallelcomp (G_2, (s_2, t_2))$, or simply $G = G_1 \parallelcomp G_2$.
		\end{enumerate}
	\end{enumerate}
\end{definition}

It is not difficult to see that each series parallel digraph is acyclic.
One can naturally associate a notion of \emph{decomposition trees} with series parallel digraphs as follows. A decomposition tree $T$ is a rooted and ordered binary tree whose leaves are labeled with a single arc, and each internal node $t \in V(T)$ with left child $\ell$ and right child $r$ is either a \emph{series node} or a \emph{parallel node}. We then associate an SPD $G_t$ with $t$ that is $G_\ell \seriescomp G_r$ if $t$ is a series node and $G_\ell \parallelcomp G_r$ if $t$ is a parallel node. 
It is clear that for each SPD $G$, there is a decomposition tree $T$ with root $\fr$ such that $G = G_\fr$.
In that case we say that \emph{$T$ yields $G$}.
Valdes et al.~\cite{VTL82} have shown that one can decide in linear time whether a directed graph $G$ is an SPD and if so, find a decomposition tree that yields $G$.
\begin{theorem}[Valdes et al.~\cite{VTL82}]\label{thm:spd:recognition}
Let $G$ be a directed graph on $n$ vertices and $m$ arcs. There is an algorithm that decides in time $\cO(n + m)$ whether $G$ is a series parallel digraph and if so, it outputs a decomposition tree that yields $G$.
\end{theorem}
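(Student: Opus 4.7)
The classical proof by Valdes, Tarjan, and Lawler proceeds via a reduction-based characterization combined with careful data structures. My plan is as follows. First, I would establish the structural fact that a weakly connected multidigraph $G$ with a unique source $s$ (in-degree $0$) and unique sink $t$ (out-degree $0$) is a series parallel digraph with terminals $(s, t)$ if and only if it can be reduced to the single arc $(s, t)$ by an exhaustive application of two local reductions: a \emph{parallel reduction} replaces two arcs sharing both endpoints by a single arc, and a \emph{series reduction} contracts an internal (non-terminal) vertex whose in-degree and out-degree are both $1$, merging its unique incoming and outgoing arcs into one. The forward direction is a straightforward induction on the definition of SPD; the reverse direction is by reverse induction on the number of reductions, each undone reduction corresponding to a parallel or series internal node in the decomposition tree.

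Second, the algorithm performs the $\cO(n + m)$ preliminary checks: a topological sort to verify acyclicity, a BFS/DFS to verify weak connectedness, and a degree scan to confirm that $s$ and $t$ are unique. If any of these fail, $G$ is not an SPD. Otherwise, the algorithm iteratively applies the two reductions while simultaneously building the decomposition tree: each arc currently in $G$ is associated with a leaf (initially) or subtree (after some reductions have merged it with others) of the tree, and each reduction combines the two subtrees associated with the two arcs it merges under a new parallel or series internal node. When no further reduction applies, $G$ is an SPD iff only the arc $(s, t)$ remains, and the output $T$ is the subtree it carries.

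Third, to fit everything in $\cO(n + m)$, I would maintain, for each vertex, its incoming and outgoing adjacency lists sorted by the other endpoint. These sorted lists can be produced in $\cO(n + m)$ by a single bucket/radix sort of all arcs. I would also maintain a work list of series-reduction candidates (non-terminal vertices with current in-degree and out-degree equal to $1$) and update it as degrees change. After a series reduction that produces the new arc $(u, w)$ from arcs $(u, v)$ and $(v, w)$, the positions of $(u, v)$ in the out-list of $u$ and of $(v, w)$ in the in-list of $w$ serve as $\cO(1)$ insertion hints for $(u, w)$; in each case the sorted neighbor on either side of the insertion point is then compared against $(u, w)$ in $\cO(1)$ to detect any parallel arc that was just created, which is immediately cleared by a parallel reduction. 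Each reduction strictly decreases the arc count, so there are $\cO(m)$ reductions total, each costing $\cO(1)$ amortized.

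The main obstacle is precisely this data-structural point in Step~3: a naive parallel-arc check after every series reduction costs $\Theta(\deg(u))$, which destroys linearity. The delicate part of the argument is verifying that the sorted adjacency-list invariant is preserved through every reduction and that all insertions and deletions can actually be implemented as constant-time doubly-linked-list splices, so that the amortized accounting gives the claimed $\cO(n + m)$ bound.
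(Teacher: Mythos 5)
The paper does not prove this theorem; it is cited directly from Valdes, Tarjan, and Lawler~\cite{VTL82} and used as a black box (the statement is their recognition result for two-terminal, i.e.\ edge, series-parallel multidigraphs). So there is no in-paper proof to compare against. Your high-level plan --- characterize SPDs by reducibility to a single arc under series and parallel reductions, carry a decomposition subtree on each surviving arc and combine two subtrees under a new parallel or series node at each reduction, run the preliminary acyclicity / connectivity / unique source--sink checks, and then drive exhaustive reduction from a worklist with an $\cO(1)$-amortized parallel-arc test --- is indeed the route taken in the cited reference, and the count of $\cO(m)$ reductions is correct.

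The gap is exactly where you flag it, in Step~3, and it is not merely a verification burden: the proposed invariant is false. If the in- and out-lists are sorted by the identifier of the far endpoint, then when a series reduction at $v$ removes $(u,v),(v,w)$ and creates $(u,w)$, inserting $(u,w)$ at the old slot of $(u,v)$ places it according to $v$'s key rather than $w$'s; the list is no longer sorted and the duplicate $(u,w)$, if one exists, need not be adjacent to the insertion slot on either side. Concretely, let $G$ be the parallel composition of the single arc $s\to t$ with the three length-two paths $s\to y\to t$, $s\to v\to t$, $s\to x\to t$, and assign identifiers with $s<y<v<x<t$. Then $s$'s out-list is $(y,v,x,t)$ and $t$'s in-list, sorted by tail, is $(s,y,v,x)$. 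Reducing $v$ creates $(s,t)$; replacing $(s,v)$ at slot~$2$ of $s$'s out-list leaves the pre-existing $(s,t)$ at slot~$4$, while replacing $(v,t)$ at slot~$3$ of $t$'s in-list leaves the pre-existing $(s,t)$ at slot~$1$. In neither list is the duplicate adjacent to the inserted element, so the $\cO(1)$ neighbour comparison misses it; re-sorting or searching costs $\Theta(\log\deg)$ per reduction and breaks linearity, and hashing yields only expected time. What the reference actually maintains is not an identifier sort but an ordering of each incidence list compatible with the series-parallel structure itself (equivalently, with a planar $s$--$t$ embedding), together with a structural lemma asserting that under this ordering a newly formed parallel pair is always consecutive in the relevant list. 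Establishing that ordering, proving it is preserved by both reduction types, and showing it can be initialized and updated in $\cO(1)$ per reduction is precisely the technical content that your proposal leaves open.
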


\section{The Merge Dominator Lemma}\label{sec:merge:dominate}
In this section we prove the main technical result of this work. It states that given two integer sequences, one can find in linear time a merge that dominates all merges of those two sequences.
\begin{lemma}[Merge Dominator Lemma]\label{lem:merge:dom}
	Let $r$ and $c$ be integer sequence of length $m$ and $n$, respectively. There exists a dominating merge of $r$ and $c$, i.e.\ an integer sequence $t \in r \allmerges c$ such that $t \dominates r \allmerges c$, and this dominating merge can be computed in time $\cO(m + n)$.
\end{lemma}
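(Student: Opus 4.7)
\emph{Plan.} My approach is to reduce the problem to the case where $r$ and $c$ are themselves typical sequences, and then construct a dominating merge by a recursive walk on a lattice grid.

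\emph{Reduction.} I would first compute $\tilde r \defeq \typseq(r)$ and $\tilde c \defeq \typseq(c)$ in linear time using \cref{lem:compute:typseq}. Since $r \equiv \tilde r$ and $c \equiv \tilde c$ by \cref{lem:BK}~\ref{cor:BK:3:11}, \cref{lem:BK}~\ref{lem:BK:3:14} implies that every merge in $r \allmerges c$ is dominated by some merge in $\tilde r \allmerges \tilde c$; conversely, \cref{lem:BK}~\ref{lem:BK:3:13} (applied with $r_0 \defeq r$ and $s_0 \defeq c$, each dominating the corresponding extension of $\tilde r$ or $\tilde c$) produces, from any merge of the typical sequences, a dominating merge in $r \allmerges c$. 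Thus it suffices to find a dominator $t^* \in \tilde r \allmerges \tilde c$ of $\tilde r \allmerges \tilde c$, and then lift it to a dominator $t \in r \allmerges c$ via the constructive version of \cref{lem:BK}~\ref{lem:BK:3:13} --- which essentially re-inserts the ``hidden'' elements of $r$ and $c$ into the extensions realising the merge, and can be implemented in linear time.

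\emph{Construction for typical sequences.} I would view each merge of $\tilde r$ and $\tilde c$ (of lengths $m'$ and $n'$ respectively) as a monotone lattice path on the grid $[m'] \times [n']$ from $(1,1)$ to $(m', n')$ using right, up, and diagonal steps, where each visited cell $(i,j)$ contributes the value $\tilde r(i) + \tilde c(j)$ to the output sequence. I would then build the dominating path by induction on $m' + n'$. In the base case, $m' = 1$ or $n' = 1$, the dominator is essentially the unique merge up to equivalence --- namely $\tilde r(1) + \tilde c$ or $\tilde r + \tilde c(1)$. For the inductive step I would choose a single first move (right, up, or diagonal) based on the local shape of $\tilde r$ and $\tilde c$ near their starts and then recurse on the resulting strictly smaller subproblem. \cref{lem:BK}~\ref{lem:BK:3:19} (concatenation preserves $\dominates$) then lets us prepend the chosen initial cell's value to the recursively computed dominator and conclude that the combined sequence dominates every merge beginning with any of the three possible first moves.

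\emph{Main obstacle and complexity.} The crux of the proof is picking the first move correctly and certifying its global optimality. Because typical sequences alternate between local minima and local maxima, the dominator must visit the low-sum ``cross cells'' combining compatible minima of $\tilde r$ and $\tilde c$, while avoiding unnecessary high-sum cells; a naive greedy rule that always advances toward the cheapest neighbouring cell is insufficient, since an upcoming low-sum cell may demand an initially non-greedy move. I expect the correct rule to involve a short case analysis on the four combinations of ascent/descent at the starts of $\tilde r$ and $\tilde c$, together with the relative magnitudes of the first ``jumps'': roughly, when $\tilde r$ and $\tilde c$ move in opposite directions the dominator should advance the one that is descending (so as to hit a small cross cell), whereas when both ascend or both descend the choice reduces to comparing the first jumps. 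In each case a brief argument using transitivity of $\dominates$ together with \cref{lem:BK}~\ref{lem:BK:3:19} should suffice. Finally, the runtime is linear because each step of the recursive walk advances at least one of the two sequence indices and takes $O(1)$ time, producing a walk of at most $m' + n' - 1 \le m + n - 1$ steps; the reductions to and from typical sequences also run in linear time, so the overall bound is $\cO(m + n)$.
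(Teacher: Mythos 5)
Your overall architecture --- reduce to typical sequences via \cref{lem:compute:typseq}, find a dominating walk on the merge grid, lift back --- matches the paper's, and the lifting step is conceptually sound via \cref{lem:BK}\cref{lem:BK:3:13}, \cref{lem:BK}\cref{lem:BK:3:14} and transitivity, although the paper gives an explicit linear-time ``typical lift'' (\cref{lem:merge:nd:equiv}, \cref{lem:typical:lift}) rather than relying on an existence lemma. The core construction, however, contains a genuine gap: you propose to recurse from $(1,1)$ one cell at a time, and as far as I can tell this cannot be made to work without the structural ingredient you are missing.

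The paper first proves the \emph{Split Lemma} (\cref{lem:split}): a dominating path can always be taken to pass through the cell $(i,j)$ with $i \in \argmin(\typseq(r))$ and $j \in \argmin(\typseq(c))$, so the merge matrix may be split at this cell into a bottom-left block whose minimum lies at its top-right corner and a top-right block whose minimum lies at its bottom-left corner, and the two dominating subpaths concatenated. Your recursion never performs this split. Without it the minima of $\typseq(r)$ and $\typseq(c)$ may sit in the interior, and a walk that makes locally-informed single steps from $(1,1)$ has no invariant coordinating its passage through that crucial low cell; the auxiliary results that drive the induction (\cref{lem:special}, \cref{lem:special:corners}, \cref{cor:typseq:min:shape}) all hinge on the post-split corner structure. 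Furthermore, the Chop Lemmas (\cref{lem:chop}, \cref{lem:chop:top}) remove \emph{two} rows or columns per step, precisely because a typical sequence ending in its minimum has tail pattern $\ldots, \max, \min$ and dropping two entries preserves this alternation; a one-step chop as you propose leaves a sequence ending in a local maximum, where the inductive structure of the paper's argument breaks. You explicitly concede you have not found the correct decision rule; the paper's rule, comparing $r(m-2)+c(n-1)$ against $r(m-1)+c(n-2)$, is in the spirit of your ``compare the first jumps'' intuition, but its correctness rests on the split and on the two-element chop, neither of which appear in your sketch, so the central claim --- that \emph{some} single merge dominates all others --- is not actually established by your argument.
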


\paragraph{Outline of the proof of the Merge Dominator Lemma.}
First, we show that we can restrict our search to finding a dominating path in a matrix that, roughly speaking, contains all merges of $r$ and $c$ of length at most $\length(r) + \length(c) - 1$. The goal of this step is mainly to increase the intuitive insight to the proofs in this section. Next, we prove the `Split Lemma' (\cref{lem:split} in \cref{sec:lem:split}) which asserts that we can obtain a dominating path in our matrix $M$ by splitting $M$ into a submatrix $M_1$ that lies in the `bottom left' of $M$ and another submatrix $M_2$ in the `top right' of $M$ along a minimum row and a minimum column, and appending a dominating path in $M_2$ to a dominating path in $M_1$.
In $M_1$, the last row and column are a minimum row and column, respectively, and in $M_2$, the first row and column are a minimum row and column, respectively.
This additional structure will be exploited in \cref{sec:lem:chop} where we prove the `Chop Lemmas' that come in two versions.
The `bottom version' (\cref{lem:chop}) shows that in $M_1$, we can find a dominating path by repeatedly chopping away the \emph{last} two rows or columns and remembering a vertical or horizontal length-$2$ path.
The `top version' (\cref{lem:chop:top}) is the symmetric counterpart for $M_2$.
The proofs of the Chop Lemmas only hold when $r$ and $c$ are \emph{typical sequences}, and in \cref{sec:alg:split-and-chop} we present the `Split-and-Chop Algorithm' that computes a dominating path in a merge matrix of two typical sequences. Finally, in \cref{sec:split-chop:arbitrary}, we generalize this result to arbitrary integer sequences, using the Split-and-Chop Algorithm and one additional construction.

\subsection{The Merge Matrix, Paths, and Non-Diagonality}
Let us begin by defining the basic notions of a merge matrix and paths in matrices.
\begin{definition}[Merge Matrix]
	Let $r$ and $c$ be two integer sequences of length $m$ and $n$, respectively. Then, the \emph{merge matrix} of $r$ and $c$ is an $m \times n$ integer matrix $M$ such that for $(i, j) \in [m] \times [n]$, $M[i, j] = r(i) + c(j)$.
\end{definition}
\begin{definition}[Path in a Matrix]
	Let $M$ be an $m \times n$ matrix. A \emph{path} in $M$ is a sequence $p(1), \ldots, p(\ell)$ of indices from $M$ such that 
	\begin{enumerate}
		\item $p(1) = (1, 1)$ and $p(\ell) = (m, n)$, and
		\item for $h \in [\ell-1]$, let $p(h) = (i, j)$; then, $p(h+1) \in \{(i + 1, j), (i, j+1), (i + 1, j + 1)\}$.
	\end{enumerate}
	We denote by $\allpaths(M)$ the set of all paths in $M$. A sequence $p(1), \ldots, p(\ell)$ that satisfies the second condition but not necessarily the first is called a \emph{partial path} in $M$.
	For two paths $p, q \in \allpaths(M)$, we may simply say that
	\emph{$p$ dominates $q$}, if $M[p]$ dominates $M[q]$.\footnote{Recall that 
	by \eqref{eq:path:matrix} on page~\pageref{eq:path:matrix},
	for a (partial) path $p$ in a matrix $M$, $M[p] = M[p(1)], M[p(2)], \ldots, M[p(\length(p))]$.} 
	We also write $p \dominates \allpaths(M)$ to express that for each path $q \in \allpaths(M)$,
	$p \dominates q$.
	
	A (partial) path is called \emph{non-diagonal} if the second condition is replaced by the following.%
	\begin{enumerate}[label={(\roman*)'}]
		\setcounter{enumi}{1}
		\item For $h \in [\ell-1]$, let $p(h) = (i, j)$; then, $p(h+1) \in \{(i + 1, j), (i, j+1)\}$.
	\end{enumerate}
\end{definition}

An \emph{extension} $e$ of a path $p$ in a matrix $M$ is as well a sequence of indices of $M$, 
and we again denote the corresponding integer sequence by $M[e]$.
A consequence of \cref{lem:BK}\cref{cor:BK:3:11,lem:BK:3:15} is that we can restrict ourselves to all paths in a merge matrix when trying to find a dominating merge of two integer sequences: it is clear from the definitions that in a merge matrix $M$ of integer sequences $r$ and $c$, $\allpaths(M)$ contains all merges of $r$ and $c$ of length at most $\length(r) + \length(c) - 1$.
Furthermore, suppose that there is a merge $q \in r \allmerges s$ such that $q \dominates r \allmerges s$ and $\length(q) > \length(r) + \length(s) - 1$.
By \cref{lem:BK}\cref{lem:BK:3:15}, there is a merge $q' \in r \allmerges s$ such that $\length(q') \le \length(r) + \length(s) - 1$,
and $\typseq(q') = \typseq(q)$. The latter yields $\typseq(q') \equiv \typseq(q)$ 
and therefore, by \cref{lem:BK}\cref{cor:BK:3:11}, $q' \equiv q$, in particular, $q' \dominates q \dominates r \allmerges s$.
\begin{corollary}\label{cor:merge:path}
	Let $r$ and $c$ be integer sequences and $M$ be the merge matrix of $r$ and $c$. There is a dominating merge in $r \allmerges c$, i.e.\ an integer sequence $t \in r \allmerges c$ such that $t \dominates r \allmerges c$, if and only if there is a dominating path in $M$, i.e.\ a path $p \in \allpaths(M)$ such that $p \dominates \allpaths(M)$.
\end{corollary}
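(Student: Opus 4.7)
The plan is to identify paths in $M$ with merges of length at most $m+n-1$, and then appeal to \cref{lem:BK}\cref{lem:BK:3:15} to handle merges of arbitrary length. First I would spell out the path-to-merge direction: given a path $p \in \allpaths(M)$ with $p(h) = (i_h, j_h)$, the step condition forces $(i_h)_h$ and $(j_h)_h$ to be non-decreasing, to start at $1$, to end at $m$ and $n$ respectively, and to visit every row and column index at least once. Consequently $r(i_1), \ldots, r(i_\ell)$ and $c(j_1), \ldots, c(j_\ell)$ are extensions of $r$ and $c$ of equal length whose pointwise sum is $M[p]$, so $M[p] \in r \allmerges c$. For the converse, a merge $q \in r \allmerges c$ of length at most $m+n-1$ carries a sequence of source-index pairs $(i_h, j_h)$ that differs from a path only by possible stationary steps $(i_{h+1}, j_{h+1}) = (i_h, j_h)$; collapsing these gives a path $p_q \in \allpaths(M)$, and since stationary steps translate to consecutive repetitions in the value sequence, $\typseq(M[p_q]) = \typseq(q)$.

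Next I would combine this with the short-merge lemma. For an arbitrary merge $q \in r \allmerges c$, \cref{lem:BK}\cref{lem:BK:3:15} produces $q' \in r \allmerges c$ with $\length(q') \le m+n-1$ and $\typseq(q') = \typseq(q)$, and the previous construction assigns to $q'$ a path $p_q \in \allpaths(M)$ satisfying $\typseq(M[p_q]) = \typseq(q') = \typseq(q)$. By \cref{lem:BK}\cref{cor:BK:3:11} this yields $M[p_q] \equiv q$.

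The two directions of the corollary then fall out of transitivity of $\dominates$. For $(\Leftarrow)$, if $p \dominates \allpaths(M)$ then $M[p] \in r \allmerges c$, and for every merge $q$ the associated path $p_q$ satisfies $M[p] \dominates M[p_q] \equiv q$, so $M[p] \dominates q$. For $(\Rightarrow)$, a dominating merge $t$ is by the reduction equivalent to $M[p_t]$ for some $p_t \in \allpaths(M)$, and this path dominates every other path $q \in \allpaths(M)$ because $M[p_t] \equiv t$ dominates the merge $M[q] \in r \allmerges c$.

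I expect the only technical nuisance to be the path-merge correspondence itself: merges allow stationary steps that paths forbid, so one must explicitly verify that collapsing such steps preserves the typical sequence of the value sequence before one can invoke \cref{lem:BK}\cref{cor:BK:3:11}. Once this bookkeeping is in place, the corollary is a short chain of implications built on \cref{lem:BK}\cref{lem:BK:3:15,cor:BK:3:11}.
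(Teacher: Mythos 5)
Your proof is correct and takes essentially the same approach as the paper: both reduce arbitrary merges to merges of length at most $m+n-1$ via \cref{lem:BK}\cref{lem:BK:3:15}, identify such merges (up to stationary steps and hence up to typical sequence) with paths in $M$, and conclude via \cref{lem:BK}\cref{cor:BK:3:11} and transitivity. Your write-up is simply more explicit than the paper's about the path--merge correspondence and the collapsing of stationary steps; the paper leaves that bookkeeping as "clear from the definitions."
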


We now consider a type of merge that corresponds to non-diagonal paths in the merge matrix. 
These merges will be used in a construction presented in \cref{sec:split-chop:arbitrary}, 
and in the algorithmic applications of the Merge Dominator Lemma given in \cref{sec:width:measures}.
For two integer sequences $r$ and $s$, we denote by $r \allmergesnd s$ the set of all \emph{non-diagonal merges} of $r$ and $s$, which are not allowed to have `diagonal' steps: we have that for all $t \in r \allmergesnd s$ and all $i \in [\length(t)-1]$, if $t(i) = r(i_r) + s(i_s)$, then $t(i+1) \in \{r(i_r + 1) + s(i_s), r(i_r) + s(i_s + 1)\}$. 
As each non-diagonal merge directly corresponds to a non-diagonal path in the merge matrix (and vice versa),
we can consider a non-diagonal path in a merge matrix to be a non-diagonal merge and vice versa.
We now show that for each merge that uses diagonal steps, there is always a non-diagonal merge that dominates it.
\begin{lemma}\label{lem:merge:nd:equiv}
	Let $r$ and $s$ be two integer sequences of length $m$ and $n$, respectively. For any merge $q \in r \allmerges s$, there is a non-diagonal merge $q' \in r \allmergesnd s$ such that $q' \dominates q$. Furthermore, given $q$, $q'$ can be found in time $\calO(m + n)$.
\end{lemma}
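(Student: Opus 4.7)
The plan is to obtain $q'$ from $q$ by walking through $q$ once and replacing each diagonal step locally by the cheaper of the two non-diagonal alternatives. Think of $q$ as a path $(i_1, j_1), (i_2, j_2), \ldots$ in the merge matrix $M$ of $r$ and $s$ whose steps lie in $\{(1,0), (0,1), (1,1)\}$ (removing consecutive repetitions first, which does not affect the sequence itself); by \cref{lem:BK}\cref{lem:BK:3:15} we may assume this path has length at most $m + n - 1$. The only steps that need to be dealt with are the diagonal ones.

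Consider a single diagonal step from $(i, j)$ to $(i+1, j+1)$, and write $a = M[i,j]$, $b = M[i+1, j+1]$, $x_1 = M[i, j+1] = r(i) + s(j+1)$, $x_2 = M[i+1, j] = r(i+1) + s(j)$. The two non-diagonal alternatives insert $x_1$ or $x_2$ between $a$ and $b$, and crucially $x_1 + x_2 = r(i) + r(i+1) + s(j) + s(j+1) = a + b$, so $x \defeq \min(x_1, x_2) \le (a+b)/2 \le \max(a,b)$. Take the alternative that inserts this $x$. To see that the resulting local sequence $\ldots, a, x, b, \ldots$ dominates $\ldots, a, b, \ldots$, suppose $x \le a$ (the case $x \le b$ is symmetric): extending the old sequence by repeating $a$ once gives $\ldots, a, a, b, \ldots$, which pointwise majorizes $\ldots, a, x, b, \ldots$, and leaving the prefix and suffix unchanged extends to a full domination witness.

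Perform this substitution for every diagonal step of $q$, and invoke transitivity of $\dominates$ (the Remark following \cref{lem:BK}) to conclude that the fully non-diagonal merge $q'$ satisfies $q' \dominates q$. Each substitution is $\calO(1)$ work given the path representation of $q$, so the entire procedure runs in time linear in the length of the path, which is $\calO(m+n)$. The one slightly delicate point — the step I expect to need the most care in the write-up — is the local domination argument: one has to check carefully that repeating a single element of $q$ once (either the left or the right neighbor of the inserted value, depending on whether $x \le a$ or $x \le b$) produces a legal extension of $q$ whose pointwise comparison with $q'$ witnesses $q' \dominates q$. Everything else is bookkeeping.
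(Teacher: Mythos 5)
Your proposal is correct and follows essentially the same route as the paper: replace each diagonal step $(i,j)\to(i+1,j+1)$ by inserting the smaller of the two off-diagonal entries, and witness domination by repeating the larger neighbor once in an extension of $q$. In fact your explicit observation that $x_1 + x_2 = a + b$, hence $\min(x_1,x_2)\le\max(a,b)$, is the clean justification for the key inequality that the paper asserts (with a small typographical slip in the displayed maximum); the rest, including the use of transitivity to chain the local replacements and the $\calO(m+n)$ accounting, matches.
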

\begin{proof}
	This can be shown by the following local observation. Let $i \in [\length(q) - 1]$ be such that $q(i), q(i+1)$ is a diagonal step, i.e.\ there are indices $i_r \in [\length(r) - 1]$ and $i_s \in [\length(s)-1]$ such that $q(i) = r(i_r) + s(i_s)$ and $q(i+1) = r(i_r + 1) + s(i_s + 1)$. Then, we insert the element $x \defeq \min\{r(i_r) + s(i_s + 1), r(i_r + 1) + s(i_s)\}$ between $q(i)$ and $q(i+1)$. 
	Since 
	$$x \le \max\{r(i_r) + s(i_s), r(i_r + 1), s(i_s + 1)\} \eqdef y,$$
	we can repeat $y$ twice in an extension of $q$ so that one of the occurrences aligns with $x$,
	and we have that in this position, the value of $q'$ is at most the value of the extension of $q$.
	
	Let $q'$ be the sequence obtained from $q$ by applying this operation to all diagonal steps,
	then by the observation just made, we have that $q' \dominates q$. 
	It is clear that this can be implemented to run in time $\calO(m + n)$.
\end{proof}

Next, we define two special paths in a matrix $M$ that will reappear in several places throughout this section. These paths can be viewed as the `corner paths', where the first one follows the first row until it hits the last column and then follows the last column ($\rightup{p}(M)$), and the second one follows the first column until it hits the last row and then follows the last row ($\upright{p}(M)$). Formally, we define them as follows:
	\begin{align*}
		\rightup{p}(M) &\defeq (1, 1), (1, 2), \ldots, (1, n), (2, n), \ldots, (m, n) \\
		\upright{p}(M) &\defeq (1, 1), (2, 1) \ldots, (m, 1), (m, 2), \ldots, (m, n)
	\end{align*}
	We use the shorthands `$\rightup{p}$' for `$\rightup{p}(M)$' and `$\upright{p}$' for `$\upright{p}(M)$' whenever $M$ is clear from the context.
\begin{figure}
	\centering
	\includegraphics[height=.3\textheight]{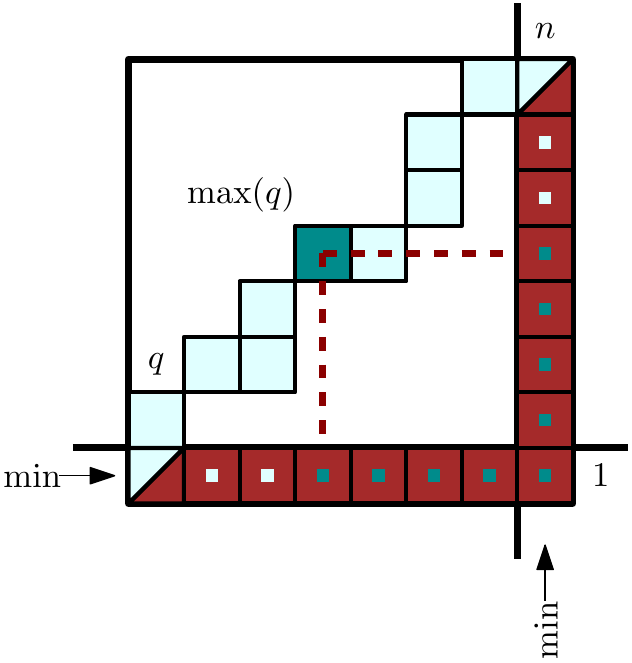}
	\caption{Situation in the proof of \cref{lem:special}\cref{lem:rightup}. 
		The dot within each element of the corner path $\rightup{p}$ indicates with which elements 
		of the path $q$ it is `matched up' in the extensions constructed in the proof.}
	\label{fig:rightup}
\end{figure}

For instance, these paths appear in the following special cases of the Merge Dominator Lemma, which will be useful for several proofs in this section.
\begin{lemma}\label{lem:special}
	Let $r$ and $c$ be integer sequences of length $m$ and $n$, respectively, and let $M$ be the merge matrix of $r$ and $c$. Let $i \in \argmin(r)$ and $j \in \argmin(c)$.
	\begin{enumerate}
		\item\label{lem:rightup} If $i = 1$ and $j = n$, then $\rightup{p}$ dominates all paths in $M$, i.e.\ $\rightup{p} \dominates \allpaths(M)$.
		\item\label{lem:upright} If $i = m$ and $j = 1$, then $\upright{p}$ dominates all paths in $M$, i.e.\ $\upright{p} \dominates \allpaths(M)$.
	\end{enumerate}
\end{lemma}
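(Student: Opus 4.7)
The plan is to build explicit extensions $P^* \in \extensions(M[\rightup{p}])$ and $Q^* \in \extensions(M[q])$ with $P^* \le Q^*$ for an arbitrary path $q \in \allpaths(M)$, by viewing them as a monotone lattice path in a suitable grid. By \cref{lem:merge:nd:equiv} together with the transitivity of $\dominates$, it suffices to prove (1) for non-diagonal paths $q$, which have length $\ell \defeq m + n - 1 = \length(\rightup{p})$.

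The key observation driving the argument is that the hypothesis $1 \in \argmin(r)$ and $n \in \argmin(c)$ implies, for all $(i, j) \in [m] \times [n]$,
\[
    M[1, j] \le M[i, j] \qquad\text{and}\qquad M[i, n] \le M[i, j],
\]
and in particular $M[1, n] = \min_{(i,j)} M[i, j]$. Writing $P_k \defeq M[\rightup{p}(k)]$ and $Q_h \defeq M[q(h)]$, this means that for every step $q(h) = (i_h, j_h)$ the $\rightup{p}$-entries $P_{j_h}$ (on the row-$1$ portion), $P_{n + i_h - 1}$ (on the column-$n$ portion), and $P_n$ all lie at or below $Q_h$.

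I view constructing $P^*$ and $Q^*$ as finding a monotone lattice path from $(1,1)$ to $(\ell, \ell)$ in the grid $[\ell] \times [\ell]$ with unit steps $(+1, 0), (0, +1), (+1, +1)$ that visits only \emph{good} points $(k, h)$ with $P_k \le Q_h$. Any such path automatically hits every $k$ and every $h$ at least once, so reading off the $P$- and $Q$-values at the visited points yields extensions of the correct form satisfying the desired componentwise inequality. Let $h^* \defeq \min\{h : j_h = n\}$. I would build the path in two phases: for $h < h^*$ align the grid coordinate $k$ with $j_h$ (using the row-$1$ part of $\rightup{p}$), and for $h \ge h^*$ align $k$ with $n + i_h - 1$ (using the column-$n$ part). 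The alignment is non-decreasing within each phase, and at the boundary $j_{h^* - 1} = n - 1$ is immediately followed by $n + i_{h^*} - 1 \ge n$. If $i_{h^*} > 1$, the $\rightup{p}$-positions $(i, n)$ for $i \in [1, i_{h^*} - 1]$ are still unvisited; for each such $i$ I insert its visit at an earlier step $h \le h^*$ with $i_h = i$ (which exists because $i_h$ takes every value in $[1, m]$ along $q$), using $P_{n+i-1} = r(i) + c(n) \le r(i) + c(j_h) = Q_h$, and passing through the always-good ``hinge'' $(n, h)$.

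The main obstacle is the bookkeeping: arranging these detours into a single globally monotone good path that visits each $k$ and each $h$ the right number of times. Ordering between detours follows from the monotonicity of $i_h$ along $q$, while the validity of intermediate grid points traversed within each detour follows either from $P_n$ being the global minimum of $M$ or from inserting additional small hops at $h$-values chosen analogously. Part (2) is obtained by a symmetric argument: reflecting the matrix $M$ about its anti-diagonal interchanges the roles of rows and columns and maps $\rightup{p}$ to $\upright{p}$, reducing the case $i = m, j = 1$ to the case $i = 1, j = n$ just treated.
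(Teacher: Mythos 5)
Your reformulation as a monotone lattice path through ``good'' points $(k,h)$ with $P_k \le Q_h$ (where $P = M[\rightup{p}]$, $Q = M[q]$) is a valid way to phrase what must be shown, and the reduction to non-diagonal $q$ via Lemma~\ref{lem:merge:nd:equiv} is sound, though unnecessary (the paper's argument handles arbitrary $q$ directly). However, the choice of split point $h^* = \min\{h : j_h = n\}$ together with the proposed detours leaves a genuine gap. If the lattice path visits $(n+i-1,h)$ at some $h < h^*$ with $i_h = i$, monotonicity forces $k \ge n+i-1 > n-1$ at every later step, so the main-alignment points $(j_{h'},h')$ with $j_{h'} < n$ for $h' \in (h, h^*)$ can no longer be visited. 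Bunching all the missing $k$'s into a single run near $h^*$ also fails: it would require $r(i) \le r(i_{h^*})$ for all $i < i_{h^*}$, and $r$ need not be monotone. A concrete instance: $r = (0,10,1)$, $c = (1,2,0)$, $q = (1,1),(2,1),(3,1),(3,2),(3,3)$. Then $P_4 = r(2)+c(3) = 10$ and the only $h$ with $P_4 \le Q_h$ is $h=2$, yet your main alignment occupies $k=1$ at $h=2$ and $k=1,2$ at $h=3,4$; there is no monotone patch that inserts $(4,2)$ and still respects the later main-alignment values.

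The paper splits instead at $t^* = \argmaxcan(M[q])$, the position of the \emph{maximum} of $q$, not the first hit of column $n$. The crucial fact is that $\max(M[\rightup{p}]) \le Q_{t^*}$: since $q$ traverses every row and every column, for each $k \le n$ the path $q$ is in column $k$ at some $h$, and $P_k = r(1)+c(k) \le r(i_h)+c(k) = Q_h \le Q_{t^*}$; similarly for $k \ge n$ using $c(n) = \min(c)$. Hence \emph{every} point $(\kappa, t^*)$ is good, and the entire transition of $\rightup{p}$ from its row-$1$ portion to its column-$n$ portion can be absorbed in a single vertical run at $h = t^*$, with the two phases on either side exactly as in your alignment ($k = j_h$ for $h < t^*$, $k = n+i_h-1$ for $h > t^*$) and no detours. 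In the example above $t^* = 2$ and the path $(1,1),(1,2),(2,2),(3,2),(4,2),(5,3),(5,4),(5,5)$ visits only good points. Replacing $h^*$ by $t^*$ and covering the transition entirely at step $t^*$ closes the gap and recovers the paper's three-part decomposition.
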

\begin{proof}
	\cref{lem:rightup} For an illustration of this proof see \cref{fig:rightup}. 
	Let $q$ be any path in $M$ and let $t^* \defeq \argmaxcan(q)$. Let furthermore $q(t^*) = (t^*_r, t^*_c)$.
	We divide $\rightup{p}$ and $q$ in three consecutive parts each to show that $\rightup{p}$ dominates $q$.
\begin{itemize}
	\item We let $\rightup{p}^1 \defeq \rightup{p}(1), \ldots, \rightup{p}(t^*_c - 1)$ and $q_1 \defeq q(1), \ldots, q(t^* - 1)$.
	\item We let $\rightup{p}^2 \defeq \rightup{p}(t^*_c), \ldots, \rightup{p}(n + t^*_r - 1)$ and $q_2 \defeq q(t^*)$.
	\item We let $\rightup{p}^3 \defeq \rightup{p}(n + t^*_r), \ldots, \rightup{p}(m + n - 1)$ and $q_3 \defeq q(t^* + 1), \ldots, q(\length(q))$.
\end{itemize}

	Since $r(1)$ is a minimum row in $M$, we have that for all $(k, \ell) \in [m] \times [n]$, $M[1, \ell] \le M[k, \ell]$. This implies that there is an extension $e_1$ of $\rightup{p}^1$ of length $t^* - 1$ such that $M[e_1] \le M[q_1]$. 
	Similarly, there is an extension $e_3$ of $\rightup{p}^3$ of length $\length(q) - t^*$ such that $M[e_3] \le M[q_3]$.
	Finally, let $f_2$ be an extension of $q_2$ that repeats its only element, $q(t^*)$, $n - t_c^* + t_r^*$ times. 
	Since $M[q(t^*)]$ is the maximum element on the sequence $M[q]$ and $r(1)$ is a minimum row and $c(n)$ a minimum column in $M$, 
	we have that $M[\rightup{p}^2] \le M[f_2]$.
	
	We define an extension $e$ of $\rightup{p}$ as $e \defeq e_1 \concat \rightup{p}^2 \concat e_3$ and an extension $f$ of $q$ as $f \defeq q_1 \concat f_2 \concat q_3$. Note that $\length(e) = \length(f) = \length(q) + n + t_r^* - (t_c^* + 1)$, and by the above discussion, we have that $M[e] \le M[f]$.
	\cref{lem:upright} follows from a symmetric argument.
\end{proof}

\subsection{The Split Lemma}\label{sec:lem:split}
In this section we prove the first main step towards the Merge Dominator Lemma. It is fairly intuitive that a dominating merge has to contain the minimum element of a merge matrix. (Otherwise, there is a path that cannot be dominated by that merge.) The Split Lemma states that in fact, we can split the matrix $M$ into two smaller submatrices, one that has the minimum element in the top right corner, and one the has the minimum element in the bottom left corner, compute a dominating path for each of them, and paste them together to obtain a dominating path for $M$.

\begin{figure}
	\centering
	\includegraphics[height=.3\textheight]{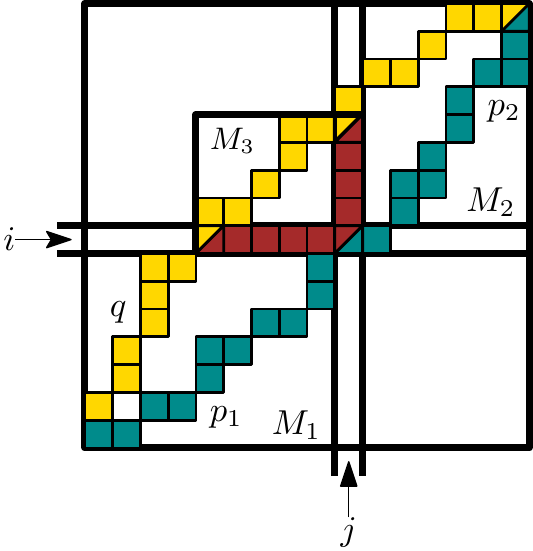}
	\caption{Situation in the proof of \cref{lem:split}.}
	\label{fig:split:bottom}
\end{figure}
\begin{lemma}[Split Lemma]\label{lem:split}
	Let $r$ and $c$ be integer sequences of length $m$ and $n$, respectively, and let $M$ be the merge matrix of $r$ and $c$. Let $i \in \argmin(r)$ and $j \in \argmin(c)$. Let $M_1 \defeq M[1..i,1..j]$ and $M_2 \defeq M[i..m,j..n]$ and for all $h \in [2]$, let $p_h \in \allpaths(M_t)$ be a dominating path in $M_h$, i.e.\ $p_h \dominates \allpaths(M_h)$. Then, $p_1 \concat p_2$ is a dominating path in $M$, i.e.\ $p_1 \concat p_2 \dominates \allpaths(M)$.
\end{lemma}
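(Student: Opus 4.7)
The plan is to show that $p_1 \concat p_2 \dominates q$ for every $q \in \allpaths(M)$. The key observation is that $M[i,j] = r(i) + c(j) = \min(r) + \min(c)$ is the global minimum entry of $M$, and the concatenation $p_1 \concat p_2$ passes through $(i,j)$. My strategy is to first replace an arbitrary $q$ by a path $q^*$ in $M$ that also passes through $(i,j)$ and satisfies $q^* \dominates q$, then split $q^*$ at $(i,j)$ into $q^*_1 \in \allpaths(M_1)$ and $q^*_2 \in \allpaths(M_2)$, and finally invoke the hypotheses $p_1 \dominates q^*_1$ and $p_2 \dominates q^*_2$ together with \cref{lem:BK}\cref{lem:BK:3:19} and transitivity of $\dominates$.

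To build $q^*$, let $u_1$ be the last index of $q$ with $q(u_1) \in M_1$ and $u_2$ be the first with $q(u_2) \in M_2$. Because the coordinates along $q$ are nondecreasing, if $u_1 \ge u_2$ the path $q$ already passes through $(i,j)$ and we take $q^* = q$. Otherwise the same monotonicity forces the segment $q(u_1+1), \ldots, q(u_2-1)$ to lie entirely in one of the two off-diagonal quadrants $[1..i-1] \times [j+1..n]$ or $[i+1..m] \times [1..j-1]$; say the former, the latter being symmetric. A short case analysis of how $q$ exits $M_1$ and enters $M_2$ shows that then $q(u_1) = (a_1, j)$ with $a_1 < i$ and $q(u_2) = (i, b_2)$ with $b_2 > j$, and I replace the segment from $q(u_1)$ to $q(u_2)$ by the L-shaped detour $(a_1, j), (a_1+1, j), \ldots, (i, j), (i, j+1), \ldots, (i, b_2)$ through $(i, j)$. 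To see that the detour dominates the original segment, I restrict attention to the sub-matrix $M' \defeq M[a_1..i, j..b_2]$; in $M'$ the minimum row is the last row and the minimum column is the first column, and the detour is precisely $\upright{p}(M')$, so \cref{lem:special}\cref{lem:upright} yields that it dominates every path in $M'$, including the middle of $q$. The symmetric case uses \cref{lem:special}\cref{lem:rightup}. The degenerate sub-case in which $q$ takes a single diagonal step cutting the corner at $(i,j)$, i.e.\ $u_2 = u_1 + 1$ with neither endpoint equal to $(i,j)$, is handled by inserting $(i, j)$ between $q(u_1)$ and $q(u_2)$; since $M[i,j]$ is the global minimum, this insertion preserves domination. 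Because $q$ and $q^*$ agree on the prefix up to $q(u_1)$ and the suffix from $q(u_2)$, two applications of \cref{lem:BK}\cref{lem:BK:3:19} promote the middle-segment domination to $q^* \dominates q$.

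Since $q^*$ passes through $(i,j)$, monotonicity forces its prefix up to and including $(i,j)$ to lie in $M_1$ and its suffix from $(i,j)$ to lie in $M_2$; call these $q^*_1$ and $q^*_2$. By assumption $p_1 \dominates q^*_1$ and $p_2 \dominates q^*_2$, so \cref{lem:BK}\cref{lem:BK:3:19} gives $p_1 \concat p_2 \dominates q^*_1 \concat q^*_2$. The latter integer sequence is equivalent to $M[q^*]$: the splice merely produces two consecutive copies of the minimum $M[i,j]$ where $q^*$ had one, and such a consecutive duplicate is invisible under $\dominates$. Transitivity of $\dominates$ then delivers $p_1 \concat p_2 \dominates q$, as required.

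The main obstacle is the detour step: replacing a chunk of $q$ by an L-shaped path through $(i,j)$ changes the length of the path, so the desired pointwise comparison cannot be read off directly and must be argued via extensions. The structural hypothesis that $i$ and $j$ simultaneously index minima of $r$ and $c$ is exactly what makes \cref{lem:special} applicable to the off-diagonal sub-matrix $M'$, turning this step into a one-line invocation once $M'$ is correctly identified. Everything else is bookkeeping of prefixes, suffixes, and the harmless duplication of the minimum at the splice point.
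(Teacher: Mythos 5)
Your proof is correct and arrives at the same dominating path, but by a genuinely cleaner route than the paper's. Both proofs dispatch the easy case where $q$ already visits $(i,j)$, and in the remaining case both introduce the L-shaped detour through $(i,j)$ and invoke \cref{lem:special} on the rectangular submatrix spanned by the exit and re-entry points to show the detour dominates the middle segment of $q$. The divergence is in how this fact is combined with $p_1 \dominates q_1^+$ and $p_2 \dominates q_2^+$. The paper drops immediately to the level of extensions and carries out an intricate inductive stitching construction, necessitated by the fact that $q_1^+$ and the L-detour $p_{12}$ share the horizontal leg (and $q_2^+$ and $p_{12}$ the vertical leg), so the three pairs of witnessing extensions have to be aligned on the overlaps. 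You instead build the intermediate path $q^*$ routing $q$ through $(i,j)$, prove $q^* \dominates q$ via \cref{lem:BK}\cref{lem:BK:3:19} applied to the non-overlapping prefix/middle/suffix decomposition, split $q^*$ cleanly at $(i,j)$, apply \cref{lem:BK}\cref{lem:BK:3:19} once more for $p_1 \concat p_2 \dominates q^*_1 \concat q^*_2$, and finish with transitivity of $\dominates$. In effect you replace the paper's explicit extension surgery with a higher-level use of transitivity, which is a real simplification while relying on exactly the same ingredients. The one point worth writing out a little more carefully is the seam: $q^*_1 \concat q^*_2$ is not literally $q^*$ but carries an extra consecutive copy of $M[i,j]$; since removing a consecutive repetition leaves the typical sequence unchanged, \cref{lem:BK}\cref{cor:BK:3:11} gives $q^*_1 \concat q^*_2 \equiv q^*$, and the transitivity chain closes as you claim.
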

\begin{proof}
	Let $q$ be any path in $M$. If $q$ contains $(i, j)$, then $q$ has two consecutive parts, say $q_1$ and $q_2$, such that 
	$q_1 \in \allpaths(M_1)$ and $q_2 \in \allpaths(M_2)$. 
	Hence, $p_1 \dominates q_1$ and $p_2 \dominates q_2$, 
	so by \cref{lem:BK}\cref{lem:BK:3:19}, $p_1 \concat p_2 \dominates q_1 \concat q_2$.
	
	Now let $p \defeq p_1 \concat p_2$ and suppose $q$ does not contain $(i, j)$. Then, $q$ either contains some $(i, j')$ with $j' < j$, 
	or some $(i', j)$, for some $i' < i$. 
	We show how to construct extensions of $p$ and $q$ that witness that $p$ dominates $q$ in the first case, 
	and remark that the second case can be shown symmetrically. 
	We illustrate this situation in~\cref{fig:split:bottom}.
	
	Suppose that $q$ contains $(i, j')$ with $j' < j$. We show that $p \dominates q$.
	First,
	$q$ also contains some $(i', j)$, where $i' > i$. 
	Let $h_1$ be the index of $(i, j')$ in $q$, i.e.\ $q(h_1) = (i, j')$, 
	and $h_2$ denote the index of $(i', j)$ in $q$, i.e.\ $q(h_2) = (i', j)$.		
	We derive the following sequences from $q$.
	\begin{itemize}
		\item We let $q_1 \defeq q(1), \ldots, q(h_1)$ and $q_1^+ \defeq q_1 \concat (i, j' + 1), \ldots, (i, j)$.
		\item We let $q_{12} \defeq q(h_1), \ldots, q(h_2)$.
		\item We let $q_2 \defeq q(h_2), \ldots, q(\length(q))$ and $q_2^+ \defeq (i, j), (i + 1, j), \ldots, (i', j) \concat q_2$.
	\end{itemize}

	Since $q_1^+ \in \allpaths(M_1)$ and $p_1 \dominates \allpaths(M_1)$, we have that $p_1 \dominates q_1^+$, similarly that $p_2 \dominates q_2^+$ and considering $M_3 \defeq M[i'..i,j..j']$, we have by \cref{lem:special}\cref{lem:rightup} that 
	$p_{12} \defeq \rightup{p}(M_3) = (i, j'), (i, j' + 1), \ldots, (i, j), (i+1, j), \ldots, (i', j)$ dominates $q_{12}$. 
	Consequently, we consider the following extensions of these sequences.
	
	\begin{enumerate}[label={(\Roman*)},ref={(\Roman*)}]
		\item\label{claim:extension:1} We let $e_1 \in \extensions(p_1)$ and $f_1 \in \extensions(q_1^+)$ such that $\length(e_1) = \length(f_1)$ and $M[e_1] \le M[f_1]$.
		\item\label{claim:extension:12} We let $e_{12} \in \extensions(p_{12})$, and $f_{12} \in \extensions(q_{12})$ such that $\length(e_{12}) = \length(f_{12})$ and $M[e_{12}] \le M[f_{12}]$.
		\item\label{claim:extension:2} We let $e_2 \in \extensions(p_2)$, and $f_2 \in \extensions(q_2^+)$ such that $\length(e_2) = \length(f_2)$ and $M[e_2] \le M[f_2]$.
	\end{enumerate}

	We construct extensions $e' \in \extensions(p)$ and $f' \in \extensions(q)$ as follows. 
	Let $z$ be the last index in $q$ of any element that is matched up with $(i, j)$ 
	in the extensions of \cref{claim:extension:12}. 
	(Following the proof of \cref{lem:special}, this would mean $z$ is the index of $\max(q_{12})$ in $q$.)
	We first construct a pair of extensions $e_j' \in E(p_1)$, and $f_j' \in E(q[1..z])$ 
	with $\length(e_j') = \length(f_j')$ and $M[e_j'] \le M[f_j']$.
	With a symmetric procedure,
	we can obtain extensions of $p_2$ and of $q[(z+1)..\length(q)]$, and use them to obtain extensions of 
	$p = p_1 \concat p_2$ and $q = q[1..z] \concat q[(z+1)..\length(q)]$ witnessing that $p \dominates q$.
	
	We give the details of the first part of the construction.
	Let $a$ be the index of 
	the last repetition in $f_1$ of $q(h_1 - 1)$, i.e.\ the index that appears just before $q(h_1) = (i, j')$ in $f_1$. 
	We let $e_{j'-1}'[1..a] \defeq e_1[1..a]$ and $f_{j'-1}'[1..a] \defeq f_1[1..a]$. By \cref{claim:extension:1}, 
	$M[e_{j'-1}'] \le M[f_{j'-1}']$.
	\begin{figure}
		\centering
		\includegraphics[width=.85\textwidth]{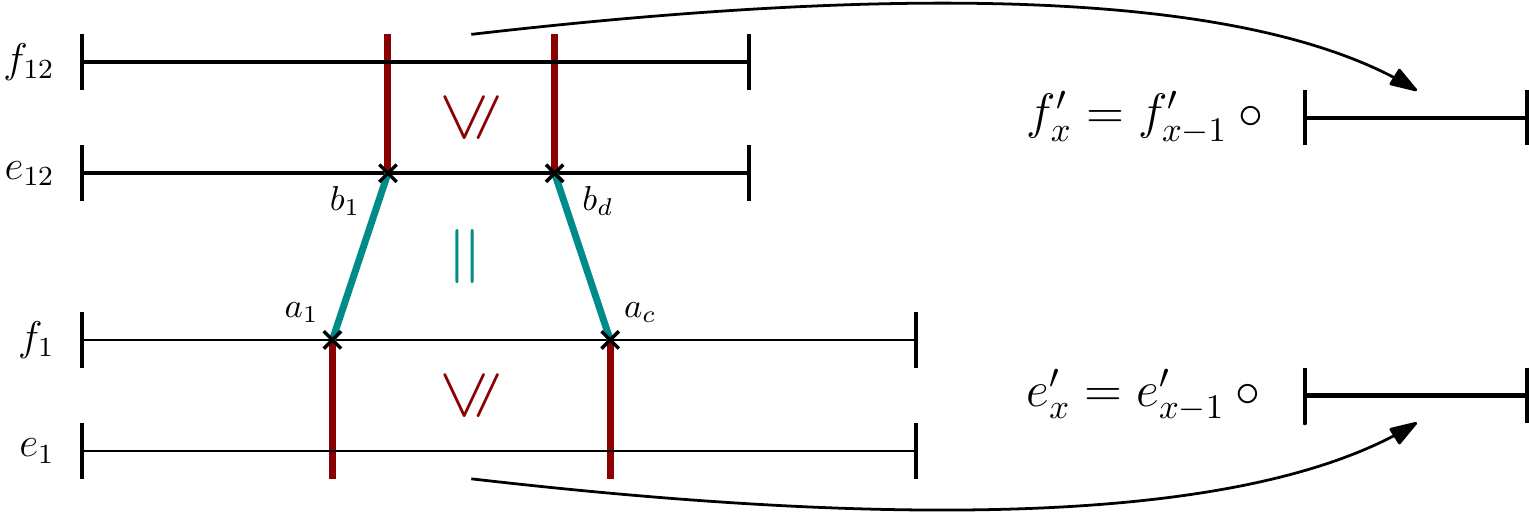}
		\caption{Constructing extensions in the proof of \cref{lem:split}.}
		\label{fig:stitching}
	\end{figure}
	
	For $x = j', j' + 1, \ldots, j$, we inductively construct $e_x'$ and $f_x'$ using $e_{x-1}'$ and $f_{x-1}'$, 
	for an illustration see~\cref{fig:stitching}. 
	We maintain as an invariant that $\length(e_{x-1}') = \length(f_{x-1}')$ and that $M[e_{x-1}'] \le M[f_{x-1}']$.
	Let $a_1, \ldots, a_c$ denote the indices of the occurrences of $(i, x)$ in $f_1$, 
	and $b_1, \ldots, b_d$ denote the indices of the occurrences of $(i, x)$ in $e_{12}$. We let:
	\begin{align*}
		\begin{array}{ll}
			e_x' \defeq e_{x-1}' \concat e_1[a_1, \ldots, a_c] \mbox{ and } f_x' \defeq f_{x-1}' \concat f_{12}[b_1, \ldots, b_d], 
				&\mbox{ if } c = d \\
			e_x' \defeq e_{x-1}' \concat e_1[a_1, \ldots, a_c] \concat \overbrace{e_1(a_c), \ldots, e_1(a_c)}^{d - c \mbox{ times}} \mbox{ and } f_x' \defeq f_{x-1}' \concat f_{12}[b_1, \ldots, b_d], 
			 	&\mbox{ if } c < d \\
			e_x' \defeq e_{x-1}' \concat e_1[a_1, \ldots, a_c] \mbox{ and } f_x' \defeq f_{x-1}' \concat f_{12}[b_1, \ldots, b_d] \concat \overbrace{f_{12}(b_d), \ldots, f_{12}(b_d)}^{c - d \mbox{ times}}, &\mbox{ if } c > d
		\end{array}
	\end{align*}

	In each case, we extended $e_{x-1}'$ and $f_{x-1}'$ by the same number of elements; furthermore we know by \cref{claim:extension:1} that for $y \in \{a_1, \ldots, a_c\}$, $M[e_1(y)] \le M[f_1(y)]$, by choice we have that for all $y' \in \{b_1, \ldots, b_d\}$, $f_1(y) = e_{12}(y')$ and we know that $M[e_{12}(y')] \le M[f_{12}(y')]$ by \cref{claim:extension:12}. 
	Hence, $M[e_x'] \le M[f_x']$ in either of the above cases.
	In the end of this process, we have $e_j' \in \extensions(p_1)$ and 
	$f_j' \in \extensions(q[1..z])$, and
	by construction, $\length(e_j') = \length(f_j')$ and $M[e_j'] \le M[f_j']$.
\end{proof}

\subsection{The Chop Lemmas}\label{sec:lem:chop}

Assume the notation of the Split Lemma. If we were to apply it recursively, it only yields a size reduction whenever $(i, j) \notin \{(1, 1), (m, n)\}$. Motivated by this issue, we prove two more lemmas to deal with the cases when $(i, j) \in \{(1, 1), (m, n)\}$, and we coin them the `Chop Lemmas'. It will turn out that when applied to typical sequences, a repeated application of these lemmas yields a dominating path in $M$. This insight crucially helps in arguing that the dominating path in a merge matrix can be found in \emph{linear} time.
Before we present their statements and proofs, we need another auxiliary lemma.

\begin{lemma}\label{lem:special:corners}
	Let $r$ and $c$ be integer sequences of length $m$ and $n$, respectively, and let $M$ be the merge matrix of $r$ and $c$. 
	Let $i \in \argmin(r)$ and $j \in \argmin(c)$. 
	Let furthermore $k \in \argmin(r[\{1, \ldots, m\} \setminus \{i\}])$
		and $\ell \in \argmin(c[\{1, \ldots, n\} \setminus \{i\})$
	Let $\{p^*, q^*\} = \{\rightup{p}, \upright{p}\}$ such that $\max(M[p^*]) \le \max(M[q^*])$.
	\begin{enumerate}
		\item\label{lem:special:corners:topright} If $i = m$, $j = n$, $k = 1$, and $\ell = 1$, then $p^* \dominates \allpaths(M)$.
		\item\label{lem:special:corners:bottomleft} If $i = 1$, $j = 1$, $k = m$, and $\ell = n$, then $p^* \dominates \allpaths(M)$.
	\end{enumerate}
\end{lemma}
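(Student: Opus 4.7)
I would prove case 1 in detail; case 2 follows by applying the same argument after reversing the orderings of both $r$ and $c$, which exchanges $(1,1)$ with $(m,n)$ and swaps the roles of $\rightup{p}$ and $\upright{p}$. Under the hypotheses of case 1, we have $r(m) \le r(1) \le r(i)$ for all $i \in [2..m-1]$ and the analogous inequalities for $c$; as a consequence, both $\rightup{p}$ and $\upright{p}$ start at $(1,1)$ with value $r(1)+c(1)$ and end at $(m,n)$ with value $r(m)+c(n) = \min(M)$, so they differ only in the "middle" of their L-shape. Let $v^* \defeq \max(M[p^*])$.

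The crux of the proof will be a sub-claim stating that every $q \in \allpaths(M)$ attains value $\ge v^*$ at some position. To prove it, observe that since path steps are restricted to up, right, and diagonal moves, $\rightup{p}$ is the \emph{unique} path visiting $(1,n)$ and $\upright{p}$ is the \emph{unique} path visiting $(m,1)$. Every other path $q$ therefore visits column $\argmax(c)$ at some row $i^c \ge 2$ and row $\argmax(r)$ at some column $j^r \ge 2$. Moreover, the topology of paths rules out the simultaneous occurrence of $i^c = m$ and $j^r = n$: if $q$ visits $(\argmax(r), n)$, then $q$ is confined to column $n$ from that row onwards, so it cannot visit $(m, \argmax(c))$ for $\argmax(c) < n$. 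A case analysis on which of $r(1)+\max(c)$, $\max(r)+c(n)$, $r(m)+\max(c)$, $\max(r)+c(1)$ determines $v^* = \min(\max(M[\rightup{p}]), \max(M[\upright{p}]))$, combined with the observation that $r(i^c) \ge r(1)$ and $c(j^r) \ge c(1)$ by the second-smallest hypotheses, then yields the desired bound.

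Given the sub-claim, I would construct explicit witness extensions showing $p^* \dominates q$. For any $q$, let $j_0$ be a position with $M[q](j_0) \ge v^*$. Extend $M[q]$ by repeating the value $M[q](j_0)$ many times, creating a long plateau of value $\ge v^*$; extend $M[p^*]$ so that its entire middle portion (every value of which is bounded by $v^*$) is aligned with this plateau, and the shared starting value $r(1)+c(1)$ and ending value $r(m)+c(n)$ anchor the extensions at the boundaries. For the non-plateau prefix and suffix of $M[q]$, I would decompose the problem using \cref{lem:BK}\cref{lem:BK:3:19} and apply \cref{lem:special} to the sub-matrices $M[1..i_0, 1..j_0]$ and $M[i_0..m, j_0..n]$, where $(i_0, j_0)$ is the cell on $q$ achieving value $\ge v^*$; once the peak is fixed, each sub-matrix has its minimum in a corner antipodal to $(1,1)$ or $(m,n)$, which is precisely the setting \cref{lem:special} handles.

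The main obstacle I expect lies in two places: first, completing the case analysis for the sub-claim cleanly across all four possible expressions for $v^*$, where each case requires a slightly different combination of the two-smallest hypotheses; and second, carefully stitching the middle-plateau alignment with the prefix/suffix dominations while maintaining compatible extension lengths. Handling the latter cleanly will likely require using \cref{lem:BK}\cref{lem:BK:3:19} iteratively to concatenate sub-path dominations and leaning on the shared endpoints to ensure boundary compatibility.
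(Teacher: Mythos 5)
Your high-level skeleton is close to the paper's: you isolate the claim that every path $q$ satisfies $\max(M[q]) \ge v^*$, and then try to convert this max-bound into a domination $p^* \dominates q$. The paper does the same thing in two steps (show at least one of $\rightup{p},\upright{p}$ dominates each $q$, then that $p^*$ dominates $q^*$), and the engine that converts the max-bound into a domination is an explicit Claim (``claim:corners:max'': if $p\in\{\rightup{p},\upright{p}\}$ and $\max(M[p])\le\max(M[q])$ then $p\dominates q$), which the paper proves as a careful variant of the extension construction in \cref{lem:special}. However, your write-up has two concrete problems.

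First, the sub-claim's justification does not go through as stated. From ``$\rightup{p}$ is the unique path through $(1,n)$'' you conclude that every other path visits column $\argmax(c)$ at some row $i^c\ge 2$. That inference is false: a path can pass through $(1,\argmax(c))$ (when $\argmax(c)<n$) and then turn upward, never reaching $(1,n)$. The paper's Step~1 argument avoids this by working with the position $\ell^*\in\argmax(M[\rightup{p}])$ and using the assumption $\max(M[q])<\max(M[\rightup{p}])$ to force $q$ to pass through $(m,\ell^*)$, which then pins down where $q$ crosses row $k^*\in\argmax(M[\upright{p}])$; this chain is what you would need to replicate.

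Second, and more seriously, the extension construction is wrong. You claim that, once $(i_0,j_0)$ is fixed as a cell where $M[q]\ge v^*$, the submatrix $M[1..i_0,1..j_0]$ ``has its minimum in a corner antipodal to $(1,1)$'' and hence \cref{lem:special} applies. But $(i_0,j_0)$ is where $M[q]$ is \emph{large}, not small; under the hypotheses of the lemma (with $i_0<m$, $j_0<n$, which one can argue), the minimum of $M[1..i_0,1..j_0]$ sits at $(1,1)$, the corner you start from, not at $(i_0,j_0)$. So \cref{lem:special} is not applicable to these submatrices — its hypothesis (min row and min column at opposite ends) simply does not hold. And when $i_0=m$ or $j_0=n$ (which does occur, e.g.\ when the max of $q$ is attained in the last column), the submatrix minimum is at $(1,j_0)$ or $(i_0,1)$, yet again not what you assert, and the ``plateau plus repeated corner cells'' alignment also breaks because $q_1$ may then contain cells such as $(1,n)$ with $M[1,n]<M[1,1]$. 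What is actually needed here is the paper's targeted claim that $\max(M[p])\le\max(M[q])$ already implies $p\dominates q$, proved by re-running the alignment of \cref{lem:special}'s proof (split at $\argmaxcan(q)$) while observing that $q$ cannot enter row $m$ before its maximum under the standing hypotheses; your submatrix route does not recover that argument.
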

\begin{proof}
	\cref{lem:special:corners:topright}. First, we may assume that $r(1) > r(m)$ and that $c(1) > c(n)$, otherwise we could have applied one of the cases of \cref{lem:special}. We prove the lemma in two steps:
	\begin{enumerate}[label={\arabic*.},ref={\arabic*}]
		\item\label{corners:step1} We show that for each path $q$ in $M$, $\rightup{p}$ or $\upright{p}$ (or both) dominate(s) $q$.
		\item\label{corners:step2} We show that $\rightup{p}$ dominates $\upright{p}$, or vice versa, or both (depending on which case we are in).
	\end{enumerate}
\begin{figure}
	\centering
		\includegraphics[height=.3\textheight]{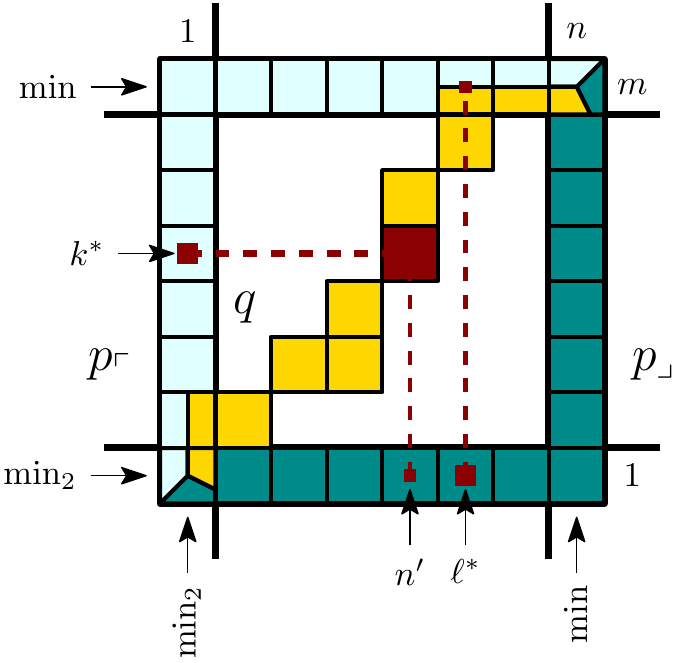}
		\caption{Situation in the first stage of the proof of \cref{lem:special:corners}\cref{lem:special:corners:topright}.
			The row and column labeled `$\min$' contains the minimum element from the respective sequence, 
			and the row and column labeled `$\kmin{2}$' contains the minimum element among all elements except 
			the one in the $\min$-row or -column.}
		\label{fig:corners}
\end{figure}
	The following claim will be useful in both steps and can be seen as a slight generalization of \cref{lem:special}.
	\begin{nestedclaim}\label{claim:corners:max}
		Let $q \in \allpaths(M)$ and let $p \in \{\rightup{p}, \upright{p}\}$. If $\max(M[p]) \le \max(M[q])$, then $p \dominates q$.
	\end{nestedclaim}
	\begin{claimproof}
		Suppose that $p = \rightup{p}$, the other case is symmetric. The claim can be shown using the same argument as in \cref{lem:special}, paying slight attention to the situation in which the maximum value of $q$ is in row $m$, which implies that the maximum of $\rightup{p}$ is in the same column.
	\end{claimproof}
	We prove Step \cref{corners:step1}. For the following argument, see \cref{fig:corners}. If $\max(M[q]) \ge \max(M[\rightup{p}])$, then we conclude by \cref{claim:corners:max} that $\rightup{p} \dominates q$ and we are done with Step \cref{corners:step1} of the proof. Suppose 
	\begin{align}\label{eq:max}
		\max(M[q]) < \max(M[\rightup{p}])
	\end{align}
	and let $\ell^* \in \argmax(M[\rightup{p}])$. 
	We may assume that $\ell^* < n$: otherwise, \eqref{eq:max} cannot be satisfied since $n \in \argmin(c)$.
	We furthermore have that $q$ contains $(m, \ell^*)$, since $\rightup{p}$ contains $(1, \ell^*)$, and
	$m$ is the only position in which $r$ is (potentially) smaller than $r(1)$.
	Therefore, this is the only way in which \eqref{eq:max} can be satisfied. 
	
	Now let $k^* \in \argmax(M[\upright{p}])$. As above, we may assume that $k^* < m$. Now, since $q$ contains $(m, \ell^*)$, we have that $q$ also contains $(k^*, n')$ for some $n' < n$. It follows that
	\begin{align*}
		\max(M[\upright{p}]) = M[k^*, 1] \le M[k^*, n'] \le \max(M[q])
	\end{align*}
	where the first inequality follows from the fact that $r(1) \le r(n')$ for all $n' < n$. By \cref{claim:corners:max}, $\upright{p}$ dominates $q$ and we finished Step \cref{corners:step1} of the proof. Step \cref{corners:step2} follows from another application of \cref{claim:corners:max} and the lemma follows from transitivity of the domination relation. This proves \cref{lem:special:corners:topright}, and \cref{lem:special:corners:bottomleft} follows from a symmetric argument.
\end{proof}

\begin{remark}
	We would like to stress that up to this point, all results in this section were shown in terms of arbitrary integer sequences. For the next lemma, we require the sequences considered to be \emph{typical sequences}. In \cref{sec:split-chop:arbitrary} we will generalize the results that rely on the following lemmas to arbitrary integer sequences.
\end{remark}

We are now ready to prove the Chop Lemmas. They come in two versions, one that is suited for the case of the bottom left submatrix after an application of the Split Lemma to $M$, and one for the top right submatrix. In the former case, we have that the last row is a minimum row and that the last column is a minimum column. We will prove this lemma in more detail and observe that the other case follows by symmetry with the arguments given in the following proof.
For an illustration of the setting in the following lemma, see \cref{fig:chop-general}.
\begin{figure}
	\begin{subfigure}[t]{.4\textwidth}
		\centering
		\includegraphics[height=.18\textheight]{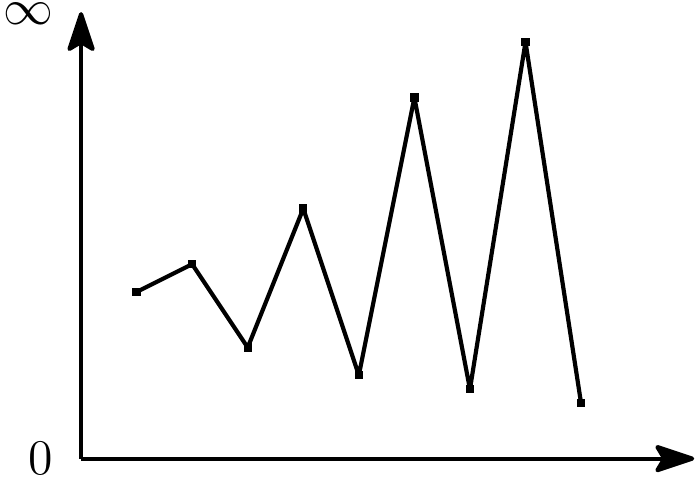}
		\caption{Typical sequence ending in the minimum.}
		\label{fig:typseq:end:min}
	\end{subfigure}
	\hfill
	\begin{subfigure}[t]{.58\textwidth}
		\centering
		\includegraphics[height=.32\textheight]{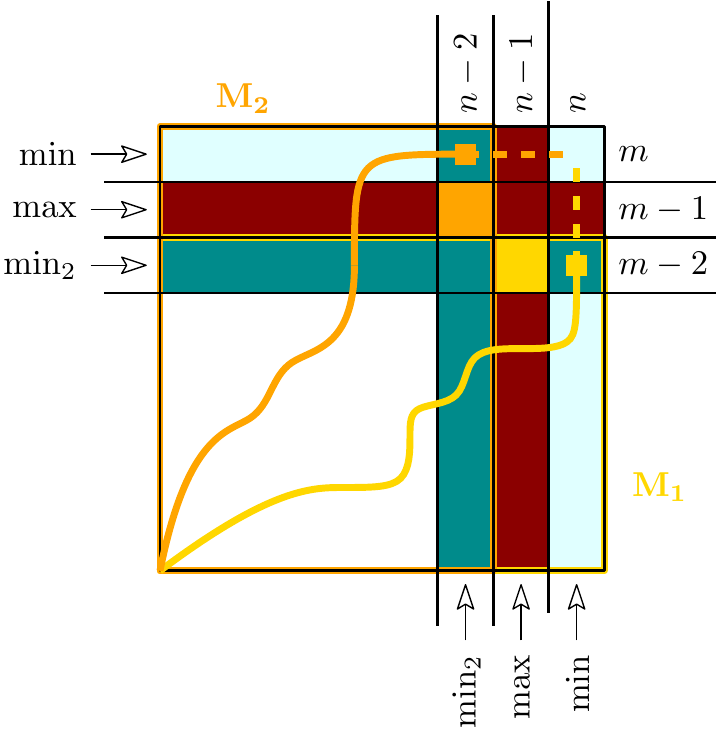}
		\caption{The basic setup in \cref{lem:chop}.}
		\label{fig:chop-general}
	\end{subfigure}
	\caption{Visual aides to the proof of \cref{lem:chop}.}
\end{figure}
\begin{lemma}[Chop Lemma - Bottom]\label{lem:chop}
	Let $r$ and $c$ be typical sequences of length $m \ge 3$ and $n \ge 3$, respectively, 
	and let $M$ be the merge matrix of $r$ and $c$. 
	Suppose that $m \in \argmin(r)$ and $n \in \argmin(c)$ 
	and let $M_1 \defeq M[1..(m-2),1..n]$ and $M_2 \defeq M[1..m,1..(n-2)]$ 
	and for all $h \in [2]$, let $p_h \dominates \allpaths(M_h)$.
	Let $p_1^+ \defeq p_1 \concat (m-1, n), (m, n)$ and $p_2^+ \defeq p_2 \concat (m, n-1), (m, n)$.
	\begin{enumerate}
		\item\label{lem:chop:1} If $M[m-2, n-1] \le M[m-1, n-2]$, then $p_1^+ \dominates \allpaths(M)$.
		\item\label{lem:chop:2} If $M[m-1, n-2] \le M[m-2,n-1]$, then $p_2^+ \dominates \allpaths(M)$.
	\end{enumerate}
\end{lemma}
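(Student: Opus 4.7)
I establish \cref{lem:chop:1}; case \cref{lem:chop:2} follows by a symmetric argument swapping rows and columns. Fix any $q \in \allpaths(M)$ and aim at $p_1^+ \dominates q$. Let $h^*$ be the smallest index with $q(h^*)$ in rows $\{m-1, m\}$; since a lattice path advances at most one row per step, necessarily $q(h^*-1) = (m-2, j^*)$ for some column $j^*$ and $q(h^*) \in \{(m-1, j^*), (m-1, j^*+1)\}$. Split $q = \alpha \concat \gamma$ at this transition. The rerouting idea is to replace $\gamma$ by a detour that first completes row $m-2$ to column $n$ and then descends column $n$: set
\[
    q_0 \defeq \alpha \concat (m-2, j^*+1), \ldots, (m-2, n) \in \allpaths(M_1), \qquad
    q^* \defeq q_0 \concat (m-1, n), (m, n).
\]
Since $p_1 \dominates q_0$, \cref{lem:BK}\cref{lem:BK:3:19} gives $p_1^+ \dominates q^*$, and applying \cref{lem:BK}\cref{lem:BK:3:19} once more with $\alpha \dominates \alpha$ reduces the remaining task $q^* \dominates q$ to the \emph{tail-domination}
\[
    L \defeq (m-2, j^*+1), \ldots, (m-2, n), (m-1, n), (m, n) \;\dominates\; \gamma.
\]
The case $j^* = n$ is trivial ($L = \gamma$), so assume $j^* \le n - 1$.

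\paragraph{Bounding $\gamma$ by a corner path.}
Let $j_0 \in \{j^*, j^*+1\}$ be the starting column of $\gamma$ in row $m-1$ and set $N \defeq M[(m-1)..m, j_0..n]$. The minimum of $N$ lies at its bottom-right corner; moreover $r(m) \le r(m-1)$ and $c(j_0) \le c(n-1)$ together force $\max M[\upright{p}(N)] \le M[m-1, n-1] = \max M[\rightup{p}(N)]$, so \cref{lem:special:corners}\cref{lem:special:corners:topright} applied to $N$ yields $\upright{p}(N) \dominates \allpaths(N) \ni \gamma$. By transitivity it now suffices to show $L \dominates \upright{p}(N)$.

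\paragraph{Key inequality and matching extensions.}
The central algebraic fact is
\[
    r(m-2) + c(k) \;\le\; r(m-1) + c(j_0) \qquad \text{for every } k \in [j^*+1, n]. \quad (\star)
\]
Since $c$ is typical with $n \in \argmin(c)$, \cref{cor:typseq:min:shape} gives $c(n-1) = \max c[1..n-1]$ and $c(n-2) = \min c[1..n-1]$; hence $c(k) \le c(n-1)$ for all $k$ (trivial for $k = n$) and $c(j_0) \ge c(n-2)$ when $j_0 \le n-1$. The boundary case $j_0 = n$ forces $k = n$, where $(\star)$ reduces to $r(m-2) \le r(m-1)$; otherwise, rewriting $(\star)$ as $c(k) - c(j_0) \le r(m-1) - r(m-2)$ and using the above bounds together with the hypothesis $c(n-1) - c(n-2) \le r(m-1) - r(m-2)$ of case \cref{lem:chop:1} yields $(\star)$. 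With $(\star)$ in hand I build matching extensions: extend $\upright{p}(N)$ by repeating its first entry $(m-1, j_0)$ enough times to cover both the row-$(m-2)$ portion of $L$ and the cell $(m-1, n)$, and extend $L$ by repeating its last entry $(m, n)$ so that the two extensions share a common length. In the repeated-$(m-1, j_0)$ block the $L$-entries are either $r(m-2) + c(k)$ (bounded by $(\star)$), $r(m-1) + c(n) \le r(m-1) + c(j_0)$, or $r(m) + c(n) \le r(m-1) + c(j_0)$; in the row-$m$ tail of $\upright{p}(N)$ every $L$-entry equals $r(m) + c(n) \le r(m) + c(k')$ since $c(n) = \min c$. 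This establishes $L \dominates \upright{p}(N)$, and chaining yields $p_1^+ \dominates q^* \dominates q$.

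\paragraph{Main obstacle.}
The crux is establishing $(\star)$ cleanly from the hypothesis of case \cref{lem:chop:1} together with the zigzag shape of $c$ near its minimum end guaranteed by \cref{cor:typseq:min:shape}. Absent the hypothesis, the entry $r(m-2) + c(n-1)$ can exceed $r(m-1) + c(j_0)$, which breaks the extension argument---and is precisely why the opposite inequality triggers case \cref{lem:chop:2} and the mirror path $p_2^+$ instead. Once $(\star)$ is in hand, the extension-length bookkeeping and element-wise verification are mechanical.
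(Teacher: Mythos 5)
Your proof takes a genuinely different route from the paper's. The paper first shows that every path $q$ is dominated by at least one of $p_1^+$ and $p_2^+$, and then (in the bulk of the argument, via a sequence of ``push'' claims exploiting the zigzag shape of $c$) that $p_1^+ \dominates p_2^+$, whence $p_1^+$ dominates everything by transitivity. You instead reroute each path $q$ directly: split at the first crossing into rows $\{m-1,m\}$, replace the tail by a detour along row $m-2$ and column $n$, and argue the detour dominates the original tail. This is arguably cleaner and avoids the ``push'' machinery; the key inequality $(\star)$ you isolate is essentially the content of the paper's Claim~\ref{claim:chop:push:min}.

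There is, however, one genuine gap: the step ``\cref{lem:special:corners}\cref{lem:special:corners:topright} applied to $N$ yields $\upright{p}(N) \dominates \allpaths(N)$''. For the $2\times(n-j_0+1)$ submatrix $N = M[(m-1)..m,\, j_0..n]$, Lemma~\ref{lem:special:corners}\cref{lem:special:corners:topright} requires not only that the minimum of $N$ sit at the top-right corner and that $\max M[\upright{p}(N)] \le \max M[\rightup{p}(N)]$, but also that the \emph{first} column of $N$ attain the minimum of the remaining columns, i.e.\ $j_0 \in \argmin(c[j_0..(n-1)])$. Since $c$ zigzags (\cref{cor:typseq:min:shape}), this fails precisely when $j_0$ sits on a local maximum of $c$ (odd distance from $n$); for instance $j_0 = n-1$ gives $c(j_0) = \max(c)$. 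So the lemma's hypotheses are not met and it cannot be invoked here. The statement you need is nevertheless true, but it requires its own argument: since $N$ has only two rows with $r(m) \le r(m-1)$, one can take any $\gamma \in \allpaths(N)$ with crossover column $a^*$, repeat $\gamma$'s first element $(m-1,j_0)$ once and (if $\gamma$ uses a vertical step) repeat $\upright{p}(N)$'s element $(m,a^*)$ once, after which the two sequences align column-by-column and the comparison $r(m)+c(j) \le r(m-1)+c(j)$ gives pointwise domination. With this replacement, the rest of your argument --- the reduction via Lemma~\ref{lem:BK}\cref{lem:BK:3:19}, the inequality $(\star)$, and the extension construction for $L \dominates \upright{p}(N)$ --- goes through.
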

\begin{proof}
	Let $s \in \{r, c\}$. Since $s$ is a typical sequence and $\length(s) \in \argmin(s)$, 
	we know by \cref{cor:typseq:min:shape} that for all
	$k \in \left[\left\lfloor \length(s)/2 \right\rfloor\right]$,
	\begin{align*}
		\length(s) - 2k + 1 \in \argmax(s[1..(\length(s)-2k+1)]) \mbox{ and } \length(s) - 2k \in \argmin(s[1..(\length(s)-2k)]).
	\end{align*}
	
	Informally speaking, this means that the last element of $s$ is the minimum, the $(\length(s)-1)$-th element of $s$ is the maximum, the $(\length(s)-2)$-th element is `second-smallest' element, and so on. 
	We will therefore refer to the element at position $\length(s)-2k$ ($2k \le \length(s)$) as `$\kmin{k+1}(s)$' 
	(note that the minimum is achieved when $k = 0$, hence the `$+1$'), 
	and elements at position $\length(s)-2k+1$ ($2k+1 \le \length(s)-1$) as `$\kmax{k}(s)$'.
	For an illustration of the shape of $s$ see \cref{fig:typseq:end:min} and 
	for an illustration of the basic setting of this proof see \cref{fig:chop-general}. 
	We prove~\cref{lem:chop:1} and remark that the argument for~\cref{lem:chop:2} is symmetric.
	
	First, we show that each path in $M$ is dominated by at least one of $p_1^+$ and $p_2^+$.
	\begin{nestedclaim}\label{obs:p1star:p2star}
		Let $q \in \allpaths(M)$. Then, for some $r \in [2]$, $p_r^+ \dominates q$.
	\end{nestedclaim}
	\begin{claimproof}
		We may assume that $q$ does not contain $(m-1, n-1)$: if so, we could easily obtain a path $q'$ from $q$ by some local replacements such that $q'$ dominates $q$, since $M[m-1, n-1]$ is the maximum element of the matrix $M$. We may assume that $q$ either contains $(m-1, n)$ or $(m, n - 1)$. Assume that the former holds, and note that an argument for the latter case can be given analogously. 
		Since $q$ contains $(m-1, n)$, and since $q$ does not contain $(m-1, n-1)$, we may assume that $q$ contains $(m-2, n)$: if not, we can simply add $(m-2, n)$ before $(m-1, n)$ to obtain a path that dominates $q$ (recall that $n$ is the column indexed by the minimum of $c$). Now, let $q|_{M_1}$ be the restriction of $q$ to $M_1$, we then have that $q = q|_{M_1} \concat (m-1, n), (m, n)$. Since $p_1$ dominates all paths in $M_1$, it dominates $q|_{M_1}$ and so $p_1^+ \dominates q$.
	\end{claimproof}
	
	The remainder of the proof is devoted to showing that $p_1^+$ dominates $p_2^+$ which yields the lemma by \cref{obs:p1star:p2star} and transitivity. To achieve that, we will show in a series of claims that we may assume that 
	$p_2$ contains $(m-2, n-2)$.
	In particular, we show that if $p_2$ does not contain $(m-2, n-2)$, 
	then there is another path in $M_2$ that does contain $(m-2,n-2)$ and dominates $p_2$. 
	\begin{nestedclaim}\label{claim:chop:unique:max}
		We may assume that there is a unique $j \in [n-2]$ such that $p_2$ contains $(m-1, j)$.
	\end{nestedclaim}
	\begin{claimproof}
		Clearly, $p_2$ has to pass through the row $m-1$ at some point. We show that we may assume that there is a unique such point.
		Suppose not and let $j_1, \ldots, j_t$ be such that $p_2$ contains all $(m-1, j_i)$, where $i \in [t]$. 
		By the definition of a path in a matrix, we have that $j_{i+1} = j_i + 1$ for all $i \in [t-1]$. 
		Let $p_2'$ be the path obtained from $p_2$ by replacing, for each $i \in [t-1]$, the element $(m-1, j_i)$ with the element $(m-2, j_i)$. Since $r(m-2) \le r(m-1)$ (recall that $m-1 \in \argmax(r)$), it is not difficult to see that $p_2'$ dominates $p_2$, and clearly, $p_2'$ satisfies the condition of the claim.
	\end{claimproof}
	\begin{figure}
		\centering
		\begin{subfigure}[t]{.48\textwidth}
			\includegraphics[width=.9\textwidth]{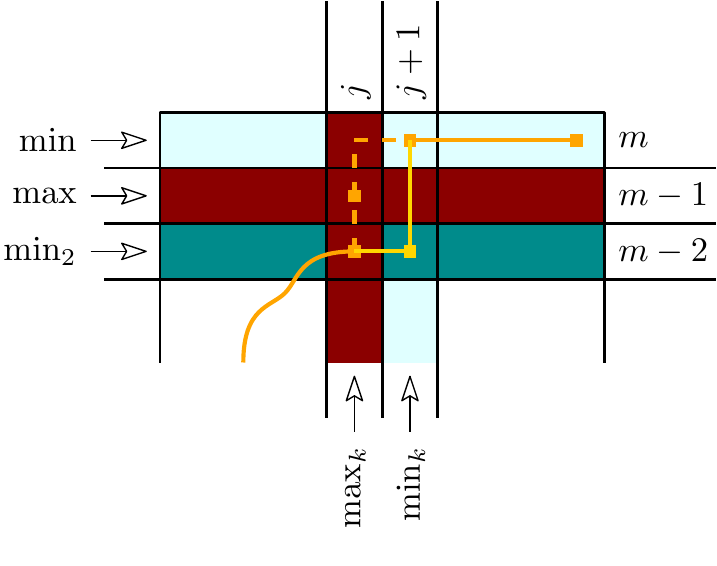}
			\caption{Situation of \cref{claim:chop:push:max}.}
			\label{fig:chop:shift:max}
		\end{subfigure}
		\hfill
		\begin{subfigure}[t]{.48\textwidth}
			\includegraphics[width=.9\textwidth]{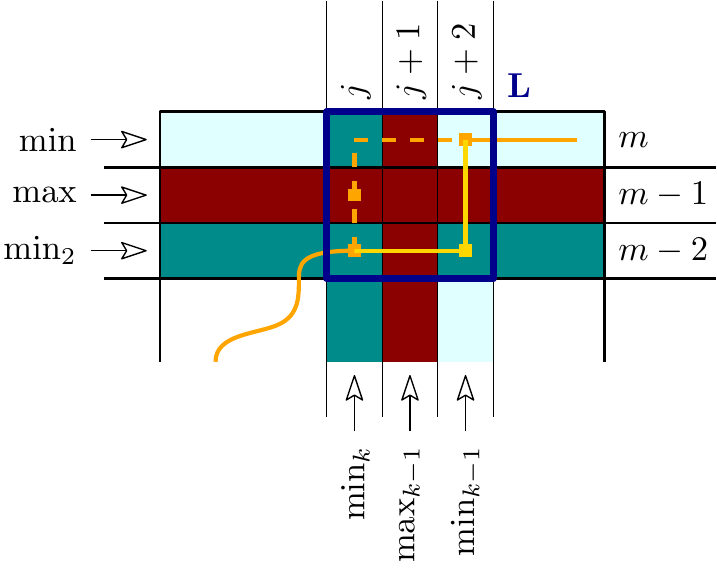}
			\caption{Situation of \cref{claim:chop:push:min}.}
			\label{fig:chop:shift:min}
		\end{subfigure}
		\caption{Visualization of the arguments that lead to the conclusion that we may assume that $p_2$ contains $(m-2, n-2)$ in the proof of \cref{lem:chop}.}
	\end{figure}
	\begin{nestedclaim}\label{claim:chop:push:max}
		Let $j \in [n-3]$ be such that $p_2$ contains $(m-1, j)$. 
		If $j = n - 2k + 1$ for some $k \in \bN$ with $2k + 1 \le n - 1$, 
		then there is a path $p_2'$ that dominates $p_2$ and contains $(m-1, j + 1)$.
	\end{nestedclaim}	
	\begin{claimproof}
		For an illustration see \cref{fig:chop:shift:max}.
		First, by \cref{claim:chop:unique:max}, we may assume that $j$ is unique.
		Moreover, since $j = n - 2k + 1$ and $j + 1 = n - 2k + 2 = n - 2(k-1)$, 
		we have that $c(j) = \kmax{k}(c)$ and $c(j+1) = \kmin{k}(c)$, respectively,
		and therefore $c(j+1) \le c(j)$.
		Hence, we may assume that the element after $(m-1, j)$ in $p_2$ is $(m, j+1)$: 
		if $p_2$ contained $(m, j)$ we could simply remove $(m, j)$ from $p_2$ 
		without changing the fact that $p_2$ is a dominating path since $M[m, j] > M[m, j + 1]$. 
		We modify $p_2$ as follows. We remove $(m-1, j)$, and add $(m-2, j)$ (if not already present), 
		followed by $(m-2, j+1)$ and then $(m-1, j+1)$. 
		For each $x \in \{M[m-2, j], M[m-2, j+1], M[m-1, j+1]\}$, we have that $x < M[m-1, j]$ 
		(recall that $r(m-2) < r(m-1)$ and $c(j + 1) < c(j)$).
		Hence, the resulting path dominates $p_2$ and it contains $(m-1, j+1)$.
	\end{claimproof}
	
	\begin{nestedclaim}\label{claim:chop:push:min}
		Let $j \in [n-4]$ be such that $p_2$ contains $(m-1, j)$. If $j = n - 2(k - 1)$ for some 
		$k \in \left[3..\left\lfloor \frac{n}{2}\right\rfloor\right]$, 
		then there is a path $p_2'$ that dominates $p_2$ and contains $(m-1, j + 2)$.
	\end{nestedclaim}
	\begin{claimproof}
		For an illustration see \cref{fig:chop:shift:min}.
		Again, by \cref{claim:chop:unique:max}, we may assume that $j$ is unique.
		Since $j = n - 2(k-1)$, we have that $c(j) = \kmin{k}(c)$.
		First, if not already present, 
		we insert $(m-2, j)$ just before $(m-1, j)$ in $p_2$. 
		This does not change the fact that $p_2$ is a dominating path, 
		since $M[m-2, j] < M[m-1, j]$ (recall that $r(m-2) < r(m-1)$). 
		Next, consider the $3 \times 3$ submatrix $L \defeq M[(m-2)..m,j..(j+2)]$. 
		Note that $L$ is the submatrix of $M$ restricted to the rows 
		$\min(r)$, $\max(r)$, and $\kmin{2}(r)$, and the columns 
		$\kmin{k}(c)$, $\kmax{k - 1}(c)$, and $\kmin{k-1}(c)$.
		Furthermore, we have that $p_2$ restricted to $L$ is equal to $\upright{p}(L)$. 
		We show that $\rightup{p}(L)$ dominates $\upright{p}(L)$, 
		from which we can conclude that we can obtain a path $p_2'$ from $p_2$ 
		that contains $(m-1, j+2)$ and dominates $p_2$
		by replacing $\upright{p}(L)$ with $\rightup{p}(L)$. 
		By \cref{lem:special:corners}, it suffices to show that $M[m-2, j+1] \le M[m-1, j]$, 
		in other words, that $\kmax{k-1}(c) + \kmin{2}(r) \le \max(r) + \kmin{k}(c)$.
		
		By the assumption of the lemma, we have that $M[m-2, n-1] \le M[m-1, n-2]$, hence,
		\begin{align*}
			\max(c) + \kmin{2}(r) \le \max(r) + \kmin{2}(c), \mbox{ and so: } \max(c) - \kmin{2}(c) \le \max(r) - \kmin{2}(r).
		\end{align*}
		Next, we have that for all $j \in \left[\left\lfloor n/2 \right\rfloor\right]$,
	\begin{align*}
		\max(c) - \kmin{2}(c) \ge \kmax{j}(c) - \kmin{j+1}(c).
	\end{align*}
	Putting the two together, we have that 
	\begin{align*}
		\kmax{k-1}(c) - \kmin{k}(c) \le \max(r) - \kmin{2}(r), \mbox{ and so: } \kmax{k-1}(c) + \kmin{2}(r) \le \max(r) + \kmin{k}(c),
	\end{align*}
	which concludes the proof of the claim.
	\end{claimproof}
	
	We are now ready to conclude the proof.
	\begin{nestedclaim}
		$p_1^+ \dominates p_2^+$.
	\end{nestedclaim}
	\begin{claimproof}
		By repeated application of \cref{claim:chop:push:max,claim:chop:push:min}, 
		we know that there is a path $p_2'$ in $M_2$ that contains $(m-1, n-2)$.
		Furthermore, we may assume that $p_2'$ contains $(m-2, n-2)$ as well:
		we can simply add this element if it is not already present; since $M[m-2, n-2] \le M[m-1, n-2]$,
		this does not change the property that $p_2' \dominates p_2$.
		Now, let $p_2''$ be the subpath of $p_2'$ ending in $(m-2, n-2)$.
		(Note that $p_2'' \circ (m-2, n-1), (m-2, n) \in \allpaths(M_1)$.)
		Then, 
		\begin{align}
			p_1^+ &\dominates p_2'' \circ (m-2, n-1), (m-2, n), (m-1, n), (m, n) \label{eq:chop:final:1} \\
				&\dominates p_2' \circ (m, n-1), (m, n) \label{eq:chop:final:2} \\
				&\dominates p_2^+, \label{eq:chop:final:3}
		\end{align}
		where~\eqref{eq:chop:final:1} is due to $p_1 \dominates \allpaths(M_1)$ and therefore $p_1 \dominates p_2'' \circ (m-2, n-1), (m-2, n)$. 
		Next~\eqref{eq:chop:final:2} follows from an application of \cref{lem:special:corners} to 
		the $3 \times 3$-submatrix $M[(m-2)..m,(n-2)..n]$ 
		and~\eqref{eq:chop:final:3} is guaranteed since $p_2' \dominates p_2$.
	\end{claimproof}
	This concludes the proof of \cref{lem:chop:1} and \cref{lem:chop:2} can be shown symmetrically.
\end{proof}

As the previous lemma always assumes that $m \ge 3$ and $n \ge 3$, 
we observe the corresponding base case which occurs when either $m \le 2$ or $n \le 2$.
This base case is justified by the observation that in the bottom case, 
the last row and column of $M$ are minimum.
\begin{observation}[Base Case - Bottom]\label{obs:base:bottom}
	Let $r$ and $c$ be typical sequences of length $m$ and $n$, respectively, and let $M$ be the merge matrix of $r$ and $c$. 
	Suppose that $m \in \argmin(r)$ and $n \in \argmin(c)$.
	If $m \le 2$ ($n \le 2$), then\footnote{Note that in the following equation, if $m = 1$, then strictly speaking we would have that $p^*$ repeats the element $(1, 1)$ twice which is of course not our intention. For the sake of a clear presentation though, we will ignore this slight abuse of notation, also in similar instances throughout this section.}
	\[
		p^* \defeq (1,1), (m, 1), (m, 2), \ldots, (m, n) ~~~ (p^* \defeq (1, 1), (1, n), (2, n), \ldots, (m, n))
	\]
 	dominates $\allpaths(M)$, i.e.\ $p^* \dominates \allpaths(M)$.
\end{observation}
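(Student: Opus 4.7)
The plan is to handle the two bracketed sub-cases by symmetry, so I would only argue the case $m \le 2$; the case $n \le 2$ follows by swapping the roles of rows and columns throughout. When $m = 1$, the matrix has a single row and $\allpaths(M) = \{p^*\}$, so there is nothing to show.

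For $m = 2$, since $r$ is typical and $m = 2 \in \argmin(r)$, the absence of consecutive repetitions forces $r(2) < r(1)$, which makes row $2$ entry-wise minimum in $M$, i.e.\ $M[2, j] \le M[1, j]$ for every $j \in [n]$. I would fix an arbitrary $q \in \allpaths(M)$ and observe that, because $m = 2$, $q$ necessarily decomposes as a row-$1$ prefix $(1,1), (1,2), \ldots, (1, j)$ followed by a single transition and then a row-$2$ suffix ending at $(2, n)$, where the transition is either vertical, $(1, j) \to (2, j)$, or diagonal, $(1, j) \to (2, j+1)$.

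The core idea is to construct extensions $p^{**} \in \extensions(p^*)$ and $q^{*} \in \extensions(q)$ of equal length with $M[p^{**}] \le M[q^{*}]$ componentwise, via a column-by-column alignment: for each column $k \in [n]$, I would insert repetitions into whichever path visits column $k$ fewer times so that both extensions contain the same number of entries in column $k$. Concretely, in the diagonal case, extend $q$ by one additional copy of its first entry $(1, 1)$, matching the two column-$1$ entries of $p^*$; in the vertical case at column $j$, extend $q$ by one extra copy of $(1, 1)$ and $p^*$ by one extra copy of $(2, j)$, so that both paths visit columns $1$ and $j$ exactly twice and every other column exactly once. After this alignment every pointwise comparison is either an equality (same cell in both paths) or a comparison between a row-$2$ and a row-$1$ entry in the same column, which collapses to $r(2) \le r(1)$.

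The expected obstacle is that a naive position-by-position comparison of $p^*$ against $q$ juxtaposes entries from \emph{different} columns, giving inequalities of the form $r(2) + c(k-1) \le r(1) + c(k)$ that need not hold for typical $c$ (since the sign of $c(k-1) - c(k)$ depends on the local alternation of $c$). The column-alignment trick sidesteps this by forcing every compared pair of entries to share a column, so the only structural ingredient that ever gets invoked is the clean row inequality $r(2) \le r(1)$.
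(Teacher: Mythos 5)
Your proof is correct and captures exactly what the paper leaves implicit: the paper dismisses this base case in a single sentence ("the last row and column of $M$ are minimum") and gives no formal argument, while your column-alignment construction makes that intuition precise by matching every entry of $p^*$ against an entry of $q$ in the \emph{same} column, so that each pointwise comparison reduces to $r(m) \le r(1)$ (with $m = 2$). One small remark: as you have in effect observed, the $m \le 2$ case uses only $m \in \argmin(r)$, not $n \in \argmin(c)$; the latter hypothesis is there symmetrically for the $n \le 2$ case.
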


By symmetry, we have the following consequence of \cref{lem:chop}.
\begin{corollary-bf}[Chop Lemma - Top]\label{lem:chop:top}
	Let $r$ and $c$ be typical sequences of length $m \ge 3$ and $n \ge 3$, respectively, and let $M$ be the merge matrix of $r$ and $c$. 
	Suppose that $1 \in \argmin(r)$ and $1 \in \argmin(c)$ and let $M_1 \defeq M[3..m,1..n]$ and $M_2 \defeq M[1..m,3..n]$ and for all $h \in [2]$, let $p_h \dominates \allpaths(M_h)$. Let $p_1^+ \defeq (1, 1), (2, 1) \concat p_1$ and $p_2^+ \defeq (1, 1), (1, 2) \concat p_2$.
	\begin{enumerate}
		\item\label{lem:chop:top:1} If $M[3, 2] \le M[2, 3]$, then $p_1^+ \dominates \allpaths(M)$.
		\item\label{lem:chop:top:2} If $M[2, 3] \le M[3, 2]$, then $p_2^+ \dominates \allpaths(M)$.
	\end{enumerate}
\end{corollary-bf}

Again, we observe the corresponding base case.
\begin{observation}[Base Case - Top]\label{obs:base:top}
	Let $r$ and $c$ be typical sequences of length $m$ and $n$, respectively, and let $M$ be the merge matrix of $r$ and $c$. 
	Suppose that $1 \in \argmin(r)$ and $1 \in \argmin(c)$.
	If $m \le 2$ ($n \le 2$), then
	\[
		p^* \defeq (1,1), (1, 2), \ldots, (1, n), (m, n) ~~~ (p^* \defeq (1, 1), (2, 1), \ldots, (m, 1), (m, n))
	\]
	dominates $\allpaths(M)$, i.e.\ $p^* \dominates \allpaths(M)$.
\end{observation}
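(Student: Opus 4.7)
The plan is to prove the observation by constructing extensions explicitly in the $m \le 2$ case; the $n \le 2$ case then follows from the analogous argument with the roles of rows and columns (and of $r$ and $c$) interchanged. The only property that will be needed is the inequality $r(1) \le r(2)$, which holds because $1 \in \argmin(r)$ and consecutive entries of a typical sequence differ. When $m = 1$ the matrix has a single row and the unique path from $(1,1)$ to $(1,n)$ coincides with $p^*$ (up to the harmless convention mentioned), so the claim is immediate.

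When $m = 2$, any path $q \in \allpaths(M)$ must cross from row $1$ into row $2$ exactly once and therefore has the form $q = (1,1), (1,2), \ldots, (1,a), (2,b), (2,b+1), \ldots, (2,n)$ for some $a \in [n]$ and $b \in \{a, a+1\}$. I would show $M[p^*] \dominates M[q]$ by exhibiting extensions of equal length that compare pointwise. In the vertical sub-case $b = a$, I extend $M[p^*]$ by repeating its $a$-th entry $r(1)+c(a)$ once, and extend $M[q]$ by repeating its final entry $r(2)+c(n)$ once; both resulting sequences have length $n+2$, they coincide on positions $1, \ldots, a$ and on position $n+2$, and on each remaining position the $p^*$-extension takes value $r(1)+c(j)$ while the $q$-extension takes value $r(2)+c(j)$ for the same column index $j$, so the pointwise inequality reduces to $r(1) \le r(2)$.

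In the diagonal sub-case $b = a+1$, $M[q]$ has length $n$ while $M[p^*]$ has length $n+1$, and extending $M[q]$ by repeating its last entry once brings both sequences to length $n+1$, after which the identical column-by-column matching yields the desired inequality. The main obstacle is the small amount of index bookkeeping needed to confirm that, after the chosen extensions, the two sequences really are column-aligned so that $r(1)+c(j)$ sits directly above $r(2)+c(j)$ at every position; this is a routine calculation. The $n \le 2$ case is handled by the symmetric construction with rows and columns interchanged, which produces the second form of $p^*$ asserted in the observation.
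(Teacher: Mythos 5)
Your proof is correct, and it fills in exactly the kind of elementary extension construction that the paper leaves implicit by presenting this as an observation without proof (the paper's remark for the analogous Bottom case is simply that the relevant row and column of $M$ are minimum). One minor imprecision: the inequality $r(1) \le r(2)$ already follows from $1 \in \argmin(r)$ alone; the fact that consecutive entries of a typical sequence differ would only upgrade this to a strict inequality, which your argument does not need.
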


\subsection{The Split-and-Chop Algorithm}\label{sec:alg:split-and-chop}
Equipped with the Split Lemma and the Chop Lemmas, we are now ready to give the algorithm that computes a dominating merge of two typical sequences. Consequently, we call this algorithm the `Split-and-Chop Algorithm'.

\begin{algorithm}[t]
	\SetKwInOut{algin}{Input}
	\SetKwInOut{algout}{Output}
	\algin{Typical sequences $r(1), \ldots, r(m)$ and $c(1), \ldots, c(n)$}
	\algout{A dominating merge of $r$ and $c$}
	\SetKwFunction{chopbtm}{Chop-bottom}
	\SetKwFunction{choptop}{Chop-top}
	Let $i \in \argmin(r)$ and $j \in \argmin(c)$\;
	\KwRet \chopbtm($r[1..i]$, $c[1..j]$)${}\concat{}$\choptop($r[i..m], c[j..n]$)\;
  \SetKwProg{myproc}{Procedure}{}{}
  \myproc{\chopbtm{$r$ and $c$ as above}}{
	\lIf{$m \le 2$}{\KwRet $r(1) + c(1)$, $r(m) + c(1)$, $r(m) + c(2)$, $\ldots$, $r(m) + c(n)$}
	\lIf{$n \le 2$}{\KwRet $r(1) + c(1)$, $r(1) + c(n)$, $r(2) + c(n)$, $\ldots$, $r(m) + c(n)$}
  \lIf{$r(m-2) + c(n-1) \le r(m-1) + c(n-2)$}{\KwRet \chopbtm{$r[1..(m-2)], c$}${}\concat{} (r(m-1) + c(n)), r(m) + c(n)$}
  \lIf{$r(m-1) + c(n-2) \le r(m-2) + c(n-1)$}{\KwRet \chopbtm{$r, c[1..(n-2)]$}${}\concat{} (r(m) + c(n-1)), r(m) + c(n)$}
  }
  \SetKwProg{myproc}{Procedure}{}{}
  \myproc{\choptop{$r$ and $c$ as above}}{
  \lIf{$m \le 2$}{\KwRet $r(1) + c(1)$, $r(1) + c(2)$, $\ldots$, $r(1) + c(n)$, $r(m) + c(n)$}
  \lIf{$n \le 2$}{\KwRet $r(1) + c(1)$, $r(2) + c(1)$, $\ldots$, $r(m) + c(1)$, $r(m) + c(n)$}
  \lIf{$r(3) + c(2) \le r(2) + c(3)$}{\KwRet $r(1) + c(1), (r(2) + c(1)){}\concat{}$\choptop{$r[3..m], c$}}
  \lIf{$r(2) + c(3) \le r(3) + c(2)$}{\KwRet $r(1) + c(1), (r(1) + c(2)){}\concat{}$\choptop{$r, c[3..n]$}}
  }
  \caption{The Split-and-Chop Algorithm}
  \label{alg:split-and-chop}
\end{algorithm} 

\begin{lemma}\label{lem:split-and-chop:typical}
	Let $r$ and $c$ be typical sequences of length $m$ and $n$, respectively. 
	Then, there is an algorithm that finds in $\cO(m + n)$ time a dominating path in the merge matrix of $r$ and $c$.
\end{lemma}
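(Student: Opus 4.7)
The plan is to verify that Algorithm~\ref{alg:split-and-chop} correctly implements the strategy developed in the preceding subsections (split once at a global minimum row/column using the Split Lemma, then recursively apply the Chop Lemmas on each half) and to analyze its running time.

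First I would establish correctness of the outer step. The algorithm picks indices $i \in \argmin(r)$ and $j \in \argmin(c)$ and invokes \texttt{Chop-bottom} on $(r[1..i], c[1..j])$ and \texttt{Chop-top} on $(r[i..m], c[j..n])$, concatenating the two returned sequences. By the Split Lemma (\cref{lem:split}), the concatenation of any dominating path in $M_1 \defeq M[1..i,1..j]$ with any dominating path in $M_2 \defeq M[i..m,j..n]$ is a dominating path in $M$. So it suffices to argue that the two subroutines return dominating paths in $M_1$ and $M_2$, respectively.

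Next I would prove correctness of each subroutine by induction on $m+n$. For \texttt{Chop-bottom} on typical sequences $(r,c)$ with $m \in \argmin(r)$ and $n \in \argmin(c)$, the base cases $m \le 2$ or $n \le 2$ are settled by \cref{obs:base:bottom}. For the inductive step, \cref{lem:chop} shows that, depending on the sign of $M[m-2,n-1] - M[m-1,n-2]$, a dominating path is obtained by appending $(m-1,n),(m,n)$ (respectively $(m,n-1),(m,n)$) to a dominating path of $M[1..(m-2),1..n]$ (respectively $M[1..m,1..(n-2)]$). The point to verify is that the recursive call is on a well-formed instance: the shortened sequence must still be a typical sequence ending in its minimum. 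This is exactly \cref{cor:typseq:min:shape}, which describes the alternating min-max shape of a typical sequence whose last element is a minimum, guaranteeing that dropping the final two elements preserves both properties. The correctness of \texttt{Chop-top} is entirely symmetric, using \cref{lem:chop:top} and \cref{obs:base:top}.

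Finally, for the running time, locating $i$ and $j$ at the top level takes $\cO(m+n)$ time. Each recursive call of \texttt{Chop-bottom} or \texttt{Chop-top} does only $\cO(1)$ arithmetic work (two additions and a comparison) and either hits a base case or recurses on an input whose total length $m+n$ has decreased by $2$; so there are at most $\cO(m+n)$ recursive calls. The assembled output has length $\cO(m+n)$ and only $\cO(1)$ entries are prepended or appended per call, so building it does not inflate the cost. No step is really an obstacle: the substantive work is already packaged in the Split Lemma and Chop Lemmas, and what remains is bookkeeping to confirm that the algorithm implements them and that their preconditions persist through the recursion.
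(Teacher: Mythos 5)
Your proof is correct and follows essentially the same route as the paper: invoke the Split Lemma once at a global minimum row/column, then reduce each half recursively via the Chop Lemmas with the base cases of Observations~\ref{obs:base:bottom} and~\ref{obs:base:top}, and observe the recurrence $T(m+n) \le T(m+n-2) + \cO(1)$ for the running time. The one point you add that the paper leaves implicit --- that \cref{cor:typseq:min:shape} keeps the recursion's preconditions intact (truncating by two elements still ends in a minimum, and remains a typical sequence) --- is a worthwhile clarification, though strictly speaking the corollary only gives the min/max positions; the preservation of typicality is a separate, easy observation (truncation can only remove, never create, opportunities to apply the typical operation).
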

\begin{proof}
	The algorithm practically derives itself from the Split Lemma (\cref{lem:split}) and the Chop Lemmas (\cref{lem:chop,lem:chop:top}). However, to make the algorithm run in the claimed time bound, we are not able to construct the merge matrix of $r$ and $c$. This turns out to be not necessary, as we can simply read off the crucial values upon which the recursion of the algorithm depends from the sequences directly. The details are given in \cref{alg:split-and-chop}.
	
	The runtime of the Chop-subroutines can be computed as $T(m + n) \le T(m + n - 2) + \cO(1)$, which resolves to $\cO(m + n)$. Correctness follows from \cref{lem:split,lem:chop,lem:chop:top} with the base cases given in \cref{obs:base:bottom,obs:base:top}.
\end{proof}

\subsection{Generalization to Arbitrary Integer Sequences}\label{sec:split-chop:arbitrary}

In this section we show how to generalize \cref{lem:split-and-chop:typical} to arbitrary integer sequences. In particular, we will show how to construct from a merge of two typical sequences $\typseq(r)$ and $\typseq(s)$ that dominates all of their merges, a merge of $r$ and $s$ that dominates all merges of $r$ and $s$. The claimed result then follows from an application of \cref{lem:split-and-chop:typical}.
We illustrate the following construction in \cref{fig:typical:lift}. 

\paragraph{The Typical Lift.} Let $r$ and $s$ be integer sequences and let $t \in \typseq(r) \allmerges \typseq(s)$. Then, the \emph{typical lift of $t$}, denoted by $\untypseq(t)$, is an integer sequence $\untypseq(t) \in r \allmerges s$, obtained from $t$ as follows. 
	For convenience, we will consider $\untypseq(t)$ as a path in the merge matrix $M$ of $r$ and $s$.
\begin{description}
	\item[Step 1.] We construct $t' \in \typseq(r) \allmergesnd \typseq(s)$ such that $t' \dominates t$ using Lemma~\ref{lem:merge:nd:equiv}.
		Throughout the following, consider $t'$ to be a path in the merge matrix $M_\typseq$ of $\typseq(r)$ and $\typseq(s)$.
	\item[Step 2.] 
		First, we initialize $\untypseq_t^1 \defeq t'(1) = (1, 1)$.
		For $i = \{2, \ldots, \length(t')\}$, we proceed inductively as follows.
		Let $(i_r, i_s) = t(i)$ and let $(i'_r, i'_s) = t(i-1)$.
		(Note that $t(i-1)$ and $t(i)$ are indices in $M_\typseq$.)
		Let furthermore $(j_r, j_s)$ be the index in $M$ corresponding to $(i_r, i_s)$, 
		and let $(j'_r, j'_s)$ be the index in $M$ corresponding to $(i'_r, i'_s)$.
		Assume by induction that $\untypseq_t^{i-1} \in \allpaths(M[1..j_r', 1..j_s'])$. 
		We show how to extend $\untypseq_t^{i-1}$ to a path in $\untypseq_t^i$
		in $M[1..j_r, 1..j_s]$.
		Since $t'$ is non-diagonal, we have that $(i_r', i_s') \in \{(i_r - 1, i_s), (i_r, i_s - 1)\}$, 
		so one of the two following cases applies.
	\item[Case S2.1 ($i_r' = i_r - 1$ and $i_s' = i_s$).] In this case, we let
			$\untypseq_t^i \defeq \untypseq_t^{i-1} \concat (j_r' + 1, j_s), \ldots, (j_r, j_s)$.
	\item[Case S2.2 ($i_r' = i_r$ and $i_s' = i_s - 1$).]	In this case, we let
			$\untypseq_t^i \defeq \untypseq_t^{i-1} \concat (j_r, j_s' + 1), \ldots, (j_r, j_s)$.
	\item[Step 3.] We return $\untypseq(t) \defeq \untypseq_t^{\length(t')}$.
\end{description}
\begin{figure}
	\begin{subfigure}[t]{.26\textwidth}
		\centering
		\includegraphics[width=.95\textwidth]{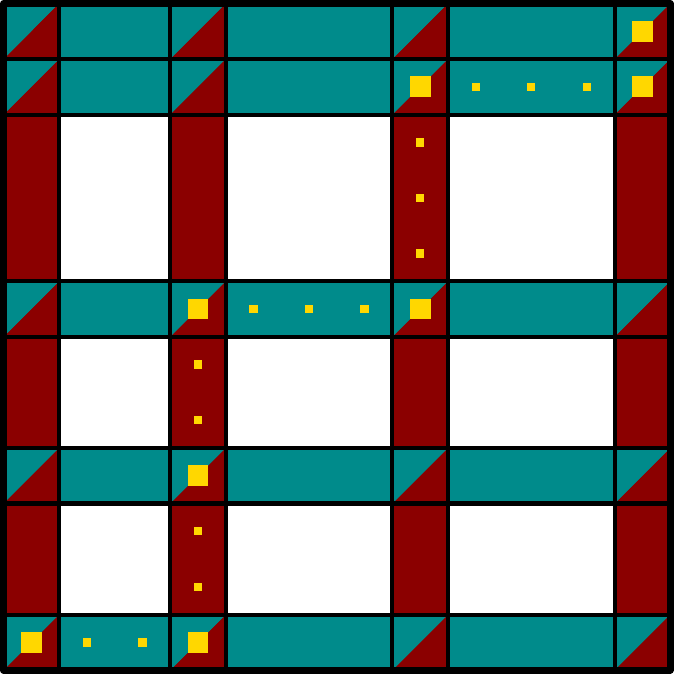}
	\end{subfigure}
	\hfill
	\begin{subfigure}[t]{.72\textwidth}
		\centering
		\includegraphics[width=.95\textwidth]{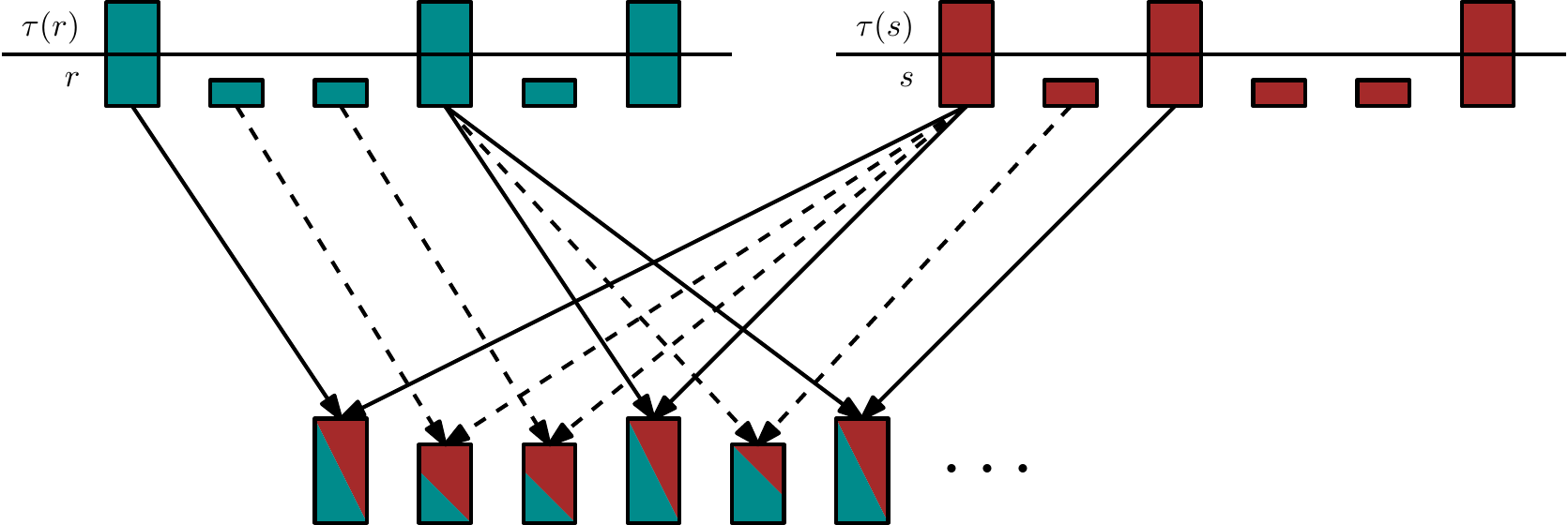}
	\end{subfigure}
	\caption{Illustration of the typical lift. 
		On the left side, the view of the merge matrix $M$, with the rows and columns corresponding to elements of the typical sequences highlighted.
		Inside there, $M_\typseq$ can be seen as a highlighted submatrix.
		The merge $t'$ is depicted as the large yellow squares within $M_\typseq$ 
		and the small yellow squares outside of $M_\typseq$ show its completion to the typical lift of $t$.
		On the right side, an illustration that does not rely on the `matrix view'.}
	\label{fig:typical:lift}
\end{figure}

Furthermore, it is readily seen that 
the typical lift contains no diagonal steps: we obtain it from a non-diagonal path 
in the merge matrix of $\typseq(r)$ and $\typseq(s)$ 
by inserting vertical and horizontal paths from the merge matrix of $r$ and $s$
between consecutive elements.
Moreover, it is computable in linear time, with Step 1 taking linear time by \cref{lem:merge:nd:equiv}. 
We summarize in the following observation.
\begin{observation}\label{obs:typical:lift:basics}
	Let $r$ and $s$ be integer sequences of length $m$ and $n$, respectively, and let $t \in \typseq(r) \allmerges \typseq(s)$. 
	Then, $\untypseq(t) \in r \allmergesnd s$, and $\untypseq(t)$ can be computed in time $\cO(m + n)$.
\end{observation}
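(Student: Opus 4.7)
My plan is to verify the two claims of the observation separately, both more or less by following the construction step by step.

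For the claim that $\untypseq(t) \in r \allmergesnd s$, I would proceed by induction on $i$ with the invariant that $\untypseq_t^i$ is a non-diagonal partial path in $M$ from $(1,1)$ to $(j_r, j_s)$, where $(j_r, j_s)$ is the index in $M$ corresponding to $t'(i) = (i_r, i_s)$ in $M_\typseq$ (i.e., $j_r$ is the position in $r$ of the $i_r$-th element of $\typseq(r)$, and similarly for $j_s$). The base case $i = 1$ is immediate since $t'(1) = (1,1)$ and the first elements of $\typseq(r)$ and $\typseq(s)$ are $r(1)$ and $s(1)$. For the inductive step, both Case S2.1 and Case S2.2 extend the previous partial path by a segment that either increases only the row coordinate by one step at a time or only the column coordinate by one step at a time; such a segment is non-diagonal and its endpoints correctly match with $(j_r', j_s')$ at the start and $(j_r, j_s)$ at the end. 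Applying this at $i = \length(t')$ and using that the last element of $\typseq(r)$ is $r(m)$ and the last element of $\typseq(s)$ is $s(n)$, we conclude that $\untypseq(t)$ is a non-diagonal path from $(1,1)$ to $(m,n)$ in $M$, which (as noted in the text just before the definition of non-diagonal merges) is the same object as an element of $r \allmergesnd s$.

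For the runtime bound, Step~1 takes time $\cO(\length(\typseq(r)) + \length(\typseq(s))) = \cO(m + n)$ by \cref{lem:merge:nd:equiv}. For Step~2, the total number of indices appended over all iterations equals $\length(\untypseq(t))$, and since $\untypseq(t)$ is a non-diagonal path from $(1,1)$ to $(m,n)$ in $M$, it has length exactly $m + n - 1$. Thus Step~2 performs $\cO(m + n)$ work in total, provided the translation from indices of $\typseq(r)$ (resp.\ $\typseq(s)$) to indices of $r$ (resp.\ $s$) is available in $\cO(1)$ per lookup. This correspondence can be stored alongside the typical sequence during its $\cO(m+n)$-time computation from \cref{lem:compute:typseq} (e.g., the marking procedure in \cref{alg:compute:typseq} naturally records these indices), so no asymptotic overhead is incurred. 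Step~3 is trivial.

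The only potential obstacle is bookkeeping: one must be careful that the translation between indices in the typical sequences and indices in the original sequences is maintained, and that the case distinction in Step~2 really covers all transitions of the non-diagonal path $t'$ (which it does, precisely because $t'$ has no diagonal steps by construction in Step~1). Apart from this routine verification, the observation follows directly from the definition of $\untypseq(\cdot)$.
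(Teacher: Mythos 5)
Your proof is correct and follows the same approach as the paper, which does not give a formal proof but only the short informal paragraph preceding the observation (stating that non-diagonality is clear since one inserts horizontal and vertical segments between consecutive steps of the non-diagonal path $t'$, and that linearity follows from \cref{lem:merge:nd:equiv} and the bounded total length). Your write-up merely fills in the routine details the paper leaves implicit: the induction invariant, the observation that a non-diagonal path from $(1,1)$ to $(m,n)$ has length exactly $m+n-1$, and the bookkeeping needed to translate indices of $\typseq(r)$, $\typseq(s)$ back to indices of $r$, $s$ (for which it is worth noting, as you implicitly do, that when viewing $\typseq(r)$ as a subsequence of $r$, the inducing index set should be chosen so that its first and last members are $1$ and $m$, so that the lifted path indeed terminates at $(m,n)$).
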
 

We now show that if $t \in \typseq(r) \allmerges \typseq(s)$ dominates all merges of $\typseq(r)$ and $\typseq(s)$, 
then the typical lift of $t$ dominates all merges of $r$ and $s$. 

\begin{lemma}\label{lem:typical:lift}
	Let $r$ and $s$ be integer sequences and let $q \in r \allmerges s$. 
	Let $t \in \typseq(r) \allmerges \typseq(s)$ such that $t \dominates \typseq(r) \allmerges \typseq(s)$. 
	Then, $\untypseq(t) \dominates q$.
\end{lemma}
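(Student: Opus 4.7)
I would prove $\untypseq(t) \dominates q$ by establishing the chain
\[ \untypseq(t) \dominates t' \dominates t \dominates q \]
and invoking transitivity of $\dominates$. Here $t' \in \typseq(r) \allmergesnd \typseq(s)$ is the non-diagonal merge produced in Step~1 of the construction of $\untypseq(t)$.

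Two of the three links come essentially for free. First, $t' \dominates t$ is exactly the output of \cref{lem:merge:nd:equiv}. Second, to obtain $t \dominates q$, I would apply \cref{lem:BK}\cref{lem:BK:3:14} to $q \in r \allmerges s$: there exists $q^* \in \typseq(r) \allmerges \typseq(s)$ with $q^* \dominates q$, and since by hypothesis $t$ dominates every merge in $\typseq(r) \allmerges \typseq(s)$, transitivity gives $t \dominates q$.

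The main work lies in proving $\untypseq(t) \dominates t'$. From Step~2 of the construction, $\untypseq(t)$ is obtained from $t'$ by expanding each single step, say from $a = t'(\ell)$ to $b = t'(\ell+1)$, into a longer sequence $a = c_0, c_1, \ldots, c_k = b$ of entries of the merge matrix $M$ of $r$ and $s$ that is \emph{monotone} between $a$ and $b$. The monotonicity is the essential property from typical sequence theory used here: in Case~S2.1 the inserted values are $r(j_r'+i) + s(j_s)$ for $i = 1, \ldots, k-1$, and the elements $r(j_r'), r(j_r'+1), \ldots, r(j_r)$ of $r$ form a monotone sequence, since a strict local extremum strictly between two consecutive positions of $\typseq(r)$ in $r$ could not be removed by any typical operation and so would survive in $\typseq(r)$; Case~S2.2 is symmetric with $s$ in place of $r$.

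To witness $\untypseq(t) \dominates t'$, I would take $\untypseq(t)$ itself as the extension of $\untypseq(t)$, and construct an extension $t'^\ast \in \extensions(t')$ of the same length with $\untypseq(t) \le t'^\ast$ pointwise. For each expanded step carrying $k$ insertions between $a$ and $b$, I would assign these $k$ extra repetitions to the block of $t'(\ell+1)$'s in $t'^\ast$ when $a \le b$ (so the $c_i$'s are aligned with copies of $b$), and to the block of $t'(\ell)$'s when $a > b$ (aligning them with copies of $a$). Since every $c_i$ lies in $[\min(a,b), \max(a,b)]$, it is bounded by the chosen endpoint, and the required pointwise inequality holds segment by segment; the local choices concatenate to a coherent extension of $t'$ because each position of $t'$ receives a contiguous block of copies. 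The main obstacle is thus the monotonicity claim from typical sequence theory; once that is in hand, the remaining argument is direct bookkeeping about how to distribute repetitions across the segments.
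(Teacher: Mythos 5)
Your overall structure matches the paper exactly: the chain $\untypseq(t) \dominates t' \dominates t \dominates q$, with the first link established by comparing the inserted segment to the endpoints, the second from \cref{lem:merge:nd:equiv}, and the third from \cref{lem:BK}\cref{lem:BK:3:14} plus the hypothesis on $t$. The bookkeeping for $\untypseq(t) \dominates t'$ (repeat the endpoint carrying the larger matrix value enough times to absorb the inserted segment) is also the paper's argument.

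However, your justification for the key structural fact contains an error. You claim that $r(j_r'), r(j_r'+1), \ldots, r(j_r)$ is a \emph{monotone} sequence, on the grounds that a strict local extremum strictly between two consecutive positions of $\typseq(r)$ ``could not be removed by any typical operation and so would survive.'' Both the conclusion and the reason are false. Take $r = 1, 5, 3, 6, 2$: the typical operation with $i = 1$, $j = 4$ applies (since $1 \le r(k) \le 6$ for all $k \in [1..4]$), yielding $\typseq(r) = 1, 6, 2$ at positions $1, 4, 5$. Here $r(2) = 5$ is a strict local maximum strictly between the consecutive typical-sequence positions $1$ and $4$, yet it is removed; and the run $r(1), \ldots, r(4) = 1, 5, 3, 6$ is not monotone. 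What is true --- and what the paper states and uses --- is only \emph{boundedness}: since the only typical-sequence elements in $[j_r' .. j_r]$ are $r(j_r')$ and $r(j_r)$, every $r(h)$ with $j_r' \le h \le j_r$ lies between those two values. That weaker property is exactly what your extension construction needs, so the proof is repairable by replacing ``monotone'' with ``bounded between the endpoints'' and dropping the local-extremum argument; as written, though, that step does not hold.
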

\begin{proof}
	Let $t' \in \typseq(r) \allmergesnd \typseq(s)$ be the non-diagonal merge such that $t' \dominates t$
	used in the construction of $\untypseq(t)$.
	We argue that $\untypseq(t) \dominates t'$.
	To see this, let $M$ be the merge matrix of $r$ and $s$ 
	and consider any $(j_r', j_s')$ and $(j_r, j_s)$ as in Step 2, and suppose that $j_s' = j_s$.
	(Note that either $j_s' = j_s$ or $j_r' = j_r$.)
	As the only elements of the typical sequence of $r$ in $[j_r'..j_r]$ are $r(j_r')$ and $r(j_r)$, 
	we know that either for all $h_r \in [j_r'..j_r]$, $r(j_r') \le r(h_r) \le r(j_r)$,
	or for all $h_r \in [j_r'..j_r]$, $r(j_r') \ge r(h_r) \ge r(j_r)$.
	Therefore, in an extension of $t'$, we can repeat the index that yields $\max\{M[j_r', j_s], M[j_r, j_s]\}$
	sufficiently many (i.e.\ $j_r - j_r'$) times to ensure that the value of the extension of $t'$ 
	is an upper bound for all values of $\untypseq(t)$ in these positions.
	
	To finish the proof, we have by \cref{lem:BK}\cref{lem:BK:3:14} that there exists a $q' \in \typseq(r) \allmerges \typseq(s)$
	such that $q' \dominates q$. Since $t \dominates \typseq(r) \allmerges \typseq(s)$, we can conclude:
	\begin{align*}
		\untypseq(t) \dominates t' \dominates t \dominates q' \dominates q.
	\end{align*}
\end{proof}
We wrap up and prove the Merge Dominator Lemma (\cref{lem:merge:dom}), 
stated here in the slightly stronger form that the dominating merge is non-diagonal 
(which is necessary for the applications in \cref{sec:width:measures}).
\begin{lemma}[Merge Dominator Lemma]\label{lem:merge:dom:strong}
	Let $r$ and $c$ be integer sequence of length $m$ and $n$, respectively. There exists a dominating non-diagonal merge of $r$ and $c$, i.e.\ an integer sequence $t \in r \allmergesnd c$ such that $t \dominates r \allmerges c$, and this dominating merge can be computed in time $\cO(m + n)$.
\end{lemma}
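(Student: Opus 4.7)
The plan is to assemble the Merge Dominator Lemma from the components already established in the preceding subsections; essentially every technical obstacle has been handled, so the remaining task is to chain the results correctly and verify the non-diagonality in the final output.

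First, I would compute the typical sequences $\typseq(r)$ and $\typseq(c)$ in time $\cO(m+n)$ by invoking \cref{lem:compute:typseq}. Next, I would feed $\typseq(r)$ and $\typseq(c)$ into the Split-and-Chop Algorithm; by \cref{lem:split-and-chop:typical}, this produces, in time linear in the lengths of the typical sequences (hence in $\cO(m+n)$), a merge $t \in \typseq(r) \allmerges \typseq(c)$ that dominates every merge in $\typseq(r) \allmerges \typseq(c)$.

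Then I would apply the typical lift construction to $t$, obtaining $\untypseq(t)$. By \cref{obs:typical:lift:basics}, $\untypseq(t)$ lies in $r \allmergesnd c$ (that is, it is a non-diagonal merge of the original sequences) and it is computable in $\cO(m+n)$ time, so the total runtime remains linear. The domination property is exactly \cref{lem:typical:lift}: for any $q \in r \allmerges c$, that lemma gives $\untypseq(t) \dominates q$, using the fact that $t$ dominates $\typseq(r) \allmerges \typseq(c)$ together with \cref{lem:BK}\cref{lem:BK:3:14} to bridge from arbitrary merges of $r, c$ to merges of their typical sequences. Hence $\untypseq(t) \dominates r \allmerges c$.

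There is no real obstacle at this final stage — the strengthening to a non-diagonal dominating merge was already built into the typical lift, which by construction inserts only horizontal and vertical steps between consecutive elements of the non-diagonal merge $t' \in \typseq(r) \allmergesnd \typseq(c)$ used in Step 1 of its definition. So the only thing to double-check when writing this up is that the runtimes of the three stages (compute typical sequences, run Split-and-Chop, perform the typical lift) are each $\cO(m+n)$ and that the output $\untypseq(t)$ is returned directly as the claimed sequence $t$ in the lemma statement.
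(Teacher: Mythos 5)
Your proposal is correct and follows the paper's proof exactly: compute the typical sequences (\cref{lem:compute:typseq}), run the Split-and-Chop Algorithm (\cref{lem:split-and-chop:typical}), and return the typical lift, with correctness from \cref{lem:typical:lift} and \cref{obs:typical:lift:basics} and linear runtime from each of the three stages. The only cosmetic difference is that the paper additionally cites \cref{cor:merge:path} in the correctness chain, but that is implicitly covered by your reliance on \cref{lem:split-and-chop:typical}.
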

\begin{proof}
	The algorithm proceeds in the following steps.
	\begin{description}
		\item[Step 1.] Compute $\typseq(r)$ and $\typseq(c)$.
		\item[Step 2.] Apply the Split-and-Chop Algorithm on input ($\typseq(r)$, $\typseq(c)$) to obtain $t \dominates \typseq(r) \allmerges \typseq(c)$.
		\item[Step 3.] Return the typical lift $\untypseq(t)$ of $t$.
	\end{description}
	Correctness of the above algorithm follows from \cref{cor:merge:path,lem:split-and-chop:typical,lem:typical:lift} which together guarantee that $\untypseq(t) \dominates r \allmerges c$, and by \cref{obs:typical:lift:basics}, $\untypseq(t)$ is a non-diagonal merge, i.e.\ $\untypseq(t) \in r \allmergesnd c$. By \cref{lem:compute:typseq}, Step 1 can be done in time $\cO(m + n)$, by \cref{lem:split-and-chop:typical}, Step 2 takes time $\cO(m + n)$ as well, and by \cref{obs:typical:lift:basics}, the typical lift of $t$ can also be computed in time $\cO(m + n)$. Hence, the overall runtime of the algorithm is $\cO(m + n)$.
\end{proof}

\section{Directed Width Measures of Series Parallel Digraphs}\label{sec:width:measures}
In this section, we give algorithmic consequences of the Merge Dominator Lemma.
In \cref{sec:cutw}, we provide an algorithm that
computes the (weighted) cutwidth of (arc-weighted) series parallel digraphs on $n$ vertices in time $\cO(n^2)$.
In \cref{sec:mcutw} we provide a linear-time transformation
that allows for computing the modified cutwidth of an SPD on $n$ vertices in $\cO(n^2)$ time,
using the algorithm that computes the weighted cutwidth of an arc-weighted SPD.

\subsection{Cutwidth}\label{sec:cutw}
Recall that given a topological order $v_1, \ldots, v_n$ of a directed acyclic graph $G$, 
its cutwidth is the maximum over all $i \in [n-1]$ of the number of arcs that have 
their tail vertex in $\{v_1, \ldots, v_i\}$ and their head vertex in $\{v_{i+1}, \ldots, v_n\}$, 
and that the cutwidth of $G$ is the minimum cutwidth over all its topological orders.
We will now deal with the following computational problem.

\fancyproblemdef
	{Cutwidth of Series Parallel Digraphs}
	{A series parallel digraph $G$.}
	{What is the cutwidth of $G$?}
	
Given a series parallel digraph $G$, we follow a bottom-up dynamic programming scheme 
along the decomposition tree $T$ that yields $G$. 
Each node $t \in V(T)$ has a subgraph $G_t$ of $G$ associated with it, that is also series parallel.
Naturally, we use the property that $G_t$ is obtained either via series or parallel
composition of the SPD's associated with its two children.

To make this problem amenable to be solved using merges of integer sequences, 
we define the following notion of a cut-size sequence of a topological order of a directed acyclic graph 
which records for each position in the order, how many arcs cross it.

\begin{definition}[Cut-Size Sequence]
	Let $G$ be a directed acyclic graph on $n$ vertices and let $\linord \in \toporders(G)$ be a topological order of $G$. 
	The sequence $x(1), \ldots, x(n-1)$, where for $i \in [n-1]$, 
	$$x(i) = \card{\{uv \in A(G) \mid \linord(u) \le i \wedge \linord(v) > i\}},$$ 
	is the \emph{cut-size sequence} of $\linord$, and denoted by $\cutsizeseq(\linord)$.
	For a set of topological orders $\toporders' \subseteq \toporders(G)$, we let $\cutsizeseq(\toporders') \defeq \{\cutsizeseq(\linord) \mid \linord \in \toporders'\}$.
\end{definition}

Throughout the remainder of this section, we slightly abuse notation: If $G_1$ and $G_2$ are SPD's that are being composed with a series composition, and $\linord_1 \in \toporders(G_1)$ and $\linord_2 \in \toporders(G_2)$, 
then we consider $\linord = \linord_1 \concat \linord_2$ to be the concatenation of the 
two topological orders where $t_2 = s_1$ only appears \emph{once} in $\linord$.

We first argue via two simple observations 
that when computing the cutwidth of a series parallel digraph $G$ by following its decomposition 
tree in a bottom up manner, we only have to keep track of a set of topological orders 
that induce a set of cut-size sequences that dominate all cut-size sequences of $G$. 
\begin{observation}\label{obs:cutsizeseq:dominate}
	Let $G$ be a DAG and $\linord, \linordd \in \toporders(G)$. 
	If $\cutsizeseq(\linord) \dominates \cutsizeseq(\linordd)$, then $\cutwidth(\linord) \le \cutwidth(\linordd)$.
\end{observation}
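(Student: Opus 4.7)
The plan is to unwind the definitions and exploit the fact that extensions of an integer sequence preserve its maximum value. Specifically, by Definition~\ref{def:dag:measures}\cref{def:dag:measures:cutwidth} and the definition of $\cutsizeseq$, we immediately have
\[
    \cutwidth(\linord) \;=\; \max(\cutsizeseq(\linord)) \qquad\text{and}\qquad \cutwidth(\linordd) \;=\; \max(\cutsizeseq(\linordd)),
\]
so it suffices to show that domination of cut-size sequences implies the analogous inequality on their maxima.

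The key observation is that for any integer sequence $s$ and any extension $s^* \in \extensions(s)$, we have $\max(s) = \max(s^*)$, since $s^*$ is obtained from $s$ by repeating each element one or more times, which neither introduces new values nor removes any existing value.

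Hence, suppose $\cutsizeseq(\linord) \dominates \cutsizeseq(\linordd)$. By the definition of $\dominates$, there exist extensions $r^* \in \extensions(\cutsizeseq(\linord))$ and $s^* \in \extensions(\cutsizeseq(\linordd))$ of the same length with $r^* \le s^*$ pointwise. Pointwise inequality implies $\max(r^*) \le \max(s^*)$, and combining this with the extension-preserves-max observation gives
\[
    \max(\cutsizeseq(\linord)) \;=\; \max(r^*) \;\le\; \max(s^*) \;=\; \max(\cutsizeseq(\linordd)),
\]
which by the initial identification is precisely $\cutwidth(\linord) \le \cutwidth(\linordd)$.

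There is no real obstacle here: the statement is an immediate consequence of the definitions once one notes that domination only inserts repetitions (via extensions) and then takes a pointwise inequality, both of which are compatible with taking maxima. I expect the author's proof to be a one- or two-line remark along these lines.
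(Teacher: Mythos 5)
Your proof is correct and matches the paper's argument exactly: the paper's one-line justification is precisely that domination implies the inequality of maxima, and $\cutwidth(\linord) = \max(\cutsizeseq(\linord))$. You have merely spelled out the (easy) intermediate fact that extensions preserve the maximum, which the paper leaves implicit.
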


This is simply due to the fact that $\cutsizeseq(\linord) \dominates \cutsizeseq(\linordd)$ 
implies that $\max(\cutsizeseq(\linord)) \le \max(\cutsizeseq(\linordd))$.
Next, if $G$ is obtained from $G_1$ and $G_2$ via series or parallel composition, 
and we have $\linord_1, \linordd_1 \in \toporders(G_1)$ such that $\cutsizeseq(\linord_1) \dominates \cutsizeseq(\linordd_1)$,
then it is always beneficial to choose $\linord_1$ over $\linordd_1$, and $\linord_1$ can be disregarded.

\begin{observation}\label{lem:top:order:dominate}
	Let $G$ be an SPD that is obtained via series or parallel composition from SPD's $G_1$ and $G_2$.
	Let $\linord_1, \linordd_1 \in \toporders(G_1)$ be such that $\cutsizeseq(\linord_1) \dominates \cutsizeseq(\linordd_1)$.
	Let $\linord, \linordd \in \toporders(G)$ be such that $\linord|_{V(G_1)} = \linord_1$, $\linordd|_{V(G_1)} = \linordd_1$, 
	and for all $v \in V(G_2)$, $\linord(v) = \linordd(v)$.
	Then, $\cutsizeseq(\linord) \dominates \cutsizeseq(\linordd)$.
\end{observation}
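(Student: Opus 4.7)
The plan is to split the proof into the two cases depending on whether $G$ is obtained by series or parallel composition, with the series case essentially immediate and the parallel case requiring more care.

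For the series case $G = G_1 \seriescomp G_2$, I would begin by observing that any topological order of $G$ must place all of $V(G_1)$ in its first $|V(G_1)|$ positions (with $t_1 = s_2$ at position $|V(G_1)|$), followed by $V(G_2) \setminus \{s_2\}$. Because $A(G_1)$ and $A(G_2)$ are disjoint and the only shared vertex is the bridge $t_1 = s_2$, the cut-size sequence decomposes cleanly as $\cutsizeseq(\linord) = \cutsizeseq(\linord_1) \concat \cutsizeseq(\linord_2)$, and analogously for $\linordd$. Since $\linord$ and $\linordd$ agree on $V(G_2)$ we have $\linord_2 = \linordd_2$ and therefore $\cutsizeseq(\linord_2) = \cutsizeseq(\linordd_2)$, which combined with the hypothesis $\cutsizeseq(\linord_1) \dominates \cutsizeseq(\linordd_1)$ yields the conclusion by a single application of Lemma~\ref{lem:BK}\cref{lem:BK:3:19}.

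For the parallel case $G = G_1 \parallelcomp G_2$, the key structural observation is that the agreement of $\linord$ and $\linordd$ on $V(G_2)$ (combined with $s$ and $t$ being forced into the first and last positions) implies that the \emph{set} of positions occupied by $V(G_1)$-vertices is the same in both orderings; only the assignment of specific $V(G_1) \setminus \{s,t\}$ vertices to those positions may differ. Setting $\sigma_j(i) \defeq |\{v \in V(G_j) : \linord(v) \le i\}|$, this means $\sigma_1$ and $\sigma_2$ are identical functions of $i$ for both $\linord$ and $\linordd$. Counting contributions from $G_1$- and $G_2$-arcs separately, one obtains
\[
	\cutsizeseq(\linord)(i) = \cutsizeseq(\linord_1)(\sigma_1(i)) + \cutsizeseq(\linord_2)(\sigma_2(i)),
\]
and the same with $\cutsizeseq(\linordd_1)$ replacing $\cutsizeseq(\linord_1)$ for $\linordd$ (using $\cutsizeseq(\linordd_2) = \cutsizeseq(\linord_2)$). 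Thus both cut-size sequences are merges of the $G_j$-cut-size sequences along \emph{the same} non-diagonal merge path in the merge matrix.

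To lift the domination $\cutsizeseq(\linord_1) \dominates \cutsizeseq(\linordd_1)$ through this construction, I would take extensions $a^* \in \extensions(\cutsizeseq(\linord_1))$ and $b^* \in \extensions(\cutsizeseq(\linordd_1))$ of equal length $L$ with $a^* \le b^*$, and use them to drive a joint extension of both cut-size sequences: iterating over $\ell = 1, \ldots, L$ and appending to the two extensions the blocks of cut-size values at $\sigma_1 = \phi_a(\ell)$ (resp.\ $\sigma_1 = \phi_b(\ell)$), the pointwise inequality at each newly emitted position reduces to $a^*(\ell) \le b^*(\ell)$ since the $\cutsizeseq(\linord_2)$-contributions are identical for $\linord$ and $\linordd$. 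The technically delicate part, which I expect to be the main obstacle, is matching the total emitted length on both sides: the $\sigma_1$-blocks indexed by $\phi_a(\ell)$ and $\phi_b(\ell)$ in general have different sizes, so a common refinement of the block structures of $a^*$ and $b^*$ has to be introduced. Alternatively, one can sidestep the bookkeeping by appealing to Lemma~\ref{lem:BK}\cref{lem:BK:3:13} to produce \emph{some} merge of $\cutsizeseq(\linord_1)$ and $\cutsizeseq(\linord_2)$ dominating $\cutsizeseq(\linordd)$ and then argue that the specific merge realized by $\linord$ is equivalent to the one produced by the lemma, using Lemma~\ref{lem:BK}\cref{cor:BK:3:11}.
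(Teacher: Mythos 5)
Your argument uses the same basic idea as the paper's justification: for every position $i$, the $G_2$-arcs contribute identically to $\cutsizeseq(\linord)(i)$ and $\cutsizeseq(\linordd)(i)$, so the domination should transfer from the $G_1$-restrictions. Your series case is clean and correct ($\cutsizeseq(\linord_2) = \cutsizeseq(\linordd_2)$ plus \cref{lem:BK}\cref{lem:BK:3:19}), and in the parallel case your structural setup is right: both cut-size sequences are realized as non-diagonal merges of the same pair of subgraph cut-size sequences along the same merge path, with an identical $\cutsizeseq(\linord_2)$-component.

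However, the step you yourself flag as the main obstacle---constructing witness extensions of $\cutsizeseq(\linord)$ and $\cutsizeseq(\linordd)$ from the witness extensions $a^*, b^*$ of $\cutsizeseq(\linord_1), \cutsizeseq(\linordd_1)$ despite their differing block structures---is genuinely left open, and it is where all the content of the observation lies. When $\phi_a(\ell) \neq \phi_b(\ell)$, the two row-blocks you propose to emit at step $\ell$ cover \emph{different} column ranges of $\cutsizeseq(\linord_2)$, so the pointwise inequality does not simply reduce to $a^*(\ell) \le b^*(\ell)$; you also need to control which $c$-values get paired up, and that is exactly the bookkeeping you defer. To be fair, the paper's own justification is equally terse at precisely this point (``we can use extensions \ldots to construct extensions \ldots'' with no construction given), so you have reconstructed the argument at the level of detail the paper offers---but the gap is real. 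Your proposed fallback does not close it either: \cref{lem:BK}\cref{lem:BK:3:13} yields only \emph{some} merge of $\cutsizeseq(\linord_1)$ with $\cutsizeseq(\linord_2)$ that dominates $\cutsizeseq(\linordd)$, not necessarily the specific merge $\cutsizeseq(\linord)$ realized by the fixed interleaving, and \cref{lem:BK}\cref{cor:BK:3:11} provides no route to concluding that those two merges are equivalent. (That weaker statement---\emph{some} merge of the dominating representatives dominates---is exactly what \cref{lem:spd:dominate} establishes, and it is what the correctness of the algorithm in \cref{thm:cutwidth} actually rests on.)
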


The previous observation is justified as follows.
Let $\cutsizeseq(\linord) = x(1), \ldots, x(n-1)$ and $\cutsizeseq(\linordd) = y(1), \ldots, y(n-1)$.
Then, for each $i \in [n-1]$, 
the arcs of $G_2$ contribute equally to the values $x(i)$ and $y(i)$ 
(in particular since $G_1$ and $G_2$ are arc-disjoint).
Therefore, we can use extensions of $\cutsizeseq(\linord_1)$ and $\cutsizeseq(\linordd_1)$ 
that witnesses that $\cutsizeseq(\linord_1) \dominates \cutsizeseq(\linordd_1)$
to construct extensions of $\cutsizeseq(\linord)$ and $\cutsizeseq(\linordd)$ that witness that 
$\cutsizeseq(\linord) \dominates \cutsizeseq(\linordd)$.

The following lemma states that the cut-size sequences of an SPD $G$ can be computed by pairwise concatenation or non-diagonal merging 
(depending on whether $G$ is obtained via series or parallel composition) of the two smaller SPD's that $G$ is obtained from.
Intuitively speaking, the reason why we can only consider \emph{non-diagonal} merges is the following.
When $G$ is obtained from $G_1$ and $G_2$ via parallel composition, then each topological order
of $G$ can be considered the `merge' of a topological order of $G_1$ and one of $G_2$, where 
each position (apart from the first and the last) contains a vertex \emph{either} from $G_1$ \emph{or} from $G_2$.
Now, in a merge of a cut-size sequence of $G_1$ with a cut-size sequence of $G_2$, 
a diagonal step would essentially mean that in some position, 
we insert both a vertex from $G_1$ and a vertex of $G_2$; this is of course not possible.
\begin{lemma}\label{lem:cutsizeseq:comp}
	Let $G_1$ and $G_2$ be SPD's. Then the following hold.
	\begin{enumerate}
		\item\label{lem:cutsizeseq:comp:series} $\cutsizeseq(\toporders(G_1 \seriescomp G_2)) = \cutsizeseq(\toporders(G_1)) \allconcat \cutsizeseq(\toporders(G_2))$.
		\item\label{lem:cutsizeseq:comp:parallel} $\cutsizeseq(\toporders(G_1 \parallelcomp G_2)) = \cutsizeseq(\toporders(G_1)) \allmergesnd \cutsizeseq(\toporders(G_2))$.
	\end{enumerate}
\end{lemma}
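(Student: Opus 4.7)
The plan is to prove the two equalities separately; both rest on the standard structural fact (provable by induction on the decomposition tree) that in an SPD with terminals $(s,t)$, every vertex lies on some directed path from $s$ to $t$ in the underlying digraph.

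For the series composition (part~\cref{lem:cutsizeseq:comp:series}), the key observation is that the shared vertex $u \defeq t_1 = s_2$ must occupy position $|V(G_1)|$ in any $\linord \in \toporders(G_1 \seriescomp G_2)$: every vertex of $V(G_1) \setminus \{u\}$ has a directed path to $u$ in $G_1$ and so precedes $u$ in $\linord$, while every vertex of $V(G_2) \setminus \{u\}$ is reachable from $u = s_2$ in $G_2$ and so succeeds $u$. Hence the map $(\linord_1, \linord_2) \mapsto \linord_1 \concat \linord_2$ (with $u$ identified) is a bijection between $\toporders(G_1) \times \toporders(G_2)$ and $\toporders(G_1 \seriescomp G_2)$. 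For any such $\linord$ and any cut position $i \in [|V(G)| - 1]$, arcs of $G_1$ can contribute to $\cutsizeseq(\linord)(i)$ only when $i \le |V(G_1)| - 1$ (their heads all lie at $\linord$-position $\le |V(G_1)|$), and in that case their contribution is precisely $\cutsizeseq(\linord_1)(i)$; symmetrically arcs of $G_2$ contribute only when $i \ge |V(G_1)|$, contributing $\cutsizeseq(\linord_2)(i - |V(G_1)| + 1)$. Since $\length(\cutsizeseq(\linord)) = (|V(G_1)| - 1) + (|V(G_2)| - 1)$, this exhibits $\cutsizeseq(\linord) = \cutsizeseq(\linord_1) \concat \cutsizeseq(\linord_2)$, yielding the claimed set equality.

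For the parallel composition (part~\cref{lem:cutsizeseq:comp:parallel}), I would prove both inclusions. For $\subseteq$, given $\linord \in \toporders(G_1 \parallelcomp G_2)$, set $\linord_j \defeq \linord|_{V(G_j)}$; since $s = s_1 = s_2$ has no in-arcs and $t = t_1 = t_2$ no out-arcs, they occupy positions $1$ and $n \defeq |V(G)|$ of $\linord$. For $i \in [n-1]$ let $k_j(i)$ be the number of $V(G_j)$-vertices among the first $i$ positions of $\linord$. Since $A(G) = A(G_1) \cup A(G_2)$ and each arc of $G_j$ has both endpoints in $V(G_j)$, the cut-size splits as
\begin{align*}
\cutsizeseq(\linord)(i) \;=\; \cutsizeseq(\linord_1)(k_1(i)) \,+\, \cutsizeseq(\linord_2)(k_2(i)).
\end{align*}
For $i \in [n-2]$, the vertex at $\linord$-position $i+1$ lies in exactly one of $V(G_1) \setminus \{s,t\}$ or $V(G_2) \setminus \{s,t\}$, so precisely one of $k_1, k_2$ increments by $1$; hence $(k_1(i), k_2(i))_{i=1}^{n-1}$ traces a non-diagonal path from $(1,1)$ to $(|V(G_1)|-1, |V(G_2)|-1)$ in the merge matrix of $\cutsizeseq(\linord_1)$ and $\cutsizeseq(\linord_2)$, displaying $\cutsizeseq(\linord)$ as an element of $\cutsizeseq(\linord_1) \allmergesnd \cutsizeseq(\linord_2)$. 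For $\supseteq$, given $\linord_j \in \toporders(G_j)$ and $t \in \cutsizeseq(\linord_1) \allmergesnd \cutsizeseq(\linord_2)$, the underlying non-diagonal path prescribes at each step whether the next interior vertex comes from $\linord_1$ or from $\linord_2$; placing $s$ first and $t$ last produces a linear order $\linord$ of $V(G)$ whose restriction to $V(G_j)$ equals $\linord_j$. Since $\linord_j$ respects all arcs of $G_j$ and $A(G) = A(G_1) \cup A(G_2)$, $\linord$ is topological, and the analysis from the $\subseteq$-direction yields $\cutsizeseq(\linord) = t$.

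The main obstacle I anticipate is the bookkeeping in the parallel case: one must verify that the shared source and sink, which would naively correspond to diagonal steps in the merge pattern, are safely absorbed at the boundary positions $i = 0$ and $i = n$ which lie outside the index range $[n-1]$ of $\cutsizeseq(\linord)$, so that the merge pattern witnessed by the cut-size sequence is genuinely non-diagonal. The length identity $|V(G)| - 1 = (|V(G_1)| - 1) + (|V(G_2)| - 1) - 1$, which coincides with the length of a non-diagonal path through the merge matrix of $\cutsizeseq(\linord_1)$ and $\cutsizeseq(\linord_2)$, is the sanity check that this bookkeeping is internally consistent.
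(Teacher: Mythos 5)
Your proof is correct and follows essentially the same approach as the paper: split each topological order of the composed digraph into its restrictions to $V(G_1)$ and $V(G_2)$, observe that the arc sets are disjoint, and track how cut positions of $\linord$ correspond to cut positions of $\linord_1$ and $\linord_2$. You supply slightly more detail than the paper does --- in particular the explicit justification (via the every-vertex-on-an-$s$-$t$-path fact) that the shared vertex sits at position $\lvert V(G_1)\rvert$ in the series case, and the careful bookkeeping showing the merge path is non-diagonal --- but these are elaborations of the paper's argument rather than a different route.
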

\begin{proof}
	\cref{lem:cutsizeseq:comp:series}. 
	Let $\cutsizeseq(\linord) \in \cutsizeseq(\toporders(G_1 \seriescomp G_2))$ be such that 
	$\linord$ is a topological order of $G_1 \seriescomp G_2$. 
	Then, $\linord$ consists of two contiguous parts, 
	namely $\linord_1 \defeq \linord|_{V(G_1)} \in \toporders(G_1)$ followed by $\linord_2 \defeq \linord|_{V(G_2)} \in \toporders(G_2)$.
	Since there are no arcs from $V(G_1) \setminus \{t_1\}$ to $V(G_2) \setminus \{s_2\}$,
	we have that 
	$\cutsizeseq(\linord) = \cutsizeseq(\linord_1) \concat \cutsizeseq(\linord_2) 
	\in \cutsizeseq(\toporders(G_1)) \allconcat \cutsizeseq(\toporders(G_2))$. 
	The other inclusion follows similarly.
	
	\cref{lem:cutsizeseq:comp:parallel}. 
	Let $\cutsizeseq(\linord) \in \cutsizeseq(\toporders(G_1 \parallelcomp G_2))$ 
	be such that $\linord$ is a topological order of $G_1 \parallelcomp G_2$. 
	Let $\linord_1 \defeq \linord|_{V(G_1)}$ and $\linord_2 \defeq \linord|_{V(G_2)}$. 
	It is clear that $\linord_1 \in \toporders(G_1)$ and that $\linord_2 \in \toporders(G_2)$.
	Let $\cutsizeseq(\linord) = x(1), \ldots, x(n-1)$, $\cutsizeseq(\linord_1) = y_1(1), \ldots, y_1(n_1 - 1)$,
	and $\cutsizeseq(\linord_2) = y_2(1), \ldots, y_2(n_2-1)$. 
	For any $i \in \{1, \ldots, n-1\}$, 
	let $i_1$ be the maximum index such that $\linord(\linord_1^{-1}(i_1)) \le i$, and define $i_2$ accordingly.
	Then, the set of arcs that cross the cut between positions $i$ and $i+1$ in $\linord$ 
	is the union of the set of arcs crossing 
	the cut between positions $i_1$ and $i_1 + 1$ in $\linord_1$ 
	and the set of arcs crossing the cut between positions $i_2$ and $i_2 + 1$ in $\linord_2$.
	Since $G_1$ and $G_2$ are arc-disjoint, this means that $x(i) = y_1(i_1) + y_2(i_2)$.
	Together with the observation that each vertex at position $i+1 < n$ in $\linord$
	is \emph{either} from $G_1$ \emph{or} from $G_2$, 
	we have that $$x(i+1) \in \{y_1(i_1 + 1) + y_2(i_2), y_1(i_1) + y_2(i_2 + 1)\},$$ 
	in other words, we have that $\cutsizeseq(\linord) \in \cutsizeseq(\linord_1) \allmergesnd \cutsizeseq(\linord_2) \subseteq \cutsizeseq(\toporders(G_1)) \allmergesnd \cutsizeseq(\toporders(G_2))$.
	The other inclusion can be shown similarly, essentially using the fact that we are only considering non-diagonal merges.
\end{proof}

We now prove the crucial lemma of this section which states that we can compute a dominating cut-size sequence of an SPD $G$ from dominating cut-size sequences of the smaller SPD's that $G$ is obtained from. For technical reasons, we assume in the following lemma that $G$ has no parallel arcs, which does not affect the algorithm presented in this section.
\begin{lemma}\label{lem:spd:dominate}
	Let $G$ be an SPD without parallel arcs. Then there is a topological order $\linord^*$ of $G$ such that $\cutsizeseq(\linord^*)$ dominates all cut-size sequences of $G$. Moreover, the following hold. Let $G_1$ and $G_2$ be SPD's and for $r \in [2]$, let $\linord_r^*$ be a topological order of $G_r$ such that $\cutsizeseq(\linord_r^*)$ dominates all cut-size sequences of $G_r$.
	\begin{enumerate}
		\item\label{lem:spd:dominate:series} If $G = G_1 \seriescomp G_2$, then $\linord^* = \linord_1^* \concat \linord_2^*$.
		\item\label{lem:spd:dominate:parallel} If $G = G_1 \parallelcomp G_2$, then $\linord^*$ can be found as the topological order of $G$ such that $\cutsizeseq(\linord^*)$ dominates $\cutsizeseq(\linord^*_1) \allmergesnd \cutsizeseq(\linord_2^*)$.
	\end{enumerate}
\end{lemma}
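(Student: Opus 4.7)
The plan is to prove the lemma by induction on the decomposition tree $T$ that yields $G$. The base case, where $G$ is a single arc, is trivial: the unique topological order has cut-size sequence $(1)$, which dominates itself. For the inductive step, $G$ is either a series or a parallel composition of two smaller SPDs $G_1$ and $G_2$, for which the induction hypothesis supplies topological orders $\linord_1^*$ and $\linord_2^*$ whose cut-size sequences dominate all cut-size sequences of $G_1$ and $G_2$, respectively.

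For the series case I would set $\linord^* \defeq \linord_1^* \concat \linord_2^*$. By \cref{lem:cutsizeseq:comp}\cref{lem:cutsizeseq:comp:series}, every cut-size sequence of $G$ has the form $\cutsizeseq(\linord_1) \concat \cutsizeseq(\linord_2)$ for some $\linord_r \in \toporders(G_r)$, $r\in[2]$. The induction hypothesis gives $\cutsizeseq(\linord_r^*) \dominates \cutsizeseq(\linord_r)$, and \cref{lem:BK}\cref{lem:BK:3:19} shows that concatenation preserves domination, yielding $\cutsizeseq(\linord^*) \dominates \cutsizeseq(\linord)$ for every $\linord \in \toporders(G)$.

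For the parallel case I would set $r \defeq \cutsizeseq(\linord_1^*)$, $c \defeq \cutsizeseq(\linord_2^*)$, and invoke the Merge Dominator Lemma (\cref{lem:merge:dom:strong}) to obtain a non-diagonal merge $t \in r \allmergesnd c$ with $t \dominates r \allmerges c$. The containment $r \allmergesnd c \subseteq \cutsizeseq(\toporders(G))$ furnished by \cref{lem:cutsizeseq:comp}\cref{lem:cutsizeseq:comp:parallel} guarantees that $t = \cutsizeseq(\linord^*)$ for some $\linord^* \in \toporders(G)$; this will be the desired order. To verify the domination property, take any $\linord \in \toporders(G)$. By \cref{lem:cutsizeseq:comp}\cref{lem:cutsizeseq:comp:parallel}, $\cutsizeseq(\linord)$ lies in $\cutsizeseq(\linord_1) \allmergesnd \cutsizeseq(\linord_2) \subseteq \cutsizeseq(\linord_1) \allmerges \cutsizeseq(\linord_2)$ for some $\linord_1 \in \toporders(G_1)$, $\linord_2 \in \toporders(G_2)$. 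Using the induction hypothesis $r \dominates \cutsizeseq(\linord_1)$ and $c \dominates \cutsizeseq(\linord_2)$, \cref{lem:BK}\cref{lem:BK:3:13} produces some $y_0 \in r \allmerges c$ with $y_0 \dominates \cutsizeseq(\linord)$, and then $t \dominates y_0 \dominates \cutsizeseq(\linord)$ by transitivity.

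The main obstacle is largely conceptual: it is essential to invoke the Merge Dominator Lemma in its stronger \emph{non-diagonal} form (\cref{lem:merge:dom:strong}), because \cref{lem:cutsizeseq:comp}\cref{lem:cutsizeseq:comp:parallel} certifies that precisely the non-diagonal merges of cut-size sequences arise as cut-size sequences of the parallel composition; a dominating merge with diagonal steps would not directly correspond to a topological order of $G$. A minor bookkeeping point is that \cref{lem:BK}\cref{lem:BK:3:13} is phrased for the pointwise merge of equal-length sequences, so one first passes to the equal-length extensions witnessing $\cutsizeseq(\linord) \in \cutsizeseq(\linord_1) \allmergesnd \cutsizeseq(\linord_2)$ before applying it.
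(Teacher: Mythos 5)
Your proof is correct and follows the same inductive structure as the paper's. The base case, the series case (via \cref{lem:cutsizeseq:comp}\cref{lem:cutsizeseq:comp:series} and \cref{lem:BK}\cref{lem:BK:3:19}), and the overall chain of reasoning in the parallel case (apply \cref{lem:cutsizeseq:comp}\cref{lem:cutsizeseq:comp:parallel}, pass to equal-length extensions, use \cref{lem:BK}\cref{lem:BK:3:13}, then conclude by transitivity) all match the paper.

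There is one small stylistic difference in the parallel case worth noting. The paper takes as hypothesis only that $\cutsizeseq(\linord^*)$ dominates the set of non-diagonal merges $\cutsizeseq(\linord_1^*) \allmergesnd \cutsizeseq(\linord_2^*)$, and then bridges the gap to arbitrary merges by an intermediate application of \cref{lem:merge:nd:equiv}: from the merge $f \in \cutsizeseq(\linord_1^*) \allmerges \cutsizeseq(\linord_2^*)$ supplied by \cref{lem:BK}\cref{lem:BK:3:13}, it produces $f' \in \cutsizeseq(\linord_1^*) \allmergesnd \cutsizeseq(\linord_2^*)$ with $f' \dominates f$, and only then invokes the hypothesis on $\linord^*$. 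You instead invoke the Merge Dominator Lemma (\cref{lem:merge:dom:strong}) directly to obtain a $t \in r \allmergesnd c$ with $t \dominates r \allmerges c$; since $t$ already dominates the full merge set, the detour through \cref{lem:merge:nd:equiv} is unnecessary and the chain $t \dominates y_0 \dominates \cutsizeseq(\linord)$ closes immediately. Both are correct; yours is marginally more direct, while the paper's is phrased so as to make clear that \emph{any} $\linord^*$ dominating only the non-diagonal merges already suffices, which matches the literal statement of part~\cref{lem:spd:dominate:parallel}. This distinction is immaterial in context, since the algorithm will in any case produce $\linord^*$ via the Merge Dominator Lemma. Your remark on why the non-diagonal form of the Merge Dominator Lemma is essential, and the bookkeeping note about extensions before applying \cref{lem:BK}\cref{lem:BK:3:13}, are both accurate and correspond to points the paper's proof also handles.
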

\begin{proof}
	We prove the lemma by induction on the number of vertices of $G$. If $\card{V(G)} = 2$, then the claim is trivially true (there is only one topological order).
	Suppose that $\card{V(G)} \eqdef n > 2$.
	Since $n > 2$ and $G$ has no parallel arcs, we know that $G$ can be obtained from two SPD's $G_1$ and $G_2$ via series or parallel composition with $\card{V(G_1)} \eqdef n_1 < n$ and $\card{V(G_2)} \eqdef n_2 < n$. By the induction hypothesis, for $r \in [2]$, there is a unique topological order $\linord_r^*$ such that $\cutsizeseq(\linord_r^*)$ dominates all cut-size sequences of $G_r$. 
	
	Suppose $G = G_1 \seriescomp G_2$. 
	Since $\cutsizeseq(\linord_1^*)$ dominates all cut-size sequences of $G_1$ and $\cutsizeseq(\linord_2^*)$ dominates all cut-size sequences of $G_2$, we can conclude using~\cref{lem:BK}\cref{lem:BK:3:19} that $\cutsizeseq(\linord_1^*) \concat \cutsizeseq(\linord_2^*)$ dominates $\cutsizeseq(\toporders(G_1)) \allconcat \cutsizeseq(\toporders(G_2))$ which together with \cref{lem:cutsizeseq:comp}\cref{lem:cutsizeseq:comp:series} allows us to conclude that $\cutsizeseq(\linord_1^*) \concat \cutsizeseq(\linord_2^*) = \cutsizeseq(\linord_1^* \concat \linord_2^*)$ dominates all cut-size sequences of $G$.
	This proves \cref{lem:spd:dominate:series}.
	
	Suppose that $G = G_1 \parallelcomp G_2$, and let $\linord^*$ be a topological order of $G$ such that $\cutsizeseq(\linord^*)$ dominates $\cutsizeseq(\linord_1^*) \allmergesnd \cutsizeseq(\linord_2^*)$. 
	We show that $\cutsizeseq(\linord^*)$ dominates $\cutsizeseq(\toporders(G))$. Let $\linord \in \toporders(G)$. 
	By \cref{lem:cutsizeseq:comp}\cref{lem:cutsizeseq:comp:parallel}, 
	there exist topological orders $\linord_1 \in \toporders(G_1)$ and $\linord_2 \in \toporders(G_2)$ 
	such that $\cutsizeseq(\linord) \in \cutsizeseq(\linord_1) \allmergesnd \cutsizeseq(\linord_2)$. 
	In other words, there are extensions $e_1$ of $\cutsizeseq(\linord_1)$ and $e_2$ of $\cutsizeseq(\linord_2)$ 
	of the same length such that $\cutsizeseq(\linord) = e_1 \merge e_2$. 
	For $r \in [2]$, since $\cutsizeseq(\linord_r^*) \dominates \cutsizeseq(\linord_r)$, 
	we have that $\cutsizeseq(\linord_r^*) \dominates e_r$. 
	By \cref{lem:BK}\cref{lem:BK:3:13},\footnote{Take $r = e_1$, $s = e_2$, $r_0 = \cutsizeseq(\linord_1)$, and $s_0 = \cutsizeseq(\linord_2)$.} 
	there exists some $f \in \cutsizeseq(\linord_1^*) \allmerges \cutsizeseq(\linord_2^*)$
	such that $f \dominates e_1 \merge e_2$,
	and by \cref{lem:merge:nd:equiv}, there is some $f' \in \cutsizeseq(\linord_1^*) \allmergesnd \cutsizeseq(\linord_2^*)$ 
	such that $f' \dominates f$. Since $\cutsizeseq(\linord^*) \dominates \cutsizeseq(\linord_1^*) \allmergesnd \cutsizeseq(\linord_2^*)$,
	we have that $\cutsizeseq(\linord^*) \dominates f'$, and hence \cref{lem:spd:dominate:parallel} follows:
	\begin{align*}
		\cutsizeseq(\linord^*) \dominates f' \dominates f \dominates e_1 \merge e_2 = \cutsizeseq(\linord).
	\end{align*}
\end{proof}

We are now ready to prove the first main result of this section.
\begin{theorem}\label{thm:cutwidth}
	Let $G$ be an SPD on $n$ vertices. There is an algorithm that computes in time $\cO(n^2)$ the cutwidth of $G$, and outputs a topological ordering that achieves the upper bound.
\end{theorem}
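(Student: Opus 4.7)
The plan is a straightforward bottom-up dynamic programming algorithm along the decomposition tree of $G$, driven by \cref{lem:spd:dominate}. First, I would use \cref{thm:spd:recognition} to compute in $\cO(n)$ time a decomposition tree $T$ that yields $G$. Then I process $T$ from leaves to root, maintaining at each node $t \in V(T)$ a pair $(\linord_t^*, \cutsizeseq(\linord_t^*))$, where $\linord_t^* \in \toporders(G_t)$ is a topological order of $G_t$ whose cut-size sequence dominates every cut-size sequence of $G_t$. (Parallel arcs between $s_t$ and $t_t$ are handled as repeated leaves of the decomposition tree and simply contribute larger cut values; the cleanest way is to contract them into arc weights and invoke the weighted variant, but the combinatorics are the same.) At a leaf, $G_t$ is a single arc and both the topological order and its trivial cut-size sequence $(1)$ are immediate.

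At an internal series node, by \cref{lem:spd:dominate}\ref{lem:spd:dominate:series} I simply concatenate: $\linord_t^* \defeq \linord_1^* \concat \linord_2^*$ and $\cutsizeseq(\linord_t^*) \defeq \cutsizeseq(\linord_1^*) \concat \cutsizeseq(\linord_2^*)$. At an internal parallel node, I invoke the Merge Dominator Lemma (\cref{lem:merge:dom:strong}) on $\cutsizeseq(\linord_1^*)$ and $\cutsizeseq(\linord_2^*)$; in time $\cO(|V(G_t)|)$ this yields a non-diagonal merge $\tau$ that dominates the set $\cutsizeseq(\linord_1^*) \allmergesnd \cutsizeseq(\linord_2^*)$. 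By \cref{lem:spd:dominate}\ref{lem:spd:dominate:parallel} such a $\tau$ is the cut-size sequence of a dominating topological order of $G_t$. Because $\tau$ is non-diagonal, the sequence of its "right" versus "up" steps in the merge matrix prescribes exactly how to interleave $\linord_1^*$ and $\linord_2^*$ into $\linord_t^*$, which can be assembled in linear time from the output of the algorithm underlying \cref{lem:merge:dom:strong}. After processing the root $\fr$ of $T$, I output $\cutwidth(G) = \max(\cutsizeseq(\linord_\fr^*))$ together with $\linord_\fr^*$; correctness of this report is immediate from \cref{obs:cutsizeseq:dominate}.

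For the running time, each node $t$ is handled in $\cO(|V(G_t)|)$ time (trivially for series nodes, and by \cref{lem:merge:dom:strong} for parallel nodes). Summing over the decomposition tree, $\sum_{t \in V(T)} |V(G_t)| = \cO(n^2)$ in the worst case, since the decomposition tree can be badly unbalanced (a pathlike tree on $n$ leaves attains this bound), and this yields the claimed $\cO(n^2)$ total.

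The main obstacle I expect is not the correctness of the recursion, which is essentially packaged by \cref{lem:spd:dominate}, but the bookkeeping at parallel nodes: we must be able to read off from the Merge Dominator Lemma not only the dominating cut-size sequence $\tau$ but also the exact non-diagonal path realising it, so that the interleaving of $\linord_1^*$ and $\linord_2^*$ can be reconstructed. This is ensured by the structure of \cref{alg:split-and-chop} combined with the typical lift (which already produces a non-diagonal path in the merge matrix by \cref{obs:typical:lift:basics}); one simply has to be careful to thread this positional information through the recursion rather than only retaining the integer sequence. The $\cO(n^2)$ bound is essentially tight for this approach, as the cut-size sequence at intermediate nodes may itself have length $\Theta(n)$.
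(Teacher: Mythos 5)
Your proposal matches the paper's proof: bottom-up dynamic programming over the decomposition tree from Valdes et al., concatenation at series nodes, the Merge Dominator Lemma at parallel nodes, with correctness delegated to \cref{lem:spd:dominate} and the $\cO(n^2)$ bound from $\cO(n)$ tree nodes each costing $\cO(n)$. The only cosmetic difference is the handling of parallel arcs --- the paper subdivides all but one of each parallel bundle, while you suggest folding them into arc weights --- and you usefully flag the need to thread positional (path) information through the Merge Dominator Lemma so the actual topological order can be reconstructed, a detail the paper leaves implicit.
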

\begin{proof}
	We may assume that $G$ has no parallel arcs; if so, we simply subdivide all but one of the parallel arcs. 
	This neither changes the cutwidth, nor the fact that $G$ is series parallel.
	We can therefore apply \cref{lem:spd:dominate} on $G$ in the correctness proof later.
	
	We use the algorithm of Valdes et al.~\cite{VTL82} to compute in time $\cO(n + \card{A(G)})$ a decomposition tree $T$ that yields $G$, see \cref{thm:spd:recognition}. We process $T$ in a bottom-up fashion, and at each node $t \in V(T)$, compute a topological order $\linord_t$ of $G_t$, the series parallel digraph associated with node $t$, such that $\cutsizeseq(\linord_t)$ dominates all cut-size sequences of $G_t$. Let $t \in V(T)$.
	\begin{description}
		\item[Case 1 ($t$ is a leaf node).] In this case, $G_t$ is a single arc and there is precisely one topological order of $G_t$; we return that order.
		\item[Case 2 ($t$ is a series node with left child $\ell$ and right child $r$).] In this case, we look up $\linord_\ell$, a topological order such that $\cutsizeseq(\linord_\ell)$ dominates all cut-size sequences of $G_\ell$, and $\linord_r$, a topological order such that $\cutsizeseq(\linord_r)$ dominates all cut-size sequences of $G_r$. Following \cref{lem:spd:dominate}\cref{lem:spd:dominate:series}, we return $\linord_\ell \concat \linord_r$.
		\item[Case 3 ($t$ is a parallel node with left child $\ell$ and right child $r$).] In this case, we look up $\linord_\ell$ and $\linord_r$ as in Case 2, and we compute $\linord_t$ such that $\cutsizeseq(\linord_t)$ dominates $\cutsizeseq(\linord_\ell) \allmergesnd \cutsizeseq(\linord_r)$ using the Merge Dominator Lemma (\cref{lem:merge:dom:strong}). Following \cref{lem:spd:dominate}\cref{lem:spd:dominate:parallel}, we return $\linord_t$.
	\end{description}
	
	Finally, we return $\linord_{\fr}$, the topological order of $G_\fr = G$, where $\fr$ is the root of $T$.
	\cref{obs:cutsizeseq:dominate,lem:top:order:dominate} ensure that it is sufficient to compute in each of the above cases a set $\toporders_t^* \subseteq \toporders(G_t)$ with the following property. For each $\linord_t \in \toporders(G_t)$, there is a $\linord_t^* \in \toporders_t^*$ such that $\cutsizeseq(\linord_t^*) \dominates \cutsizeseq(\linord_t)$. By \cref{lem:spd:dominate}, we know that we can always find such a set of size one which is precisely what we compute in each of the above cases. Correctness of the algorithm follows.
	Since $T$ has $\cO(n)$ nodes and each of the above cases can be handled in at most $\cO(n)$ time by \cref{lem:merge:dom:strong}, we have that the total runtime of the algorithm is $\cO(n^2)$.
\end{proof}
Our algorithm in fact works for the more general problem of computing the \emph{weighted} cutwidth of a series parallel digraph which we now define formally.
\newcommand\weight{\omega}
\newcommand\weightedcutwidth{\mathsf{w}\cutwidth}
\begin{definition}
	Let $G$ be a directed acyclic graph and $\weight \colon A(G) \to \bN$ be a weight function. For a topological order $\linord \in \toporders(G)$ of $G$, the \emph{weighted cutwidth of $(\linord, \weight)$} is defined as 
	$$\weightedcutwidth(\linord, \weight) \defeq \max\nolimits_{i \in [n-1]}\sum\nolimits_{\stackrel{vw \in A(G)}{\linord(v) \le i, \linord(w) > i}} \weight(vw),$$
	and the \emph{weighted cutwidth of $(G, \weight)$} is $\weightedcutwidth(G, \weight) \defeq \min_{\linord \in \toporders(G)} \weightedcutwidth(\linord, \weight)$.
\end{definition}
The corresponding computational problem is defined as follows.

\fancyproblemdef
	{Weighted Cutwidth of Series Parallel Digraphs}
	{A series parallel digraph $G$ and an arc-weight function $\weight\colon A(G) \to \bN$.}
	{What is the weighted cutwidth of $(G, \weight)$?}
\begin{corollary}\label{cor:weighted:cutwidth}
	Let $G$ be an SPD on $n$ vertices and $\weight\colon A(G) \to \bN$ an arc-weight function. There is an algorithm that computes in time $\cO(n^2)$ the weighted cutwidth of $(G, \weight)$, and outputs a topological ordering that achieves the upper bound.
\end{corollary}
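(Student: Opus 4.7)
The plan is to mirror the proof of \cref{thm:cutwidth} essentially verbatim, replacing the cut-size sequence with its weighted analogue. Concretely, I would define for a topological order $\linord \in \toporders(G)$ the \emph{weighted cut-size sequence} whose $i$-th entry is $\sum_{vw \in A(G),\, \linord(v) \le i,\, \linord(w) > i} \weight(vw)$, rather than the count of such arcs. Observe that $\weightedcutwidth(\linord, \weight)$ is then just the maximum of this sequence, so the weighted analogue of \cref{obs:cutsizeseq:dominate} is immediate: domination of weighted cut-size sequences yields the corresponding inequality on weighted cutwidths. The analogue of \cref{lem:top:order:dominate} also carries over, since arcs of $G_1$ and $G_2$ remain disjoint under either composition and so contribute additively (with their weights) to corresponding positions.

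Next I would re-establish \cref{lem:cutsizeseq:comp}. For series composition, a concatenated topological order still yields a concatenation of the two weighted cut-size sequences. For parallel composition, at every cut the crossing arcs split disjointly into those of $G_1$ and those of $G_2$, so their weights sum entrywise, and since inserting any vertex (other than the shared source and sink) advances exactly one of the two sub-orders, we again land inside a non-diagonal merge. With this in place, the proof of \cref{lem:spd:dominate} goes through unchanged: induction on $|V(G)|$, with \cref{lem:BK}\cref{lem:BK:3:13,lem:BK:3:19} and \cref{lem:merge:nd:equiv} handling the domination bookkeeping exactly as before, and the Merge Dominator Lemma (\cref{lem:merge:dom:strong}) producing the dominating non-diagonal merge at parallel nodes.

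The one place where weights require a small adjustment is the initial reduction to the parallel-arc-free case. In \cref{thm:cutwidth} we subdivided parallel arcs; here I would instead coalesce each bundle of parallel arcs from $u$ to $v$ into a single arc of weight equal to the sum of the individual weights. This operation preserves both the weighted cutwidth (every topological cut treats a parallel bundle as an indivisible unit) and the series parallel property, and then \cref{lem:spd:dominate} applies to the resulting simple SPD. The algorithm is then identical to the one in \cref{thm:cutwidth}: compute the decomposition tree via \cref{thm:spd:recognition}, process it bottom-up producing a single dominating order per node by concatenation at series nodes and by \cref{lem:merge:dom:strong} at parallel nodes, and read off the maximum of the final weighted cut-size sequence. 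Each of the $\cO(n)$ nodes costs $\cO(n)$ time, for a total of $\cO(n^2)$.

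I do not expect a real obstacle: the entire machinery of Section \ref{sec:merge:dominate} operates on integer sequences abstractly and never uses the combinatorial meaning of "number of arcs", only additivity under merging and the nonnegativity implicit in $\weight \colon A(G) \to \bN$. The main item to be careful about is simply checking, at each invocation of a lemma from \cref{sec:cutw}, that the argument used only additivity of contributions from arc-disjoint subgraphs, which is what it does.
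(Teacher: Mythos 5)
Your proposal is correct and matches the paper's intent: the corollary is stated without a separate proof precisely because the argument is the one you give --- replace cut-size sequences by their weighted analogues and observe that the entire merge machinery (\cref{lem:merge:dom:strong}, \cref{lem:BK}, \cref{lem:merge:nd:equiv}) is agnostic to what the integers count, requiring only additivity over arc-disjoint pieces. Your substitution of coalescing for subdividing parallel arcs is a harmless variation; the paper's subdivision trick works here too if each half of a subdivided arc inherits the original weight.
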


\subsection{Modified Cutwidth}\label{sec:mcutw}
We now show how to use the algorithm for computing the weighted cutwidth of series parallel digraphs from \cref{cor:weighted:cutwidth} to give an algorithm that computes the \emph{modified cutwidth} of a series parallel digraph on $n$ vertices in time $\cO(n^2)$. 
Recall that given a topological order $v_1, \ldots, v_n$ of a directed acyclic graph $G$, 
its modified cutwidth is the maximum over all $i \in [n-1]$ of the number of arcs that have 
their tail vertex in $\{v_1, \ldots, v_{i-1}\}$ and their head vertex in $\{v_{i+1}, \ldots, v_n\}$, 
and that the modified cutwidth of $G$ is the minimum modified cutwidth over all its topological orders.
We are dealing with the following computational problem.

\fancyproblemdef
	{Modified Cutwidth of Series Parallel Digraphs}
	{A series parallel digraph $G$.}
	{What is the modified cutwidth of $G$?}

To solve this problem, we will provide a transformation which allows for applying the algorithm for the
\textsc{Weighted Cutwidth of SPD's} problem to compute the modified cutwidth. 
We would like to remark that this transformation is similar to one provided in~\cite{BFT09}, 
however some modifications are necessary to ensure that the digraph resulting from the transformation is an SPD.
\begin{theorem}\label{thm:mcutw}
	Let $G$ be an SPD on $n$ vertices. There is an algorithm that computes in time $\cO(n^2)$ the modified cutwidth of $G$, and outputs a topological ordering of $G$ that achieves the upper bound.
\end{theorem}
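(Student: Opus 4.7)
The plan is to reduce \textsc{Modified Cutwidth of SPD's} to \textsc{Weighted Cutwidth of SPD's}, for which we already have an $\cO(n^2)$-time algorithm by Corollary~\ref{cor:weighted:cutwidth}. To achieve the claimed runtime, the reduction must be computable in $\cO(n)$ time and must produce a weighted SPD on $\cO(n)$ vertices. The key design decision is how to ``shift'' weight away from the arcs incident to a vertex $v$ so that a weighted cut right after $v$ in the transformed graph reflects only the arcs of $G$ that \emph{strictly} pass over $v$, which is exactly the quantity counted by the modified cut at $v$.

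Concretely, I would define $G^*$ as follows: for each arc $(u, v) \in A(G)$, replace $(u, v)$ by a series composition of new arcs obtained by subdividing $(u,v)$ with one or two fresh interior vertices, and assign weight $1$ to an interior arc of this gadget while giving weight $0$ to the arc incident to $u$ and the arc incident to $v$. This guarantees that at an original vertex's position, the outgoing (weight-$0$) arcs contribute nothing, whereas an arc of $G$ strictly crossing $v$ contributes exactly $1$ through its interior weight-$1$ portion. Because a single arc (a leaf of the decomposition tree of $G$) is replaced by a series composition of arcs, $G^*$ is an SPD, and a decomposition tree for $G^*$ can be produced from one of $G$ in linear time via Theorem~\ref{thm:spd:recognition}. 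This is precisely the modification of the BFT09 construction needed to keep the resulting digraph series parallel.

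The heart of the argument is to show $\weightedcutwidth(G^*, \omega) = \modifiedcutwidth(G)$. For the upper bound, starting from a topological order $\linord$ of $G$ of modified cutwidth $k$, I would extend it to $\linord^*$ for $G^*$ by placing each subdivision vertex of the $(u,v)$-gadget immediately adjacent to one of $u$ or $v$ in a way that makes sure that, for any cut position, the total weight is at most $k$; case analysis on whether the cut is after an original vertex or after a subdivision vertex, combined with the fact that weight-$0$ arcs absorb the ``endpoint contributions'', gives the bound. For the lower bound, given any topological order $\linord^*$ of $G^*$, its restriction to $V(G)$ is a topological order $\linord$ of $G$, and for each $v \in V(G)$ the modified cut of $\linord$ at $v$ is at most the weighted cut of $\linord^*$ at the position of $v$, since each $G$-arc strictly crossing $v$ must route its interior weight-$1$ portion across that cut position.

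The main obstacle is the exact bookkeeping in this correspondence: one must argue that the chosen weights and the chosen placement of subdivision vertices make the weighted cuts at \emph{all} positions of $\linord^*$ (not just at original vertex positions) bounded by the modified cutwidth of $\linord$, and conversely that no topological order of $G^*$ can beat $\modifiedcutwidth(G)$. Once this equality is established, applying Corollary~\ref{cor:weighted:cutwidth} to $(G^*, \omega)$ yields $\modifiedcutwidth(G)$ in $\cO(n^2)$ time, and a topological order of $G$ achieving it is obtained by restricting the output order on $G^*$ to $V(G)$; correctness of this restriction follows from the lower bound direction above.
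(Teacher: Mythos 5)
Your high-level strategy is the same as the paper's: reduce \textsc{Modified Cutwidth of SPD's} to \textsc{Weighted Cutwidth of SPD's} (Corollary~\ref{cor:weighted:cutwidth}) via a linear-size gadget transformation that preserves series-parallelness, adapting the construction of~\cite{BFT09}. But your specific gadget --- subdividing each arc and using weights $0$ and $1$ --- has a genuine gap, and the exact equality $\weightedcutwidth(G^*, \omega) = \modifiedcutwidth(G)$ you assert does not hold.

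The problem is that a $0/1$-weight scheme provides no mechanism to force the subdivision vertices to sit at the right positions in a topological order of $G^*$, and both directions of your correspondence break because of this. For the lower-bound direction, nothing prevents an optimal order $\linord^*$ of $G^*$ from placing the interior weight-$1$ arc of the $(x,y)$-gadget entirely just before $y$ (or just after $x$); then that weight-$1$ arc fails to cross the position of a vertex $v$ that $(x,y)$ strictly crosses in $G$, so the weighted cut at $\linord^*(v)$ can undercount the modified cut at $v$, and $\weightedcutwidth(G^*,\omega)$ can be strictly smaller than $\modifiedcutwidth(G)$. For the upper-bound direction, suppose you place each $a_{xy}$ right after $x$ and $b_{xy}$ right before $y$ (the only placement that makes the lower bound have a chance); then at a vertex $v$ with in-degree $d$, the $d$ vertices $b_{x_1 v}, \dots, b_{x_d v}$ form a contiguous block just before $v$, and the cut between the first and second of them is crossed by the $d-1$ weight-$1$ arcs $(a_{x_j v}, b_{x_j v})$ for $j \ge 2$, on top of the arcs strictly crossing $v$. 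Since $d-1$ can be as large as $\modifiedcutwidth(G)$, this gives weighted cuts of size up to roughly $2k$ rather than $k$, so the claimed equality fails in this direction too.

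The paper sidesteps this by using a different gadget with a large weight that \emph{forces} the structure you want. Instead of subdividing arcs, split each internal vertex $v$ into $v_{in}, v_{out}$ joined by an arc of weight $m+1$ (where $m = |A(G)|$), and replace each original arc $(v,w)$ by $(v_{out}, w_{in})$ of weight $1$. Because any order in which $v_{in}$ and $v_{out}$ are not adjacent puts two weight-$(m{+}1)$ arcs across some cut, the large weight forces $v_{out}$ immediately after $v_{in}$ in any order of weighted cutwidth at most $m+k+1$. With that forced adjacency, the cut between $v_{in}$ and $v_{out}$ measures exactly $m+1$ plus the modified cut at $v$, and the remaining cuts (between $v_{out}$ and the next $w_{in}$) are bounded by $2k+1 \le m+k+1$. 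Crucially, the correspondence is the shifted threshold $\modifiedcutwidth(G) \le k \iff \weightedcutwidth(G',\omega) \le m+k+1$, not exact equality; aiming for exact equality with $0/1$ weights is what makes your bookkeeping unworkable. You should also verify that vertex splitting preserves series-parallelness: a series node's shared terminal $t_1 = s_2$ becomes $x_{in} \to x_{out}$, so $G' = G_1' \seriescomp X \seriescomp G_2'$ where $X$ is the single weight-$(m{+}1)$ arc, which the paper checks by induction on the decomposition tree.
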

\begin{proof}
	We give a transformation that enables us to solve \textsc{Modified Cutwidth of SPD's} with help of an algorithm that solves \textsc{Weighted Cutwidth of SPD's}.

Let $(G, (s, t))$ be an SPD on $n$ vertices and $m$ arcs. Again, we assume that $G$ has no parallel arcs; if so, we simply subdivide all but one of the parallel arcs. This does not change the (modified) cutwidth, and keeps a digraph series parallel.
We construct another digraph $G'$ and an arc-weight function $\weight\colon A(G') \to \bN$ as follows. For each vertex $v \in V(G) \setminus \{s, t\}$, we add to $G'$ two vertices $v_{in}$ and $v_{out}$. We add $s$ and $t$ to $G'$ and write $s$ as $s_{out}$ and $t$ as $t_{in}$. We add the following arcs to $G'$. First, for each $v \in V(G)$, we add an arc $(v_{in}, v_{out})$ and we let $\weight((v_{in}, v_{out})) \defeq m + 1$. Next, for each arc $(v, w) \in A(G)$, we add an arc $(v_{out}, w_{in})$ to $G'$ and we let $\weight((v_{out}, w_{in})) \defeq 1$.
For an illustration see \cref{fig:reduction}.
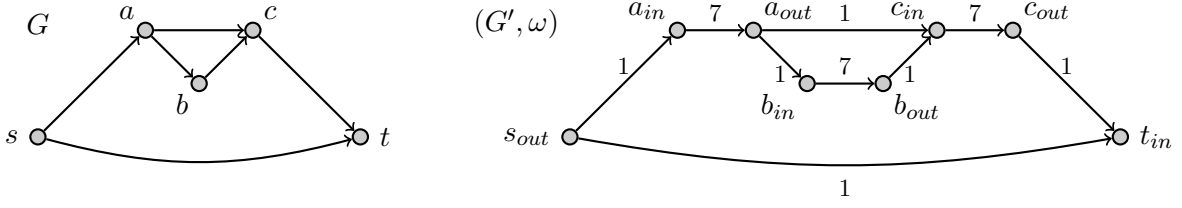
\begin{figure}
	\centering
	\begin{tikzpicture}[auto, node distance = 1cm, thick, 
				vtx/.style={circle,fill=black!20,draw,thick,minimum size=.2cm,inner sep=0pt},
				arc/.style={ ->, thick},
				edgelabel/.style = {scale=0.85, midway}]
		\begin{scope}
			\node[vtx, label=left:{$s$}] (s) at (0, 0) {};
			\node[vtx, above right of=s, node distance=2cm] (a) {};
			\node[anchor = south east] at (a) {$a$};
			\node[vtx, below right of=a] (b) {};
			\node[anchor = north east] at (b) {$b$};
			\node[vtx, above right of=b] (c) {};
			\node[anchor = south west] at (c) {$c$};
			\node[vtx, below right of=c, node distance=2cm, label=right:{$t$}] (t) {};
			
			\node[above of=s, node distance=1.5cm] {$G$};
			
			\draw[arc] (s) to (a);
			\draw[arc] (a) to (b);
			\draw[arc, bend right=15] (s) to (t);
			\draw[arc] (a) to (c);
			\draw[arc] (b) to (c);
			\draw[arc] (c) to (t);
		\end{scope}
		\begin{scope}[xshift=7cm]
			\node[vtx, label=left:{$s_{out}$}] (sOut) at (0, 0) {};
			\node[vtx, above right of=sOut, node distance=2cm] (aIn) {};
			\node[anchor = south east] at (aIn) {$a_{in}$};
			\node[vtx, right of=aIn] (aOut) {};
			\node[anchor = south west] at (aOut) {$a_{out}$};
			\node[vtx, below right of=aOut] (bIn) {};
			\node[anchor = north east] at (bIn) {$b_{in}$};
			\node[vtx, right of=bIn] (bOut) {};
			\node[anchor = north west] at (bOut) {$b_{out}$};
			\node[vtx, above right of=bOut] (cIn) {};
			\node[anchor = south east] at (cIn) {$c_{in}$};
			\node[vtx, right of=cIn] (cOut) {};
			\node[anchor = south west] at (cOut) {$c_{out}$};
			\node[vtx, below right of=cOut, node distance=2cm, label=right:{$t_{in}$}] (tIn) {};
			
			\node[above of=sOut, node distance=1.5cm, anchor=east] {$(G', \omega)$};
			
			\draw[arc] (aIn) -> (aOut) node [edgelabel, above] {$7$};
			\draw[arc] (bIn) -> (bOut) node [edgelabel, above] {$7$};
			\draw[arc] (cIn) -> (cOut) node [edgelabel, above] {$7$};
			\draw[arc, bend right=10] (sOut) to (tIn);
			\node[anchor = north] at ($(sOut)!0.5!(tIn)$) [below=0.45cm, scale=0.85] {$1$};
			\draw[arc] (sOut) -> (aIn) node [edgelabel, above] {$1$};
			\draw[arc] (aOut) -> (bIn) node [edgelabel, below] {$1$};
			\draw[arc] (bOut) -> (cIn) node [edgelabel, below] {$1$};
			\draw[arc] (cOut) -> (tIn) node [edgelabel, above] {$1$};
			\draw[arc] (aOut) -> (cIn) node [edgelabel, above] {$1$};
		\end{scope}
	\end{tikzpicture}
	\caption{Illustration of the transformation given in the proof of \cref{thm:mcutw}.
		Note that in this case, $m =6$, so the arcs between vertices $v_{in}$ and $v_{out}$ have weight $7$.}
	\label{fig:reduction}
\end{figure}

We observe that the size of $G'$ is linear in the size of $G$, and then prove that if $G'$ is obtained from applying the above transformation to a series parallel digraph, then $G'$ is itself an SPD.
\begin{nestedobservation}\label{obs:mcw:size}
	Let $G$ and $G'$ be as above. Then, $n' \defeq \card{V(G')} \le 2\card{V(G)}$ and $\card{A(G')} \le \card{A(G)} + \card{V(G)}$.
\end{nestedobservation}
\begin{nestedclaim}\label{claim:mcw:spd}
	If $G$ is a series parallel digraph, then $G'$ as constructed above is an SPD.
\end{nestedclaim}
\begin{claimproof}
	We prove the claim by induction on $n$, the number of vertices of $G$. For the base case when $n = 2$, we have that $G$ is a single arc in which case $G'$ is a single arc as well. 
	Now suppose $n > 2$.
	Since $n > 2$, $G$ is obtained from two series parallel digraphs $G_1$ and $G_2$ via series or parallel composition. Since $G$ has no parallel arcs, we can use the induction hypothesis to conclude that the graphs $G_1'$ and $G_2'$ obtained via our construction are series parallel. Now, if $G = G_1 \parallelcomp G_2$, then it is immediate that $G'$ is series parallel. If $G = G_1 \seriescomp G_2$, then we have that in $G'$, the vertex that was constructed since $t_1$ and $s_2$ were identified, call this vertex $x$, got split into two vertices $x_{in}$ and $x_{out}$ with a directed arc of weight $m+1$ pointing from $x_{in}$ to $x_{out}$. Call the series parallel digraph consisting only of this arc $(X, (x_{in}, x_{out}))$. We now have that $G' = G_1' \seriescomp X \seriescomp G_2'$, so $G'$ is series parallel in this case as well.
\end{claimproof}

We are now ready to prove the correctness of this transformation. To do so, we will assume that we are given an integer $k$ and we want to decide whether the modified cutwidth of $G$ is at most $k$.
\begin{nestedclaim}\label{claim:mcw:if}
	If $G$ has modified cutwidth at most $k$, then $G'$ has weighted cutwidth at most $m + k + 1$.
\end{nestedclaim}
\begin{claimproof}
	Take a topological ordering $\linord$ of $G$ such that $\modifiedcutwidth(\linord) \le k$. We obtain $\linord'$ from $\linord$ by replacing each vertex $v \in V(G) \setminus \{s, t\}$ by $v_{in}$ followed directly by $v_{out}$. Clearly, this is a topological order of $G'$. We show that the weighted cutwidth of this ordering is at most $m + k + 1$. 
	
	Let $i \in [n'-1]$ and consider the cut between position $i$ and $i+1$ in $\linord'$. We have to consider two cases. In the first case, there is some $v \in V(G)$ such that $\linord'^{-1}(i) = v_{in}$ and $\linord'^{-1}(i+1) = v_{out}$. Then, there is an arc of weight $m+1$ from $v_{in}$ to $v_{out}$ crossing this cut, and some other arcs of the form $(u_{out}, w_{in})$ for some arc $(u, w) \in A(G)$. All these arcs cross position $\linord(v)$ in $\linord$, so since $\modifiedcutwidth(\linord) \le k$, there are at most $k$ of them. Furthermore, for each such arc we have that $\weight((u_{out}, w_{in})) = 1$ by construction, so the total weight of this cut is at most $m + k + 1$.
	
	In the second case, we have that $\linord'^{-1}(i) = v_{out}$ and $\linord'^{-1}(i+1) = w_{in}$ for some $v, w \in V(G)$, $v \neq w$. By construction, we have that $\linord(w) = \linord(v) + 1$. Hence, any arc crossing the cut between $i$ and $i+1$ in $\linord'$ is of one of the following forms.
	\begin{enumerate}
		\item It is $(x_{out}, y_{in})$ for some $(x, y) \in A(G)$ with $\linord(x) < \linord(v)$ and $\linord(y) > \linord(v)$, or
		\item it is $(x_{out}, y_{in})$ for some $(x, y) \in A(G)$ with $\linord(x) < \linord(w)$ and $\linord(y) > \linord(w)$, or
		\item it is $(v_{out}, w_{in})$.
	\end{enumerate}
	Since $\modifiedcutwidth(G) \le k$, there are at most $k$ arcs of the first and second type, and since $G$ has no parallel arcs, there is at most one arc of the third type. By construction, all these arcs have weight one, so the total weight of this cut is $2k + 1 \le m + k + 1$.
\end{claimproof}

\begin{nestedclaim}\label{claim:mcw:onlyif}
	If $G'$ has weighted cutwidth at most $m + k + 1$, then $G$ has modified cutwidth at most $k$.
\end{nestedclaim}
\begin{claimproof}
	Let $\linord'$ be a topological order of $G'$ such that $\weightedcutwidth(\linord', \weight) \le m + k + 1$. First, we claim that for all $v \in V(G) \setminus \{s, t\}$, we have that $\linord'(v_{out}) = \linord'(v_{in}) + 1$. Suppose not, for some vertex $v$. If we have that $\linord'(v_{in}) < \linord'(w_{in}) < \linord'(v_{out})$ for some $w \in V(G) \setminus \{s, t\}$ and $w \neq v$, then the cut between $\linord'(w_{in})$ and $\linord'(w_{in}) + 1$ has weight at least $2m + 2$: the two arcs $(v_{in}, v_{out})$ and $(w_{in}, w_{out})$ cross this cut, and they are of weight $m + 1$ each. Similarly, if $\linord'(v_{in}) < \linord'(w_{out}) < \linord'(v_{out})$, then the cut between $\linord'(w_{out}) - 1$ and $\linord'(w_{out})$ has weight at least $2m + 2$. Since $2m + 2 > m + k + 1$, we have a contradiction in both cases.
	
	We define a linear ordering $\linord$ of $G$ as follows. We let $\linord(s) \defeq 1$, $\linord(t) \defeq n$, and for all $v, w \in V(G) \setminus \{s, t\}$, we have $\linord(v) < \linord(w)$ if and only if $\linord'(v_{in}) < \linord'(w_{in})$. It is clear that $\linord$ is a topological ordering of $G$; we show that $\linord$ has modified cutwidth at most $k$. Consider an arc $(x, y)$ that crosses a vertex $v$ in $\linord$, i.e.\ we have that $\linord(x) < \linord(v) < \linord(y)$. We have just argued that $\linord'(v_{out}) = \linord'(v_{in}) + 1$, so we have that the arc $(x_{out}, y_{in})$ crosses the cut between $v_{in}$ and $v_{out}$ in $\linord'$. Recall that there is an arc of weight $m + 1$ from $v_{in}$ to $v_{out}$, so since $\weightedcutwidth(\linord', \weight) \le m + k + 1$, we can conclude that in $\linord$, there are at most $(m + k + 1) - (m - 1) = k$ arcs crossing the vertex $v$ in $\linord$.
\end{claimproof}
Now, to compute the modified cutwidth of $G$, we run the above described transformation to obtain $(G', \weight)$, and compute a topological order that gives the smallest weighted cutwidth of $(G', \weight)$ using \cref{cor:weighted:cutwidth}. We can then follow the argument given in the proof of \cref{claim:mcw:onlyif} to obtain a topological order for $G$ that gives the smalles modified cutwidth of $G$.

By \cref{claim:mcw:spd}, $G'$ is an SPD, so we can indeed apply the algorithm of \cref{cor:weighted:cutwidth} to solve the instance $(G', \weight)$. Correctness follows from \cref{claim:mcw:if,claim:mcw:onlyif}. By \cref{obs:mcw:size}, $\card{V(G')} = \cO(\card{V(G)}) = \cO(n)$, and clearly, $(G', \weight)$ can be constructed in time $\cO(\card{V(G)} + \card{A(G)})$; so the overall runtime of this procedure is at most $\cO(n^2)$.
\end{proof}

\section{Conclusions}
\label{section:conclusions}
In this paper, we obtained a new technical insight in a now over a quarter century old technique, namely the use of typical sequences. 
The insight led to new polynomial time algorithms. 
Since its inception, algorithms based on typical sequences give the best
asymptotic bounds for linear time FPT algorithms for treewidth and pathwidth, as
functions of the target parameter. It still remains a challenge to improve
upon these bounds ($2^{O(pw^2)}$, respectively $2^{O(tw^3)}$), or give
non-trivial lower bounds for parameterized pathwidth or treewidth. Possibly, the Merge Dominator Lemma can be helpful to get some progress here.

As other open problems, we ask whether there are other width parameters
for which the Merge Dominator Lemma implies polynomial time or XP algorithms,
or whether such algorithms exist for other classes of graphs. 
For instance, 
for which width measures can we give XP algorithms when parameterized by
the treewidth of the input graph?

\bibliographystyle{plain}
\bibliography{ref}

\end{document}